\documentclass[12pt,twoside]{article}
\usepackage[a4paper, left=1in, right=1in, top=1in, bottom=1in]{geometry}
\usepackage{amsmath, amsthm, amssymb, amsfonts}
\usepackage{tabulary}
\usepackage{bbm}
\usepackage{multirow}

\usepackage{amsmath, amsthm, amssymb, amsfonts}
\usepackage{tabulary}
\usepackage{bbm}
\usepackage{multirow}

\usepackage[dvipsnames]{xcolor}
\usepackage{enumitem}
\usepackage{booktabs}
\usepackage{xcolor}
\usepackage{mathtools}
\usepackage{upgreek}
\usepackage{listings}
\usepackage{natbib}
\usepackage{array}
\usepackage{enumitem}

\usepackage{bm,rotating,multirow,dsfont,graphicx}

\newtheorem{theorem}{Theorem}[section]
\newtheorem{lemma}[theorem]{Lemma}

\usepackage{enumitem}

\newtheorem{definition}{Definition}

\usepackage{algorithm} 
\usepackage{algpseudocode} 
\algtext*{EndFor}  
\usepackage{caption}
\usepackage{subcaption}

\renewcommand{\thealgorithm}{\arabic{algorithm}}

\newcommand{\M}{\mathcal{M}}
\newcommand{\by}{\boldsymbol{y}}

\newcommand{\bdt}{\boldsymbol{\theta}}

\newcommand{\commentt}[1]{}

\def\1{{\mathcal 1}}

\usepackage{hyperref}
\hypersetup{citecolor=blue}

\title{\textbf{Regularized Small Area Estimation  with  Graph  Laplacian Benchmarking  priors}}

\author{
\textbf{Jairo F\'uquene-Pati\~no}\thanks{\texttt{E-mail: jafuquenepatino@ucdavis.edu.}}, 
\textbf{and}
\textbf{Brenda Betancourt} \thanks{\texttt{E-mail: bbetanc2@gmu.edu.}} \\[0.4em]
\small
$^{*}$ Department of Statistics and Data Science, University of California, Davis, CA\\
\small $^{\dagger}$ Department of Statistics, George Mason University, Fairfax, VA}

\providecommand{\keywords}[1]{
  \small	
  \textbf{\textit{Keywords---}} #1
}

\begin{document}

\pagestyle{myheadings}
\markboth{
\small\itshape \textbf{Jairo F\'uquene-Pati\~no and Brenda Betancourt}
}{
\small\itshape \textbf{Regularized SAE with  Benchmarking priors}
}

    \maketitle
    
    \date{}


\begin{abstract}

Small area estimation (SAE) often requires both borrowing information
across areas and benchmarking estimates to reliable aggregates. We develop
a Bayesian framework that addresses these two objectives jointly through a
new family of Benchmarking priors. The priors are induced by a benchmark-constrained 
regularization problem. The resulting family includes a Benchmarking Prior that incorporates the benchmarking restrictions without additional regularization across areas, and Single and Multi-View Laplacian Benchmarking Priors that introduce regularization through graph Laplacians constructed from area similarities based on external covariate information. For posterior computation under these degenerate priors, we
develop tailored MCMC algorithms based on a reduced parameterization of the
constraint space. We assess the proposed models using a data-based simulation and apply the framework to estimate Average Household Size (AHS) at the municipality level in Colombia in 2025. In this application, graph-based
regularization improves model performance, with the
Multi-View models generally producing more precise municipality-level
estimates.

\end{abstract}


\keywords{small area estimation, regularization, benchmarking priors, graph Laplacians, multi-view learning, average household size.}

\section{Introduction}\label{sec:introduction}


In many official statistics applications, survey sample sizes are too small to support reliable direct estimates for subnational areas, such as counties, municipalities, and provinces, or for demographic subpopulations, such as gender and age groups. Small area estimation (SAE) addresses this problem by combining direct survey estimates with auxiliary information to borrow strength across areas. The Fay-Herriot (FH) model \citep{fay_1979} provides a widely used area-level framework for this purpose. 

Consider $m$ small areas, and let $y_i$ be a direct survey estimator for the small area mean parameter $\theta_i$ in the $i$-th area for $i=1,...,m$. The FH model considers sampling and linking models for the direct estimates and parameter of interest, respectively.
The sampling model for the direct estimates can be written as follows
\begin{align}
\mathbf{y}
    &=\boldsymbol{\theta}+\boldsymbol{\varepsilon},
    \label{FH_model1}\\
\boldsymbol{\varepsilon}
    &\sim
    N_m(\mathbf{0},\mathbf{D}),
    \notag
\end{align}

where $\mathbf{y}=(y_1,\ldots,y_m)^\top$, $\boldsymbol{\theta}=(\theta_1,\ldots,\theta_m)^\top$, and  $\mathbf{D}
=\operatorname{diag}(D_1,\ldots,D_m)$ for sampling variances $D_1,\ldots,D_m$ assumed known.

The linking model is
\begin{align}
\boldsymbol{\theta}
    &=\mathbf{X}\boldsymbol{\beta}+\mathbf{u},
    \label{FH_model2}\\
\mathbf{u}
    &\sim
    N_m(\mathbf{0},\sigma_u^2\mathbf{I}_m),
    \notag
\end{align}
where $\mathbf{X}=(\mathbf{x}_1,\ldots,\mathbf{x}_m)^\top$ is the $m\times p$ design matrix , $\boldsymbol{\beta}=(\beta_1,...,\beta_p)^\top \in\mathbb R^p$ is the corresponding vector of regression coefficients, and $\mathbf{u}=(u_1,\ldots,u_m)^\top$ is a vector of area-level random effects assumed to be independent of the sampling errors $\boldsymbol{\varepsilon}$. For each area, $\mathbf{x}_i=(x_{i1},...,x_{ip})^\top\in \mathbb{R}^p$ contains auxiliary covariates that are treated as known and measured without error and are commonly obtained from administrative records, census data, or other external sources. In practice, the sampling variances $D_i$ are generally estimated from the survey data and subsequently treated as known in fitting the FH model.

Bayesian and frequentist extensions of the FH framework are routinely used by National Statistical Offices (NSOs) to produce estimates of key indicators for small geographic and demographic domains. These models borrow strength across areas by combining direct survey estimates with auxiliary information from administrative records, population censuses, and other data sources. The hierarchical structure of the FH model shrinks noisy direct estimates toward values predicted by the common regression structure. In this sense, classical SAE models implicitly regularize area-level estimates through hierarchical modeling assumptions. A central challenge is how to specify this structure so that information is shared across areas in a way that reflects meaningful similarities and heterogeneity among them. A large body of work has addressed this problem by introducing structured dependence through geographic or administrative relationships, leading to spatial and spatio-temporal SAE models in which neighboring or administratively related areas are assumed to have similar random effects \citep{ross2015bayesian, mercer2015space}. While geographic and administrative relationships provide a convenient basis for borrowing strength, they may not adequately capture similarities relevant to the outcome of interest. Neighboring areas can differ substantially in demographic or socioeconomic characteristics, while geographically distant areas may be substantively similar. Models that rely on fixed spatial structures may therefore smooth across heterogeneous areas or fail to borrow strength across non-neighboring but similar domains. This limitation is particularly relevant in modern SAE applications, where increasingly rich auxiliary information is available to characterize relationships among areas.

The role of regularization in SAE is closely related to the principle of borrowing strength across areas and connects naturally with developments in Bayesian hierarchical modeling and modern machine learning. Although graph and spectral methods are well established in statistics and machine learning, their integration into SAE remains limited. Recent work by \citet{fuquenefully} uses auxiliary information to define similarity relationships among areas through a weighted graph. Spectral clustering is applied to the weighted similarity graph to identify groups of similar areas \citep{von2007tutorial}, and the resulting cluster assignments are used to construct a graph Laplacian \citep{belkin2006manifold}. This Laplacian is incorporated into the regularization of the small area parameters using an objective function motivated by Regularized Task Kernel Learning (RTKL) and Laplacian Regularized Least Squares (LapRLS) \citep{evgeniou2005learning,miller2023additive}. The resulting regularized parameters are then incorporated into a Bayesian SAE model, allowing information to be shared according to the learned cluster structure. In many official statistics applications, small area estimates must also satisfy benchmarking restrictions that ensure consistency with reliable aggregates available at higher levels of aggregation \citep{ghosh2015benchmarked}. Thus, regularization and benchmarking often arise as related components of SAE. 

Benchmarking has traditionally been imposed as an adjustment to small area estimates, although fully Bayesian approaches have been developed to incorporate benchmarking into the inferential framework through modifications to the likelihood or posterior distribution \citep{zhang2020fully,Okonek2024,sugasawa2024bayesian}. The framework of \citet{fuquenefully} provides a concrete example in which both regularization and benchmarking are considered within the same analysis. However, the proposed SAE models rely on a discrete cluster structure identified through spectral clustering, and benchmarking is imposed as a post-processing step through posterior projection. 

In this work, we develop a general framework for regularized SAE with
Benchmarking priors that integrates similarity and benchmarking structures
within the model. First, we incorporate pairwise similarities derived from
external covariates directly through weighted graph Laplacians. In contrast
to the discrete cluster structure of \citet{fuquenefully}, this formulation
preserves graded relationships among areas and extends the use of
regularization to SAE applications in which similarities may not be well
represented by discrete clusters. Second, we incorporate
benchmarking restrictions directly into the regularization process. Starting from a constrained regularization problem, we define a transformation of the small area parameters that incorporates the regularization structure while satisfying the benchmarking restrictions. This transformation induces a new family of Benchmarking priors that are degenerate distributions supported on a lower-dimensional constraint space. To the best of our knowledge, this is the first Bayesian SAE approach in which graph-based regularization and benchmarking restrictions are jointly incorporated into the prior distribution of the small area parameters. 

To accommodate the degeneracy of these priors, we develop a reduced
parameterization and tailored MCMC algorithms that ensure the benchmarking
restrictions are satisfied at every posterior draw. In addition to the
methodological and computational contributions, we present a novel
application for the estimation of average household size in Colombia using
a combination of survey, census, and external socioeconomic information. The remainder of the paper is organized as follows. Section \ref{sec_intro} develops the general framework for regularized SAE under benchmarking constraints and introduces the proposed family of Benchmarking priors, including specifications based on single and multiple graph Laplacians. Section \ref{sec_computations} develops the computational methods for the proposed Benchmarking priors and corresponding SAE models. Specifically, in Section \ref{computations} we draw on existing computational methods for Gaussian Markov random fields under constraints to develop the MCMC algorithms presented in Section \ref{MCMC_algorithms}. Algorithms for constructing single and multiple similarity graph Laplacians are presented in Section \ref{Laplacian_computations}. Section \ref{sec_Application} presents the application to the estimation of Average Household Size (AHS) across municipalities in Colombia in 2025. Concluding remarks are provided in Section  \ref{sec_Conclusions}. Additional derivations and a simulation study are provided in the Supplementary Material.

\section{Bayesian SAE with Regularization and Benchmarking priors} \label{sec_intro}

In official statistics, small area estimates are rarely used in isolation. They are often required to satisfy benchmarking constraints that ensure coherence with known aggregates such as national or regional totals \citep{ghosh2015benchmarked}. In classical SAE practice, benchmarking is often imposed as a post-hoc adjustment to point estimates with uncertainty measures modified only heuristically, if at all. This separation confounds how benchmarking alters borrowing of strength across areas, and affects posterior dependence and uncertainty quantification. These considerations motivate approaches that incorporate benchmarking directly into the inferential framework. In particular, we consider constrained regularization as a natural mechanism for achieving this integration. 

Our starting point is the regularization framework of \citet{fuquenefully}, which combines regularization
of the small area parameters with linear benchmarking restrictions. We first
generalize the corresponding constrained optimization problem to accommodate a
broader class of regularization structures. We then use its solution as a
transformation of the latent small area parameters and derive the probability
distribution induced by this transformation. This provides the basis for the
Benchmarking priors introduced in Section~\ref{subsec:The_priors}.


\subsection{Constrained quadratic optimization via the KKT conditions}

Following the constrained regularization framework of \citet{fuquenefully}, let $\boldsymbol{\theta}$ denote a baseline vector of small area parameters and
let $\boldsymbol{\theta}^{\M}$ denote its regularized counterpart under model
$\M$. Consider $k$ linear benchmarking restrictions written as
\begin{equation}
    \boldsymbol{W}\boldsymbol{\theta}^{\M}=\boldsymbol{p},
    \label{benchmark_constraint}
\end{equation}
where $\boldsymbol{W}\in\mathbb{R}^{k\times m}$ is a known matrix of
benchmarking weights with $\operatorname{rank}(\boldsymbol{W})=k\leq m$,
and $\boldsymbol{p}\in\mathbb{R}^{k}$ contains the corresponding known
benchmark values.
In practice, these quantities may be obtained from
external sources such as administrative records, population censuses, or
surveys \citep{bell_2013}.

The regularization is introduced through the
quadratic term \((\bdt^{\M})^\textsf{T}\boldsymbol{\Phi}\bdt^{\M}\), where \(\boldsymbol{\Phi}\) determines the structure of the regularization across areas. In \citet{fuquenefully}, 
the regularization matrix, \(\boldsymbol{\Phi}\), is a cluster-based graph Laplacian constructed from groups obtained through spectral clustering. In that case, the penalty encourages small areas assigned to the same cluster to have similar parameter values. 
Proposition~3.1 of \citet{fuquenefully} derives the constrained solution for this particular regularization structure. The resulting transformation is subsequently applied to posterior draws to obtain benchmarked estimates. Here, we extend the constrained optimization result to a general symmetric positive semidefinite regularization matrix \(\boldsymbol{\Phi}\). This allows
a wider range of relationships among small areas to be represented through the
choice of $\boldsymbol{\Phi}$. We discuss specific choices of $\boldsymbol{\Phi}$ in Section \ref{subsec:The_priors}.

Combining the regularization penalty with the benchmarking restrictions in \eqref{benchmark_constraint}, we consider the constrained quadratic optimization problem
\begin{equation} \label{eq:constraint}
\bdt^{\M\textsf{-B}}
=
\underset{\boldsymbol{\theta}^{\M}\in\mathbb{R}^{m}}
{\operatorname{argmin}}
\left\{
\phi(\boldsymbol{\theta}-\boldsymbol{\theta}^{\M})^\textsf{T}
(\boldsymbol{\theta}-\boldsymbol{\theta}^{\M}) +  (1-\phi)(\bdt^{\M})^\textsf{T}\boldsymbol{\Phi}(\bdt^{\M})
\right\},
\end{equation}
subject to
\begin{equation}
\boldsymbol{W}\boldsymbol{\theta}^{\M}=\boldsymbol{p}. \notag
\end{equation}
Assume that
\[
\qquad
\boldsymbol{\Phi}=\boldsymbol{\Phi}^\top \succeq 0,
\qquad
\boldsymbol{W}\in\mathbb{R}^{k\times m},
\qquad
\operatorname{rank}(\boldsymbol{W})=k\leq m.
\]
The first term penalizes deviations of \(\boldsymbol{\theta}^{\M}\) from \(\boldsymbol{\theta}\), while the second imposes the regularization structure determined by \(\boldsymbol{\Phi}\). The parameter \(\phi\in(0,1]\) controls the relative contribution of these two terms.

Because the objective function is convex and the benchmarking constraints are linear, the Karush-Kuhn-Tucker (KKT) conditions characterize the optimal solution \citep{karush1939minima,kuhn1951nonlinear}. In this case, the KKT conditions yield a system of linear equations that jointly determines the constrained small area parameters and the Lagrange multipliers associated with the benchmarking restrictions. The KKT system for the constrained quadratic optimization problem is

\begin{equation}
\begin{pmatrix}
2\phi\boldsymbol{Q}_{\phi} & \boldsymbol{W}^\top\\
\boldsymbol{W} & \boldsymbol{0}
\end{pmatrix}
\begin{pmatrix}
\bdt^{\M\textsf{-B}}\\
\boldsymbol{\lambda}
\end{pmatrix}
=
\begin{pmatrix}
2\phi\boldsymbol{\theta}\\
\boldsymbol{p}
\end{pmatrix}.
\label{eq:kkt-system}
\end{equation}
where $\boldsymbol{\lambda}\in\mathbb{R}^{k}$ is the vector of Lagrange multipliers and 
$\boldsymbol{Q}_{\phi}
=
\boldsymbol{I}_m+
\frac{1-\phi}{\phi}\boldsymbol{\Phi}.$
Using the inverse of the $2\times2$ block matrix in
\eqref{eq:kkt-system}, we obtain
\begin{equation*}
\begin{pmatrix}
\bdt^{\M\textsf{-B}}\\
\boldsymbol{\lambda}
\end{pmatrix}
=
\begin{pmatrix}
(2\phi\boldsymbol{Q}_{\phi})^{-1}
+
(2\phi\boldsymbol{Q}_{\phi})^{-1}
\boldsymbol{W}^\top
\boldsymbol{Z}
\boldsymbol{W}
(2\phi\boldsymbol{Q}_{\phi})^{-1}
&
-(2\phi\boldsymbol{Q}_{\phi})^{-1}
\boldsymbol{W}^\top
\boldsymbol{Z}
\\[1ex]
-\boldsymbol{Z}
\boldsymbol{W}
(2\phi\boldsymbol{Q}_{\phi})^{-1}
&
\boldsymbol{Z}
\end{pmatrix}
\begin{pmatrix}
2\phi\boldsymbol{\theta}\\
\boldsymbol{p}
\end{pmatrix}.
\label{eq:inverse-kkt}
\end{equation*}
where
$\boldsymbol{Z}
=
-
\left(
\boldsymbol{W}
(2\phi\boldsymbol{Q}_{\phi})^{-1}
\boldsymbol{W}^\top
\right)^{-1}$.  Therefore, the constrained solution is
\begin{equation}
\bdt^{\M\textsf{-B}}
=
\boldsymbol{Q}_{\phi}^{-1}
\boldsymbol{\theta}
+
\boldsymbol{Q}_{\phi}^{-1}
\boldsymbol{W}^\top
\left(
\boldsymbol{W}
\boldsymbol{Q}_{\phi}^{-1}
\boldsymbol{W}^\top
\right)^{-1}
\left(
\boldsymbol{p}
-
\boldsymbol{W}
\boldsymbol{Q}_{\phi}^{-1}
\boldsymbol{\theta}
\right).
\label{eq:final-solution}
\end{equation}

To interpret $\bdt^{\M\textsf{-B}}$, first consider the corresponding
unconstrained regularized solution,
\begin{equation}
\boldsymbol{\theta}_{\phi}^{\M}
=
\boldsymbol{Q}_{\phi}^{-1}\boldsymbol{\theta}.
\label{eq:unconstrained}
\end{equation}
Using \eqref{eq:unconstrained}, the constrained solution in
\eqref{eq:final-solution} can be written as
\begin{equation}
\bdt^{\M\textsf{-B}}
=
\boldsymbol{\theta}_{\phi}^{\M}
+
\boldsymbol{Q}_{\phi}^{-1}
\boldsymbol{W}^\top
\left(
\boldsymbol{W}
\boldsymbol{Q}_{\phi}^{-1}
\boldsymbol{W}^\top
\right)^{-1}
\left(
\boldsymbol{p}
-
\boldsymbol{W}\boldsymbol{\theta}_{\phi}^{\M}
\right).
\label{eq:projection-form}
\end{equation}

Therefore, the second term in \eqref{eq:projection-form} adjusts the unconstrained
regularized solution to satisfy the benchmarking constraints. More precisely,
$\bdt^{\M\textsf{-B}}$ is the
$\boldsymbol{Q}_{\phi}$-metric projection of the unconstrained
regularized solution $\boldsymbol{\theta}_{\phi}^{\M}$ onto the
constraint set,
$$\left\{\boldsymbol{\theta}_{\phi}^{\M}\in\mathbb{R}^m:
\boldsymbol{W}\boldsymbol{\theta}_{\phi}^{\M}=\boldsymbol{p}\right\}.
$$
In \citet{fuquenefully}, the constrained transformation is applied to posterior samples as a post-estimation adjustment to obtain benchmarked estimates. In contrast, we use the transformation in \eqref{eq:projection-form} to define the benchmarked small area parameters. This mapping induces a distribution for
$\bdt^{\M\textsf{-B}}$ supported on the benchmarking constraint set.
We use this induced distribution to define the Benchmarking priors in the next
section.

\subsection{Model construction under benchmarking constraints}
\label{subsec:The_FHSC}

We now consider the constrained regularization framework in the context of the Fay-Herriot (FH) model. Our goal is to incorporate the benchmarking constraints directly into the model for the small area parameters. For this purpose, we use the constrained parameter vector $\bdt^{\M\textsf{-B}}$ derived in \eqref{eq:final-solution} in place of the usual small area parameter vector in the FH sampling model.

For this model construction, it is convenient to rewrite
\eqref{eq:final-solution} as
\begin{align}
\label{benchmark_PP_solution}
\bdt^{\M\textsf{-B}}
&=
\left(
\boldsymbol{Q}_{\phi}^{-1}
+
\boldsymbol{Q}_{\phi}^{-1}
\boldsymbol{W}^{\mathsf T}
\boldsymbol{L}_{\phi}
\boldsymbol{W}
\boldsymbol{Q}_{\phi}^{-1}
\right)\boldsymbol{\theta}
-
\boldsymbol{Q}_{\phi}^{-1}
\boldsymbol{W}^{\mathsf T}
\boldsymbol{L}_{\phi}\boldsymbol{p},
\end{align}
where
\[
\boldsymbol{L}_{\phi}
=
-
\left(
\boldsymbol{W}
\boldsymbol{Q}_{\phi}^{-1}
\boldsymbol{W}^{\mathsf T}
\right)^{-1}.
\]
This representation expresses the constrained parameter vector in terms of the usual small area parameter vector $\boldsymbol{\theta}$. We use $\boldsymbol{\theta}^{\M\textsf{-B}}$ as the small area parameter in the Fay-Herriot sampling model, leading to the benchmarked area-level model in Definition \ref{def:FH_B}.

 \begin{definition}
 \label{def:FH_B} 
The FH  model  with Benchmarking (FH-B) can be written as follows:
\begin{align}
\label{model_FHB_m}
\begin{split}
    \by & =  \bdt^{\textsf{FH-B}} + \boldsymbol{e},\\
    \bdt^{\textsf{FH-B}}  & = \boldsymbol{A}_{\phi}\bdt + \boldsymbol{b}_{\phi},\\
\bdt &= \mathbf{X}\boldsymbol{\beta}+\mathbf{u} ,\\
\end{split}
\end{align} 
 where $\boldsymbol{A}_{\phi} = \boldsymbol{Q}_{\phi}^{-1}+ \boldsymbol{Q}_{\phi}^{-1}
\boldsymbol{W}^T\boldsymbol{L}_{\phi}\boldsymbol{W} \boldsymbol{Q}_{\phi}^{-1}$, $\boldsymbol{Q}_{\phi}
=
\boldsymbol{I}_m
+
\frac{1-\phi}{\phi}\boldsymbol{\Phi}$, 
 $\boldsymbol{b}_{\phi}=-  \boldsymbol{Q}_{\phi}^{-1}\boldsymbol{W}^T\boldsymbol{L}_{\phi}\boldsymbol{p}$
  with $\boldsymbol{A}_{\phi}  \in \mathbb{R}^{m \times m}$ and  $\boldsymbol{L}_{\phi}=-\left(
\boldsymbol{W}
\boldsymbol{Q}_{\phi}^{-1}
\boldsymbol{W}^\top
\right)^{-1}$.  The variables $\mathbf{X}$, $\mathbf{u}$ and $\boldsymbol{e}$ are as in  (\ref{FH_model1}) and (\ref{FH_model2}) and satisfy the same assumptions.
\end{definition}

Definition \ref{def:FH_B} changes where benchmarking enters the Fay-Herriot hierarchy. The parameter \(\bdt^{\textsf{FH-B}}\) is constrained to satisfy the benchmarking restrictions. Consequently, the benchmarking restriction becomes a property of the distribution assigned to the small area parameters. The following lemma characterizes this distribution and provides the basis for the proposed family of \textit{Benchmarking priors}.

\begin{lemma}
\label{lemma1}
Definition \ref{def:FH_B} induces a family of \textit{Benchmarking priors} for \(\bdt^{\textsf{FH-B}}\), indexed by the choices of \(\boldsymbol{\Phi}\) and \(\phi\). Each member of this family follows a degenerate Normal distribution given by
\begin{align}
\label{eq:benchmarking_prior}
\bdt^{\textsf{FH-B}}
\mid \mathbf{X}, \boldsymbol{\beta}, \sigma_u^2, \phi
\sim
\mathcal{N}_{\mathcal S}^{\mathrm{deg}}
\left(
\boldsymbol{A}_{\phi}\mathbf{X}\boldsymbol{\beta}
+\boldsymbol{b}_{\phi},
\;
\sigma_u^2
\boldsymbol{A}_{\phi}\boldsymbol{A}_{\phi}^{\mathsf{T}}
\right),
\end{align}
with support
\[
\mathcal S
=
\left\{
\boldsymbol{\theta}\in\mathbb{R}^m:
\boldsymbol{W}\boldsymbol{\theta}=\boldsymbol{p}
\right\}.
\]
where $\boldsymbol{A}_{\phi}$, $\mathbf{X}$,
$\boldsymbol{\beta}$, $\boldsymbol{b}_{\phi}$, and $\sigma_u^2$
are as in Definition \ref{def:FH_B}. The Benchmarking prior has
the following two properties:

\begin{enumerate}[label=\roman*.]

\item The prior mean satisfies the external benchmarking constraints,
\[
\boldsymbol{W}(\boldsymbol{A}_{\phi}\mathbf{X}\boldsymbol{\beta}
+\boldsymbol{b}_{\phi})
=
\boldsymbol{p}.
\]

\item The variance-covariance matrix is singular.

\end{enumerate}

\end{lemma}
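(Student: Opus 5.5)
The plan is to read the distribution of $\bdt^{\textsf{FH-B}}$ directly off the linear-Gaussian structure of Definition \ref{def:FH_B}. Conditional on $\mathbf{X},\boldsymbol{\beta},\sigma_u^2,\phi$, we have $\bdt=\mathbf{X}\boldsymbol{\beta}+\mathbf{u}$ with $\mathbf{u}\sim N_m(\mathbf{0},\sigma_u^2\mathbf{I}_m)$, so $\bdt\sim N_m(\mathbf{X}\boldsymbol{\beta},\sigma_u^2\mathbf{I}_m)$. The map $\bdt\mapsto\boldsymbol{A}_{\phi}\bdt+\boldsymbol{b}_{\phi}$ is affine with deterministic coefficients, because $\boldsymbol{Q}_{\phi}$, $\boldsymbol{L}_{\phi}$, $\boldsymbol{W}$ and $\boldsymbol{p}$ do not depend on $\mathbf{u}$. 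The affine-image property of the multivariate normal therefore gives mean $\boldsymbol{A}_{\phi}\mathbf{X}\boldsymbol{\beta}+\boldsymbol{b}_{\phi}$ and covariance $\sigma_u^2\boldsymbol{A}_{\phi}\boldsymbol{A}_{\phi}^{\mathsf T}$.

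Before doing this, I will check that the construction is well defined. $\boldsymbol{Q}_{\phi}=\boldsymbol{I}_m+\frac{1-\phi}{\phi}\boldsymbol{\Phi}$ is symmetric positive definite, since $\boldsymbol{\Phi}\succeq 0$ and $\phi\in(0,1]$. Hence $\boldsymbol{Q}_{\phi}^{-1}$ is symmetric positive definite, and because $\operatorname{rank}(\boldsymbol{W})=k$, the matrix $\boldsymbol{W}\boldsymbol{Q}_{\phi}^{-1}\boldsymbol{W}^{\mathsf T}$ is invertible. This makes $\boldsymbol{L}_{\phi}$, $\boldsymbol{A}_{\phi}$ and $\boldsymbol{b}_{\phi}$ well defined. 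Symmetry of $\boldsymbol{Q}_{\phi}^{-1}$ and $\boldsymbol{L}_{\phi}$ also shows that $\boldsymbol{A}_{\phi}$ is symmetric, although this is not needed.

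The key algebraic identities are $\boldsymbol{W}\boldsymbol{A}_{\phi}=\mathbf{0}$ and $\boldsymbol{W}\boldsymbol{b}_{\phi}=\boldsymbol{p}$. For the first,
\[
\boldsymbol{W}\boldsymbol{A}_{\phi}=\boldsymbol{W}\boldsymbol{Q}_{\phi}^{-1}+\boldsymbol{W}\boldsymbol{Q}_{\phi}^{-1}\boldsymbol{W}^{\mathsf T}\boldsymbol{L}_{\phi}\boldsymbol{W}\boldsymbol{Q}_{\phi}^{-1}=\boldsymbol{W}\boldsymbol{Q}_{\phi}^{-1}-\boldsymbol{W}\boldsymbol{Q}_{\phi}^{-1}=\mathbf{0},
\]
using $\boldsymbol{W}\boldsymbol{Q}_{\phi}^{-1}\boldsymbol{W}^{\mathsf T}\boldsymbol{L}_{\phi}=-\boldsymbol{I}_k$. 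For the second,
\[
\boldsymbol{W}\boldsymbol{b}_{\phi}=-\boldsymbol{W}\boldsymbol{Q}_{\phi}^{-1}\boldsymbol{W}^{\mathsf T}\boldsymbol{L}_{\phi}\boldsymbol{p}=\boldsymbol{p}.
\]
These give, for every realization of $\bdt$, $\boldsymbol{W}\bdt^{\textsf{FH-B}}=\boldsymbol{W}\boldsymbol{A}_{\phi}\bdt+\boldsymbol{W}\boldsymbol{b}_{\phi}=\boldsymbol{p}$. So the distribution is supported on $\mathcal S$, which matches the interpretation of \eqref{eq:projection-form} as a projection onto the constraint set. Property (i) follows by taking $\bdt=\mathbf{X}\boldsymbol{\beta}$, which yields $\boldsymbol{W}(\boldsymbol{A}_{\phi}\mathbf{X}\boldsymbol{\beta}+\boldsymbol{b}_{\phi})=\boldsymbol{p}$.

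Property (ii), singularity of the covariance, comes from $\boldsymbol{W}\boldsymbol{A}_{\phi}\boldsymbol{A}_{\phi}^{\mathsf T}=\mathbf{0}$. Every row of $\boldsymbol{W}$, and there are $k\ge 1$ linearly independent ones, lies in the left null space of $\boldsymbol{A}_{\phi}\boldsymbol{A}_{\phi}^{\mathsf T}$. Hence $\operatorname{rank}(\sigma_u^2\boldsymbol{A}_{\phi}\boldsymbol{A}_{\phi}^{\mathsf T})\le m-k<m$, and the distribution has no Lebesgue density on $\mathbb{R}^m$. This justifies the label $\mathcal N^{\mathrm{deg}}_{\mathcal S}$.

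The only delicate point, and the main obstacle, is the precise meaning of ``degenerate Normal with support $\mathcal S$''. I would state that the law is supported on the affine subspace $\mathcal S$ and is Gaussian with respect to Lebesgue measure on $\boldsymbol{b}_{\phi}+\operatorname{col}(\boldsymbol{A}_{\phi})$. One can show that this column space equals $\ker\boldsymbol{W}$, so the rank is exactly $m-k$. To see this, note that $\boldsymbol{A}_{\phi}\boldsymbol{Q}_{\phi}$ acts as the identity on $\ker\boldsymbol{W}$: for $\boldsymbol{v}\in\ker\boldsymbol{W}$, $\boldsymbol{A}_{\phi}\boldsymbol{Q}_{\phi}\boldsymbol{v}=\boldsymbol{v}+\boldsymbol{Q}_{\phi}^{-1}\boldsymbol{W}^{\mathsf T}\boldsymbol{L}_{\phi}\boldsymbol{W}\boldsymbol{v}=\boldsymbol{v}$. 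So $\operatorname{col}(\boldsymbol{A}_{\phi})\supseteq\ker\boldsymbol{W}$, and the reverse inclusion is the identity $\boldsymbol{W}\boldsymbol{A}_{\phi}=\mathbf{0}$ above. This refinement is not needed for the two stated properties, but it confirms that the support is all of $\mathcal S$.
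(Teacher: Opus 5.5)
Your proposal is correct and follows essentially the same route as the paper. It obtains the mean $\boldsymbol{A}_{\phi}\mathbf{X}\boldsymbol{\beta}+\boldsymbol{b}_{\phi}$ and covariance $\sigma_u^2\boldsymbol{A}_{\phi}\boldsymbol{A}_{\phi}^{\mathsf T}$ as the affine image of $N_m(\mathbf{X}\boldsymbol{\beta},\sigma_u^2\mathbf{I}_m)$, uses $\boldsymbol{W}\boldsymbol{A}_{\phi}=\mathbf{0}$ and $\boldsymbol{W}\boldsymbol{b}_{\phi}=\boldsymbol{p}$ for property (i), and applies rank--nullity to $\boldsymbol{W}\boldsymbol{A}_{\phi}\boldsymbol{A}_{\phi}^{\mathsf T}=\mathbf{0}$ for property (ii). You also check that $\boldsymbol{L}_{\phi}$ is well defined, that every realization lies in $\mathcal S$, and that $\operatorname{col}(\boldsymbol{A}_{\phi})=\ker\boldsymbol{W}$ so the rank is exactly $m-k$; these go beyond what the paper writes out and are correct.
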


 \begin{proof}
See Supplementary Material, Section \ref{theory}.
    \end{proof}

\subsection{Benchmarking prior specifications} \label{subsec:The_priors} 

We now describe how different benchmarking priors and corresponding SAE
models are obtained through different choices of $\boldsymbol{\Phi}$ and
$\phi$ in the FH-B model of Definition \ref{def:FH_B}.  The simplest case arises when no additional
structure among the small areas is incorporated beyond the auxiliary variables
in the FH model and the benchmarking constraints. Setting $\phi=1$ gives
$\boldsymbol{Q}_{\phi}=\boldsymbol{I}_m$. We refer to the resulting prior in
Lemma \ref{lemma1} as the \textit{Benchmarking Prior (BP)} with corresponding
 \textit{FH model with Benchmarking under the BP (FH-BP)}.

Other specifications of the proposed family incorporate additional
relationships among the small areas through the choice of
$\boldsymbol{\Phi}$. In particular, building on \citet{fuquenefully},
we consider regularization based on a single graph Laplacian and on
multiple graph Laplacians representing different sources of similarity
among the small areas. In these specifications, the Laplacian determines
the regularization structure and the benchmarking constraints determine
the support of the resulting prior. These choices lead to the Single
and Multi-View Laplacian Benchmarking priors. Table \ref{tab:models} summarizes 
the Fay-Herriot model specifications under the regularization structures considered in this section, without
and with benchmarking. We next describe in detail the construction of the Benchmarking priors and their corresponding Fay-Herriot models.

\begin{table}[htbp] \footnotesize
\renewcommand{\arraystretch}{1.7}
\setlength{\tabcolsep}{5pt}
\begin{tabular}{lccc}
\toprule
     \textbf{Fay-Herriot Model} & $\boldsymbol{Q}_{\phi}$ & $\phi$ & \textbf{Constraint}  \\    \toprule
Standard (FH)  & $\mathbb I_m$ & 1 & None \\
Laplacian Prior (FH-L) & $\mathbb I_m
+
\frac{1-\phi}{\phi}\textsf{L}_{C^{*}}^{\textsf{S}}$ & $(0,1]$ & None \\
Multi-View Laplacian Prior (FH-MVL) & $\mathbb I_m
+
\frac{1-\phi}{\phi}\textsf{L}_{C^{*}}^{\textsf{MV}}$ & $(0,1]$  &None\\
Benchmarking Prior (FH-BP) & $\mathbb I_m$ & 1&  $\boldsymbol{W}\boldsymbol{\theta}^{\M}=\boldsymbol{p}$ \\
Laplacian Benchmarking Prior (FH-L-BP) & $\mathbb I_m
+
\frac{1-\phi}{\phi}\textsf{L}_{C^{*}}^{\textsf{S}}$ & $(0,1]$ & $\boldsymbol{W}\boldsymbol{\theta}^{\M}=\boldsymbol{p}$  \\
Multi-View Laplacian Benchmarking Prior (FH-MVL-BP) & $\mathbb I_m
+
\frac{1-\phi}{\phi}\textsf{L}_{C^{*}}^{\textsf{MV}}$ & $(0,1]$ & $\boldsymbol{W}\boldsymbol{\theta}^{\M}=\boldsymbol{p}$  \\
\bottomrule
\end{tabular} 
\caption{\small Fay-Herriot model specifications under different regularization
structures, without and with benchmarking. $\textsf{L}_{C^{*}}^{S}$ and $\textsf{L}_{C^{*}}^{\textsf{MV}}$ denote
Single and Multi-View graph Laplacians, respectively.}
\label{tab:models}
\end{table}

\subsubsection{Laplacian Benchmarking Prior (L-BP)}
\label{subsec:The_priors2}

\citet{fuquenefully} uses auxiliary information and direct estimates to construct pairwise similarities among the small areas. 
These similarities are used to form a weighted graph for spectral clustering, and the resulting cluster assignments are then used to construct a block-diagonal graph Laplacian for regularizing the small area parameters \citep{belkin2006manifold, von2007tutorial}. Thus, the similarity information enters the regularization through the discrete cluster structure. In this work, we use the pairwise similarities to define the regularization structure directly, without an intermediate clustering step. Specifically, the similarities are used as edge weights in a connected neighborhood graph, and the symmetric normalized Laplacian of this graph is incorporated into the regularization of the small area parameters. This construction preserves differences in the strength of the pairwise similarities among connected areas, allowing the degree of borrowing across areas to reflect their similarity in the auxiliary information. This approach may be particularly useful when demographic, socioeconomic, or other area-level characteristics suggest varying degrees of similarity as opposed to a clear partition of the areas into distinct groups.

Specifically, we first compute the pairwise similarities among the areas
using the available auxiliary information and the direct estimates. We then
select the smallest neighborhood size $C^{*}$ that produces a connected
graph. Let $\textsf{W}_{C^{*}}^{\textsf{S}}$ denote the weighted adjacency
matrix of the resulting graph with edge weights determined by the pairwise
similarities, and let $\textsf{D}_{C^{*}}^{\textsf{S}}$ denote the
corresponding degree matrix. The symmetric normalized graph Laplacian is
given by
\[
\textsf{L}_{C^{*}}^{\textsf{S}}
=
\boldsymbol{I}_m- (\textsf{D}_{C^{*}}^{\textsf{S}})^{-1/2}\textsf{W}_{C^{*}}^{\textsf{S}} (\textsf{D}_{C^{*}}^{\textsf{S}})^{-1/2}.
\]
The corresponding unconstrained regularized solution in
\eqref{eq:unconstrained} can be written as follows:
\begin{align}
\label{LapRLS_2}
\boldsymbol{\theta}_{\phi}^{\M}
&=
(\boldsymbol{Q}_{\phi}^{\textsf{L}_{C^{*}}^{\textsf{S}}})^{-1}
\boldsymbol{\theta},
&
\boldsymbol{Q}_{\phi}^{\textsf{L}_{C^{*}}^{\textsf{S}}}
&=
\mathbb I_m
+
\frac{1-\phi}{\phi}\textsf{L}_{C^{*}}^{\textsf{S}},
\end{align}

Under this specification, the graph Laplacian regularizes the small area
parameters according to the similarity structure represented by the weighted
graph. The parameter $\phi$ controls the strength of the regularization with
smaller values increasing the contribution of the Laplacian penalty, and
values closer to one reducing its influence. Using
$\boldsymbol{Q}_{\phi}^{\textsf{L}_{C^{*}}^{\textsf{S}}}$ in Lemma
\ref{lemma1} and Definition \ref{def:FH_B} gives the
\textit{Laplacian Benchmarking Prior (L-BP)} and the corresponding
\textit{FH model with Benchmarking under the L-BP (FH-L-BP)}.

\subsubsection{Multi-View Laplacian Benchmarking Prior (MVL-BP)}
 \label{subsec:multi_view} 

The L-BP represents similarities among areas through a single normalized symmetric graph
Laplacian constructed from a given set of auxiliary information. In many
SAE applications, however, the available auxiliary information can be
organized into distinct views, each describing a different relationship
among the areas. For example, demographic characteristics, economic
conditions, administrative information, and geographic proximity may
produce different similarity structures. Constructing a single graph
from all of these variables does not distinguish the contributions of the
different views. Multi-view graph regularization provides an alternative by
constructing a separate graph Laplacian for each view and then combining
these Laplacians into a single Multi-View Laplacian \citep{kumar2011co}. 
This construction allows each view to contribute separately to the
combined Laplacian through its own similarity structure and associated
weight.

Suppose that $J$ views of the auxiliary information are available,
indexed by $j=1,\ldots,J$. Let $C^{*}$ denote the common neighborhood
size across the views, selected using the union of the view-specific
neighborhood graphs (see Section \ref{Laplacian_computations} for details).
For each view $j$, pairwise similarities are used to construct the
corresponding symmetric normalized Laplacian
$\textsf{L}_{C^{*},j}^{\textsf{S}}$. We define the Multi-View normalized
Laplacian as
\begin{equation}
\label{eq:MVL}
\textsf{L}_{C^{*}}^{\textsf{MV}}
=
\sum_{j=1}^{J}q_j
\textsf{L}_{C^{*},j}^{\textsf{S}},
\qquad
q_j\geq 0,
\qquad
\sum_{j=1}^{J}q_j=1,
\end{equation}
where $q_j$ determines the contribution of view $j$ to the combined
similarity structure.

Setting $\boldsymbol{\Phi}=\textsf{L}_{C^{*}}^{\textsf{MV}}$ in the
general regularization framework gives
\begin{align}
\label{eq:LapRLS_3}
\boldsymbol{\theta}_{\phi}^{\M}
&=
(\boldsymbol{Q}_{\phi}^{\textsf{L}_{C^{*}}^{\textsf{MV}}})^{-1}
\boldsymbol{\theta},
&
\boldsymbol{Q}_{\phi}^{\textsf{L}_{C^{*}}^{\textsf{MV}}}
&=
\boldsymbol{I}_m
+
\frac{1-\phi}{\phi}
\textsf{L}_{C^{*}}^{\textsf{MV}}.
\end{align}
The parameter $\phi$ controls the overall strength of the graph-based
regularization, while the weights $q_1,\ldots,q_{J}$ determine the
relative contribution of the different views to the similarity structure.

Using $\boldsymbol{Q}_{\phi}^{\textsf{L}_{C^{*}}^{\textsf{MV}}}$ in
Lemma \ref{lemma1} and Definition \ref{def:FH_B} gives the
\textit{Multi-View Laplacian Benchmarking Prior (MVL-BP)} and the
corresponding \textit{FH model with Benchmarking under the MVL-BP
(FH-MVL-BP)}. The MVL-BP extends the L-BP by allowing the regularization
structure to reflect multiple sources of auxiliary information while
preserving the same benchmarking restrictions on the small area parameters.

\section{Computation}
\label{sec_computations}

Posterior computation under the proposed Benchmarking priors requires 
accounting for both the linear benchmarking constraints and the graph-based 
regularization structure. In Section
\ref{computations}, we introduce a reduced parameterization that expresses the
Benchmarking priors in terms of an unconstrained $(m-k)$-dimensional latent
vector that facilitates posterior computation under the benchmarking constraints. 
Section \ref{MCMC_algorithms} presents the corresponding
MCMC algorithm for posterior sampling. Finally, Section
\ref{Laplacian_computations} describes the algorithms for constructing the
Single and Multi-View Normalized Graph Laplacians for the L-BP and MVL-BP,
respectively.

\subsection{Reduced parameterization}
\label{computations}

The Benchmarking priors introduced in Section \ref{subsec:The_priors} are 
degenerate Gaussian distributions supported on the constraint set defined by 
the benchmarking restrictions. This singular structure requires a different 
computational treatment from that used for the corresponding unconstrained 
Fay-Herriot models. To obtain a nonsingular representation of the Benchmarking 
prior, we express its stochastic component in the null space of the benchmarking
matrix $\boldsymbol W$. This follows the computational approach of
\citet{bolin2021efficient} for Gaussian distributions subject to linear
constraints. The resulting parameterization removes the singularity from 
posterior computation and guarantees that every posterior draw satisfies the 
benchmarking restrictions.

Specifically, define the kernel (or null space) of $\boldsymbol W$ as
\begin{equation}
\ker(\boldsymbol W)
=
\left\{
\boldsymbol z\in\mathbb R^m:
\boldsymbol W\boldsymbol z=\boldsymbol 0
\right\}.
\end{equation}
If $\operatorname{rank}(\boldsymbol W)=k$, then
$\dim \ker(\boldsymbol W)=m-k$. Let
$\boldsymbol B\in\mathbb R^{m\times(m-k)}$ have orthonormal columns
spanning $\ker(\boldsymbol W)$. Thus,
\[
\boldsymbol W\boldsymbol B=\boldsymbol 0,
\qquad
\boldsymbol B^\mathsf{T}\boldsymbol B=\boldsymbol I_{m-k}.
\]

Following \cite{bolin2021efficient},  the matrix $\boldsymbol B$ can be obtained from the singular value
decomposition of $\boldsymbol W$. Specifically, let
\begin{equation}\label{decomposition_1}
\boldsymbol W
=
\boldsymbol U
\begin{pmatrix}
\boldsymbol S & \boldsymbol 0
\end{pmatrix}
\boldsymbol V^{\mathsf T}, \;\;
\boldsymbol V
=
\begin{pmatrix}
\boldsymbol V_1 & \boldsymbol V_0
\end{pmatrix},
\end{equation}
where
\[
\bm U\in\mathbb{R}^{k\times k},\qquad
\bm S\in\mathbb{R}^{k\times k},\qquad
\bm V_1\in\mathbb{R}^{m\times k},\qquad
\bm V_0\in\mathbb{R}^{m\times(m-k)}.
\]
Here, $\bm U$ is an orthogonal matrix whose columns are the left
singular vectors of $\bm W$, $\bm S$ is a diagonal matrix containing the $k$
positive singular values of $\bm W$, and $\bm V=(\bm V_1\;\bm V_0)\in
\mathbb{R}^{m\times m}$ is an orthogonal matrix whose columns are
the right singular vectors of $\bm W$. In particular, $\bm V_1$ contains
the right singular vectors associated with the positive singular
values and $\bm V_0$ contains those associated with the zero
singular values. Consequently, the columns of $\bm V_0$ form an
orthonormal basis for $\ker(\boldsymbol W)$, so we can then take
$\boldsymbol B=\boldsymbol V_0$. It follows that  $\boldsymbol B \boldsymbol B^{\mathsf T}$ is the orthogonal projection matrix onto
$\ker(\boldsymbol W)$, with 
$$\boldsymbol B \boldsymbol B^{\mathsf T}=\bm I_m-\bm W^{\mathsf T}(\bm W \bm W^{\mathsf T})^{-1}\bm W.$$ 
For a given regularization matrix $\boldsymbol Q_\phi$, define
\begin{equation}
\boldsymbol K_\phi
=
\boldsymbol B^{\mathsf T}
\boldsymbol Q_\phi
\boldsymbol B.
\label{eq:Kphi}
\end{equation}
Since $\boldsymbol Q_\phi\succ 0$ and $\boldsymbol B$ has full column
rank, $\boldsymbol K_\phi\succ0$. The matrix
$\boldsymbol A_\phi$ in Lemma \ref{lemma1} can then be expressed as
\begin{equation}
\boldsymbol A_\phi
=
\boldsymbol B
\boldsymbol K_\phi^{-1}
\boldsymbol B^{\mathsf T}.
\label{eq:A-basis}
\end{equation}
This representation expresses the stochastic component of the
Benchmarking prior in the $(m-k)$-dimensional null space of
$\boldsymbol W$.

Using this representation, an equivalent reduced parameterization of
$\boldsymbol{\theta}^{\textsf{FH-B}}$ is
\begin{equation}
\boldsymbol{\theta}^{\textsf{FH-B}}
=
\boldsymbol A_\phi\boldsymbol X\boldsymbol\beta
+
\boldsymbol b_\phi
+
\boldsymbol B\boldsymbol\omega,
\label{eq:reduced}
\end{equation}
where $\boldsymbol\omega\in\mathbb R^{m-k}$ is the latent random effect
vector parameterizing variation within the null space of $\boldsymbol W$, with
\begin{equation}
\boldsymbol\omega
\mid
\boldsymbol X,\boldsymbol\beta,\sigma_u^2,\phi
\sim
\mathcal N_{m-k}
\left(
\boldsymbol 0,
\sigma_u^2\boldsymbol K_\phi^{-2}
\right).
\label{eq:reduced2}
\end{equation}
Indeed,
\[
\operatorname{Var}(\boldsymbol B\boldsymbol\omega)
=
\sigma_u^2
\boldsymbol B\boldsymbol K_\phi^{-2}\boldsymbol B^{\mathsf T}
=
\sigma_u^2
\boldsymbol A_\phi\boldsymbol A_\phi^{\mathsf T},
\]
so that \eqref{eq:reduced}--\eqref{eq:reduced2} recover the covariance
of the Benchmarking prior in Lemma \ref{lemma1}. 

Moreover, since 
$\boldsymbol W\boldsymbol A_\phi=\boldsymbol 0$,
$\boldsymbol W\boldsymbol b_\phi=\boldsymbol p$, and
$\boldsymbol W\boldsymbol B=\boldsymbol 0$,
the reduced parameterization also preserves the benchmarking restriction.
Thus, posterior computation can be carried out in terms of the
unconstrained $(m-k)$-dimensional vector $\boldsymbol\omega$ and the
benchmarking restriction is satisfied for every $\boldsymbol\omega$.

\subsection{Posterior computation}
\label{MCMC_algorithms}

To complete the Bayesian specification, we assign priors to
$\boldsymbol{\beta}$, $\sigma_u^2$, and $\phi$. We use an inverse-Gamma prior
\[
\sigma_u^2\sim
\operatorname{IG}(a_{\sigma_u^2},b_{\sigma_u^2})
\]
and a Beta prior
\[
\phi\sim\operatorname{Beta}(a_\phi,b_\phi).
\]
For the regression coefficients, we use the weakly informative prior
\[
\boldsymbol{\beta}
\mid \boldsymbol{\Omega}_{\boldsymbol{\beta}}
\sim
\mathcal{N}_p
\left(
\boldsymbol{0},
\boldsymbol{\Omega}_{\boldsymbol{\beta}}
\right),
\qquad
\boldsymbol{\Omega}_{\boldsymbol{\beta}}
=
\operatorname{diag}(\lambda_1^{-1},\ldots,\lambda_p^{-1}),
\]
where the scales $\lambda_j$ are assigned independent
Beta-Prime$(a_{\lambda_j},b_{\lambda_j})$ priors
(see, for instance, \citep{fuquene2025ultra}).
For posterior computation, we use the Gamma mixture representation
\[
\lambda_j\mid\rho_j
\sim\operatorname{Gamma}(a_{\lambda_j},\rho_j),
\qquad
\rho_j\sim\operatorname{Gamma}(b_{\lambda_j},1).
\]

The joint posterior distribution for the models in Table \ref{tab:models} using the equivalent reduced parameterization is given by

{
\begin{align} 
\label{eq:joint_d}
&p( \boldsymbol{\omega}, \boldsymbol{\beta}, \sigma_u^2, \phi \mid \by)
\propto
\exp
\left\{
-\frac12
\left(
\bm y-\bm b_{\phi}
-\bm A_{\phi}\bm X\bm\beta
-\bm B\bm \omega
\right)^T
\bm D^{-1}
\left(
\bm y-\bm b_{\phi}
-\bm A_{\phi}\bm X\bm\beta
-\bm B \boldsymbol{\omega}
\right)
\right\}
\nonumber\\
&\hspace{2cm}\quad\times
\mid  \boldsymbol{\Omega}_{\boldsymbol{\beta}} \mid^{-1/2}  \mid \bm K_{\phi} \mid (\sigma_u^2)^{-(m-k)/2} (\sigma_u^2)^{-(a_{\sigma_u^2}+1)}
\exp
\left\{
-\frac{1}{2}
\boldsymbol{\beta}^T
\boldsymbol{\Omega}_{\boldsymbol{\beta}}^{-1}
\boldsymbol{\beta}
\right\}
\nonumber\\
& \hspace{3cm} \times
\exp
\left\{
-\frac{1}{2\sigma_u^2}
\boldsymbol{\omega}^T
\bm K_{\phi} \bm K_{\phi}^{\top}
\boldsymbol{\omega}
\right\}
\exp\left(
-\frac{ b_{\sigma_u^2}}{\sigma_u^2}
\right)\pi(\phi)\pi(\boldsymbol{\Omega}_{\boldsymbol{\beta}}).
\end{align}
}

Algorithm \ref{mcmc:alg1} summarizes the MCMC procedure for posterior sampling
under the proposed models. The reduced representation in
\eqref{eq:reduced}--\eqref{eq:reduced2} yields Gibbs updates for
$\boldsymbol{\beta}$, $\sigma_u^2$, and $\boldsymbol{\omega}$.
The regularization parameter $\phi$ is updated using a random-walk
Metropolis step on the logit scale to ensure that $0<\phi<1$.
The proposal variance is adapted during burn-in and fixed thereafter.

By the reduced parameterization in Section \ref{computations}, every posterior
draw produced by Algorithm \ref{mcmc:alg1} satisfies the benchmarking
restriction. Unlike the posterior projection approach of
\citet{fuquenefully}, no additional benchmarking step is required after
posterior sampling. The main computational burden arises from computations associated
with the $(m-k)$-dimensional reduced system, particularly the factorization
required for sampling $\boldsymbol{\omega}$ and the evaluation of
$\boldsymbol K_\phi$ as $\phi$ is updated. The Single and Multi-View graph
Laplacians are constructed once, prior to MCMC (see Section \ref{Laplacian_computations}).

\begin{algorithm}[htbp]
 \small
\caption{MCMC algorithm under the proposed model with benchmarking priors}
\label{mcmc:alg1}
\begin{algorithmic}[1]

\setlength{\itemsep}{0.4em}
\State Construct $\boldsymbol{B}$, an orthonormal basis of
$\ker(\boldsymbol{W})$, and compute
$\boldsymbol{\Phi}_{\bm B}
=
\bm B^T\boldsymbol{\Phi}\bm B$.

\State Verify that
$\boldsymbol{W}\boldsymbol{A}_{\phi}=\boldsymbol{0}$ and
$\boldsymbol{W}\boldsymbol{b}_{\phi}=\boldsymbol{p}$.

\State Initialize $\boldsymbol{\beta}^{(0)}$,
$\boldsymbol{\omega}^{(0)}$,
$(\sigma_u^{2})^{(0)}$, $\boldsymbol{\Omega}_{\boldsymbol{\beta}}^{(0)}$, $\rho_j^{(0)}$ and $\phi^{(0)}$. Compute
\[
\bm K_{\phi^{(0)}}
=
\bm I_{m-k}
+
\frac{1-\phi^{(0)}}{\phi^{(0)}}
\boldsymbol{\Phi}_{\bm B},
\]
\[
\bm A_{\phi^{(0)}}
=
\bm B
\bm K_{\phi^{(0)}}^{-1}
\bm B^T,
\]
and $\boldsymbol{b}_{\phi^{(0)}}$.

\For{$s=1,\ldots,S$}

\State Draw
\[
\bm\beta^{(s)}
\sim
N_p\left(
\bm\mu_{\bm\beta^{(s)}},
\bm\Sigma_{\bm\beta^{(s)}}
\right),
\]
where
\begin{align*}
\bm\Sigma_{\bm\beta^{(s)}}
&=
\left(
\bm X^T
\bm A_{\phi^{(s-1)}}
\bm D^{-1}
\bm A_{\phi^{(s-1)}}
\bm X
+ (\boldsymbol{\Omega}_{\boldsymbol{\beta}}^{-1})^{(s-1)}\right)^{-1},\\
\bm\mu_{\bm\beta^{(s)}}
&=
\bm\Sigma_{\bm\beta^{(s)}}
\bm X^T
\bm A_{\phi^{(s-1)}}
\bm D^{-1}
\left(
\bm y
-\bm b_{\phi^{(s-1)}}
-\bm B\boldsymbol{\omega}^{(s-1)}
\right).
\end{align*}
\State Draw 
\begin{align*}
\lambda_j^{(s)}
&\sim
\operatorname{Gamma}
\left(
a_{\lambda_j}+\frac{1}{2},
\frac{1}{2}(\beta_j^{(s)})^2 + \rho_j ^{(s-1)}\right)  &
\rho_j^{(s)}
&\sim
\operatorname{Gamma}
\left(
a_{\lambda_j}+b_{\lambda_j},
1+ \lambda_j^{(s)} \right),
\end{align*}
for $j=1,\ldots,p$ and compute  $\boldsymbol{\Omega}_{\boldsymbol{\beta}}^{(s)}
=
\operatorname{diag}(1/\lambda_1^{(s)},\ldots,1/\lambda_p^{(s)})$.
\State Draw
\[
\boldsymbol{\omega}^{(s)}
\sim
N_{m-k}\left(
\bm\mu_{\boldsymbol{\omega}^{(s)}},
\bm\Sigma_{\boldsymbol{\omega}^{(s)}}
\right),
\]
where
\begin{align*}
\bm\Sigma_{\boldsymbol{\omega}^{(s)}}
&=
\left[
\bm B^T\bm D^{-1}\bm B
+
\frac{1}{(\sigma_u^2)^{(s-1)}}
\bm K_{\phi^{(s-1)}}
\bm K_{\phi^{(s-1)}}^T
\right]^{-1},\\
\bm\mu_{\boldsymbol{\omega}^{(s)}}
&=
\bm\Sigma_{\boldsymbol{\omega}^{(s)}}
\bm B^T\bm D^{-1}
\left(
\bm y
-\bm b_{\phi^{(s-1)}}
-\bm A_{\phi^{(s-1)}}
\bm X\bm\beta^{(s)}
\right).
\end{align*}

\State Draw
\[
(\sigma_u^2)^{(s)}
\sim
\operatorname{Inverse-Gamma}
\left(
a_{\sigma_u^2}+\frac{m-k}{2},
\;
b_{\sigma_u^2}
+
\frac12
(\bm\omega^{(s)})^T
\bm K_{\phi^{(s-1)}}
\bm K_{\phi^{(s-1)}}^T
\bm\omega^{(s)}
\right).
\]

\State Update $\phi^{(s)}$ using a random-walk Metropolis step on the logit 
scale, with log-posterior contribution
\[
\ell(\phi)
=
\log|\bm K_{\phi}|
-
\frac12
\bm r_{\phi}^T
\bm D^{-1}
\bm r_{\phi}
-
\frac{1}{2(\sigma_u^2)^{(s)}}
(\bm\omega^{(s)})^T
\bm K_{\phi}
\bm K_{\phi}^T
\bm\omega^{(s)}
+
\log\left\{
\operatorname{Beta}
(\phi;a_{\phi},b_{\phi})
\right\},
\]
where
\[
\bm r_{\phi}
=
\bm y
-
\bm b_{\phi}
-
\bm A_{\phi}
\bm X\bm\beta^{(s)}
-
\bm B\bm\omega^{(s)}.
\]
Update $\bm K_{\phi^{(s)}}$, $\bm A_{\phi^{(s)}}$, and
$\bm b_{\phi^{(s)}}$.

\State Compute
\[
\bdt^{\textsf{FH-B},(s)}
=
\bm A_{\phi^{(s)}}
\bm X\bm\beta^{(s)}
+
\bm b_{\phi^{(s)}}
+
\bm B\boldsymbol{\omega}^{(s)}.
\]
\EndFor
\end{algorithmic}
\end{algorithm}

\subsection{Algorithms for Normalized Similarity Graph Laplacians with external covariates}    
\label{Laplacian_computations}

To construct the Single Normalized Graph Laplacian 
$\textsf{L}_{C^{*}}^{\textsf{S}}$ and Multi-View Normalized Graph Laplacian 
$\textsf{L}_{C^{*}}^{\textsf{MV}}$ introduced in Sections 
\ref{subsec:The_priors2} and \ref{subsec:multi_view}, we propose 
Algorithms \ref{alg:SC} and \ref{alg:SC2}. Both algorithms construct connected 
neighborhood graphs from pairwise similarities among the $m$ small areas based on 
the direct estimates and external auxiliary information, $\boldsymbol{x}^{*}_i$, for area $i$. 
In contrast to approaches that use graph Laplacians to obtain discrete groups 
through spectral clustering, such as \citet{fuquenefully}, the proposed 
algorithms retain the weighted pairwise similarity structure among areas 
without introducing an intermediate clustering step. The neighborhood size 
$C$ is selected adaptively as the smallest value that produces a connected 
graph, instead of being specified a priori.

We first describe the construction of the Single Graph Laplacian in Algorithm \ref{alg:SC}. Given the direct estimates $y_i$, a single external covariate $x_i^*$ (not included among the regression covariates in the 
Fay-Herriot model), and a maximum neighborhood size $C_{\max}$, we construct a symmetric 
$C$-nearest-neighbor graph and its corresponding Laplacian graph for each 
$C=1,\ldots,C_{\max}$. For each candidate graph Laplacian, we calculate the ordered eigenvalues
\[
0=\lambda_1(C)
\leq
\lambda_2(C)
\leq\cdots\leq
\lambda_m(C)
\]
For a nonnegative undirected graph, the graph is connected if and only if 
$\lambda_2(C)>0$. Accounting for numerical precision through a small tolerance 
$\varepsilon>0$, we select
\begin{align}
\label{C_start}
C^*
=
\min
\left\{
C:
\lambda_2(C)>\varepsilon
\;\text{and}\;
\lambda_2(C+1)>\varepsilon
\right\}.
\end{align}
Thus, $C^*$ is the smallest neighborhood size for which the graph is connected 
under the numerical tolerance and remains connected after increasing the 
neighborhood size by one. The Laplacian, $\textsf{L}_{C^{*}}^{\textsf{S}}$, corresponding to the selected neighborhood 
size $C^*$ is then used to specify the FH-L and FH-L-BP models in Table \ref{tab:models}.

\begin{algorithm}[htbp] 
\small
  \begin{algorithmic}
  \setlength{\itemsep}{0.4em}
    \State \textbf{Input: direct estimate $y_{i}$, external covariate $x^{*}_{i}$ for $i=1,...,m$,
maximum neighborhood size $C_{\max}$
and numerical tolerance $\varepsilon$.}

\State Compute the similarity matrix using the external covariate $\boldsymbol{x}^{*}$,  
$\textsf{S}=(s_{i,\ell})_{i,\ell=1,...,m} \in \mathbb{R}^{m \times m}$.

\For{$C=1,\ldots,C_{\max}$}
\begin{enumerate}[label=(\arabic*)]
    \item Compute the $C$ nearest neighbors
    $\mathcal{N}_C(i)$ for every $i=1,\ldots,m$ where $\mathcal{N}_C(i)$ denotes the set of the $C$ units having the largest
similarities with unit $i$.     
    \item Construct the symmetric adjacency matrix $\textsf{A}_{C}^{\textsf{sym}}=(a_{i\ell}^C)$ where $
    a_{i\ell}^C
    =
    \mathbb{I}
    \left\{
    \ell\in\mathcal{N}_C(i)
    \;\text{or}\;
    i\in\mathcal{N}_C(\ell)
    \right\}$.
    \item Construct the weighted adjacency matrix $\textsf{W}_{C}^{\textsf{S}}=\textsf{A}_{C}^{\textsf{sym}} \odot \textsf{S}$.      
    \item Compute the degree matrix
    $\textsf{D}_{C}^{\textsf{S}}=\operatorname{diag}
    \left(
    \textsf{W}_{C}^{\textsf{S}}\mathbf{1}_m
    \right)$.    
    \item Compute $\textsf{L}_{C}^{\textsf{S}}
=
\boldsymbol{I}_m- (\textsf{D}_{C}^{\textsf{S}})^{-1/2}\textsf{W}_{C}^{\textsf{S}} (\textsf{D}_{C}^{\textsf{S}})^{-1/2}$.
    \item Compute the second-smallest eigenvalue
    $\lambda_2(C)$ of $\textsf{L}_{C}^{\textsf{S}}$.    
\end{enumerate}
\EndFor

\State Select $C^{*}$ using $C^*
=
\min
\left\{
C:
\lambda_2(C)>\varepsilon
\;\text{and}\;
\lambda_2(C+1)>\varepsilon
\right\}.$

\State \textbf{Output: Laplacian matrix, $\textsf{L}_{C^{*}}^{\textsf{S}}
=
\boldsymbol{I}_m- (\textsf{D}_{C^{*}}^{\textsf{S}})^{-1/2}\textsf{W}_{C^{*}}^{\textsf{S}} (\textsf{D}_{C^{*}}^{\textsf{S}})^{-1/2}.$}
  \end{algorithmic}
    \caption{Construction of the symmetric Normalized Graph Laplacian for the Laplacian Benchmarking Prior.}
     \label{alg:SC}
\end{algorithm}

A similar process is carried out to construct the Multi-View Laplacian, which allows multiple external covariates to be incorporated 
as a separate view.  For each candidate neighborhood size $C=1,\ldots,C_{\max}$, we construct 
a symmetric $C$-nearest-neighbor graph for each view. To select a common 
neighborhood size $C^*$ across views, we consider the union of the 
view-specific neighborhood graphs. 

Specifically, for each $C$, let $\textsf{A}^{\textsf{MV}}_{C}=(a_{i\ell,C}^{\textsf{MV}})$ denote the adjacency matrix of the union graph, where $a_{i\ell,C}^{\textsf{MV}}=1$ if areas $i$ and $\ell$ are neighbors in at least one view, and $a_{i\ell,C}^{\textsf{MV}}=0$ otherwise. Let $n(C)$ denote the number of connected components of the union graph 
$\textsf{A}^{\textsf{MV}}_{C}$, which is connected if and only if 
$n(C)=1$. We select the common neighborhood size as
\begin{align}
\label{C_start_MV}
C^{*}
=
\min
\left\{
C:
n(C)=1
\;\text{and}\;
n(C+1)=1
\right\}.
\end{align}
As in Algorithm \ref{alg:SC}, $C^*$ is the smallest 
neighborhood size for which the union graph is connected and remains 
connected after increasing $C$ by one. Given $C^*$, for each view $j$ we construct the weighted adjacency matrix, the corresponding degree matrix, and the view-specific normalized graph Laplacian. Finally, the Laplacian $\textsf{L}_{C^*}^{\textsf{MV}}$ is obtained as a weighted sum of the view-specific normalized graph Laplacians. This result is used to specify the FH-MVL and FH-MVL-BP models in Table \ref{tab:models}.

\begin{algorithm}[htbp] 
\small
  \begin{algorithmic}
    \State \textbf{Input: direct estimate $y_{i}$ for $i=1,...,m$, external covariates $\boldsymbol{x}^{*}_{j}=(x_{1,j}^{*},...,x_{m,j}^{*})$ for $j=1,...,J$, weights $q_1,\ldots,q_J$,
and maximum neighborhood size $C_{\max}$.}

\For{$j=1,\ldots,J$}
    \State Compute the similarity matrix for each external covariate
$x_j^*$, $\textsf{S}_j=(s_{i,\ell}^j)_{i,\ell=1,\ldots,m}
\in\mathbb{R}^{m\times m}$.
\EndFor

\For{$C=1,\ldots,C_{\max}$}
 \begin{enumerate}[label=(\arabic*)]
\item For each $j=1,\ldots,J$:
\begin{enumerate}[label=(\alph*)]
    \item Determine the $C$ nearest neighbors
        $\mathcal{N}_{j,C}(i)$ for each unit $i$.
    \item Construct the symmetric adjacency matrix $\textsf{A}^{\textsf{MV}}_{j,C}=(a_{i\ell}^{(j,C)})$ where
        $
        a_{i\ell}^{(j,C)}
        =
        \mathbb{I}
        \left\{
        \ell\in\mathcal{N}_{j,C}(i)
        \;\text{or}\;
        i\in\mathcal{N}_{j,C}(\ell)
        \right\}.
        $
\end{enumerate}

\item Construct the union adjacency matrix $
\textsf{A}^{\textsf{MV}}_{C}
=
\mathbb{I}
\left\{
\sum_{j=1}^{J} \textsf{A}^{\textsf{MV}}_{j,C} >0
\right\}.
$

\item Compute the number of connected components
    $n(C)$ of $\textsf{A}^{\textsf{MV}}_{C}$.
 \end{enumerate}
\EndFor

\State Select $C^{*}=
\min
\left\{
C:
n(C)=1
\;\text{and}\;
n(C+1)=1
\right\}$.

\For{$j=1,\ldots,J$}
  \begin{enumerate}[label=(\alph*)]
   \item Construct $\textsf{W}^{\textsf{MV}}_{j,C^{*}}= \textsf{A}^{\textsf{MV}}_{j,C^{*}} \odot \textsf{S}_j$.
   \item Construct $\textsf{D}^{\textsf{MV}}_{j,C^{*}}= \operatorname{diag}
    \left(
\textsf{W}^{\textsf{MV}}_{j,C^{*}}\mathbf{1}_m
    \right)$.
    \item Compute the view-specific normalized Laplacian 
    \[\textsf{L}^{\textsf{S}}_{C^{*},j} = \boldsymbol{I}_m - (\textsf{D}^{\textsf{MV}}_{j,C^{*}})^{-1/2} \textsf{W}^{\textsf{MV}}_{j,C^{*}} (\textsf{D}^{\textsf{MV}}_{j,C^{*}})^{-1/2}\].
 \end{enumerate}
 \EndFor    
 
 \vspace{-1.5em}
 
\State \textbf{Output: Multi-View Normalized Laplacian $\textsf{L}_{C^{*}}^{\textsf{MV}}=
\sum_{j=1}^{J}q_j \textsf{L}_{C^{*},j}^{\textsf{S}}$.}

  \end{algorithmic}
    \caption{Construction of the Multi-View Normalized Laplacian for the Multi-View Laplacian Benchmarking Prior.}
     \label{alg:SC2}
\end{algorithm}

\section{Application: Average Household Size (AHS) in the municipalities of Colombia} \label{sec_Application}

The Average Household Size (AHS) is a key indicator of social organization \citep{esteve2024global}. Reliable estimates of AHS are important for official statistics because household size and composition affect how resources, income, and opportunities are distributed across individuals. Despite its importance for policy and development planning, reliable estimates of AHS are often unavailable at fine geographic scales during long intercensal periods, particularly in low- and middle-income countries. This is the case in Colombia, where the National Statistical Office (NSO) and the Ministry of Health currently produce AHS estimates only at the national or departmental level. National-level estimates are also produced to meet United Nations reporting requirements \citep{UN2026}. While such aggregates are informative, estimates at finer subnational resolution such as municipalities, rural versus urban areas, or socioeconomic strata are essential for targeted policy interventions and equitable resource allocation. Motivated by this need for finer geographic resolution, our case study focuses on estimating AHS for \(m=291\) municipalities in Colombia.

We draw on data from the most recent Demographic and Health Survey (DHS) conducted in Colombia in 2025 \citep{DHS2025}. For each municipality $i$, we compute a direct estimator of AHS ($y_i$) using standard direct survey estimators \citep{Hajek1971}. The Households Per Person (HPP) (\(x_i\)), obtained from the 2018 Colombia Population Census \citep{DANE2018}, exhibits a relatively strong linear relationship with AHS and is therefore selected as a covariate for the municipality-level Fay-Herriot model (left panel of Figure \ref{fig:data}).
As is typical in SAE settings, direct estimates of variances are highly variable for many municipalities due to limited local sample sizes. Therefore the direct variance estimates are smoothed via the Generalized Variance Function (GVF) method \citep{wolter1985introduction}. Specifically, to obtain the estimates ($D_i$) of the 
variances of the AHS, we implement the GVF approach of \cite{ you2023application}. We fit a log-linear model relating the direct variance estimates of AHS to the municipality-level sample sizes, using the logarithm of each quantity.

\begin{figure}[h!]
\centering
\begin{minipage}[t]{0.48\textwidth}
  \centering
  \includegraphics[width=\textwidth]{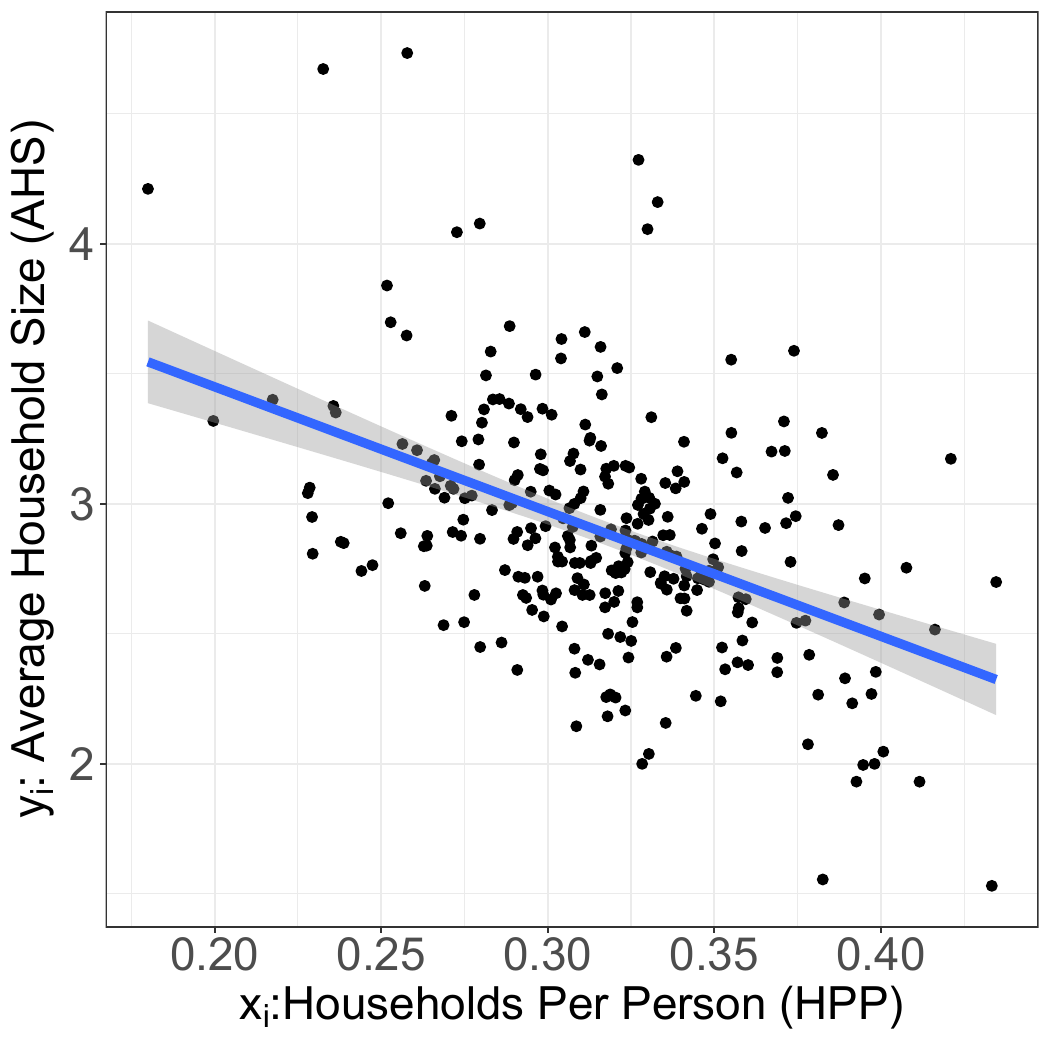}
\end{minipage}
  \hspace{-0.1cm}
\begin{minipage}[t]{0.48\textwidth}
  \centering
  \includegraphics[width=\textwidth]{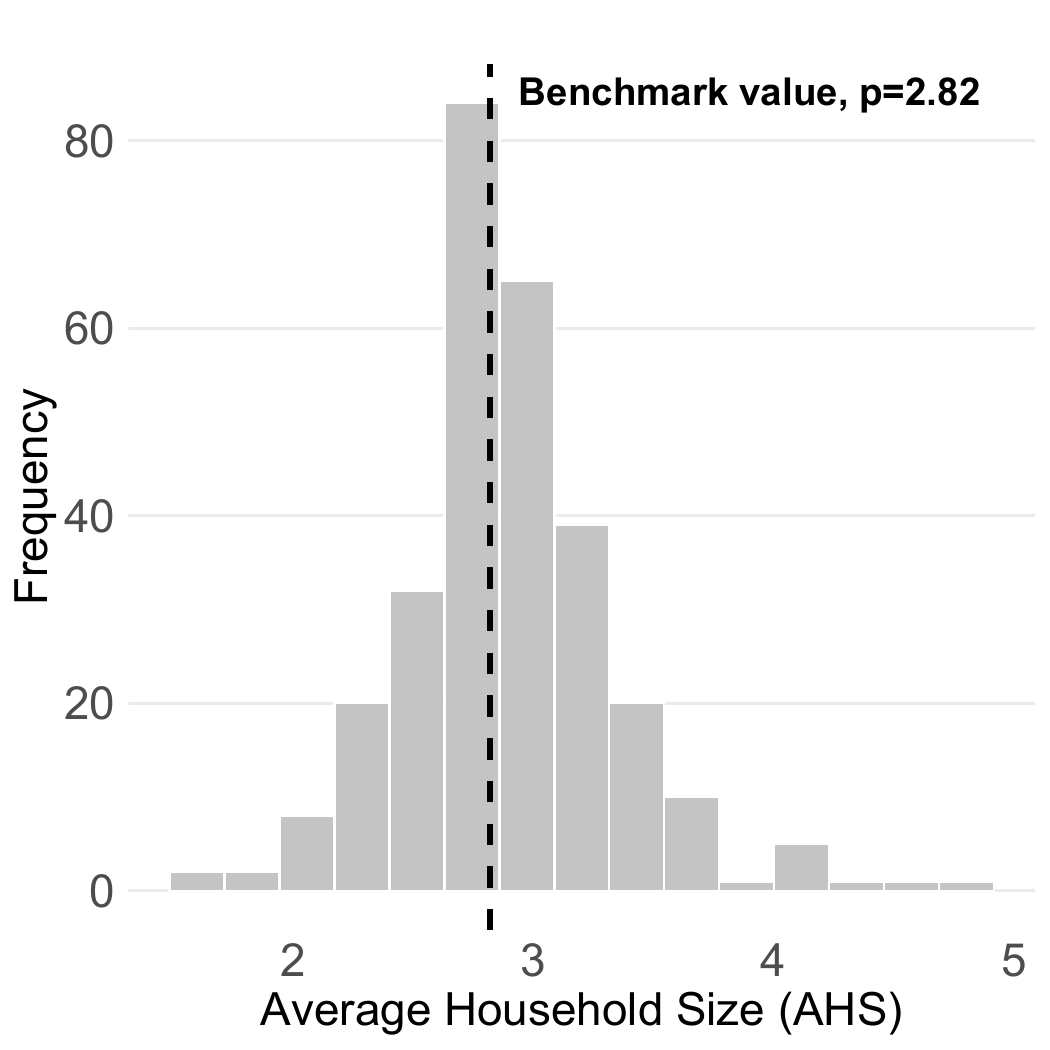}
\end{minipage}
\caption{\small Direct estimates and covariate values:  $y_{i}$: Average Household Size (AHS) vs $x_{i}$: Households Per Person (HPP) (left). Direct estimates of AHS and benchmark value according to the 2025 National Quality of Life Survey in Colombia.
}\label{fig:data}
\end{figure}

According to the 2025 National Quality of Life Survey (NQLS) in Colombia \citep{DANE_ECV_2025}, the national AHS was estimated at 2.82, as shown in the right panel of Figure \ref{fig:data}. We therefore impose a benchmarking constraint requiring the weighted municipality-level estimates to agree with this national estimate. Specifically, we set $\boldsymbol{w}^T \bdt =2.82$, where $ \boldsymbol{w}$ denotes the vector of weights used to aggregate municipality-level AHS estimates to the national level.

Substantial prior evidence suggests that the household size in Colombia is closely tied to underlying poverty conditions \citep{nunez2005determinantes,Profamilia}, a pattern also documented in other low- and middle-income countries, including Pakistan \citep{mahmood2026multidimensional} and Mexico \citep{garza2016determinants}. In particular, AHS is expected to be strongly associated with  Urban and Rural Multidimensional Poverty Indexes (U-MPI and R-MPI) \citep{alkire2015multidimensional, FAO_OPHI_2022_RMPI}. These relationships motivate the use of poverty measures as auxiliary information for constructing similarity structures among municipalities to facilitate borrowing of strength across areas with similar socioeconomic characteristics.

\subsection{Normalized Similarity Graph Laplacians with Multidimensional Poverty Indexes}

Motivated by the association between AHS and poverty indicators, we consider auxiliary information derived from the Multidimensional Poverty Index (MPI) \citep{alkire2015multidimensional}, available from the 2005 National Census \citep{DANE2005} and the 2014 Census of Agriculture \citep{CNA2014}. To implement the FH-L and FH-L-BP models, Algorithm \ref{alg:SC} requires an external auxiliary covariate to construct the normalized graph Laplacian. Both the Urban MPI (U-MPI) and Rural MPI (R-MPI) provide potential sources of auxiliary information for estimating AHS at the municipality level. Since several Colombian rural municipalities are included in the DHS survey \citep{DHS2025}, we use R-MPI and the direct AHS estimates as input to Algorithm \ref{alg:SC}. For the FH-MVL and FH-MVL-BP models, we use both R-MPI and U-MPI as
external auxiliary covariates to construct the Multi-View Normalized Graph
Laplacian using Algorithm \ref{alg:SC2}, with weights $q_1=q_2=0.5$. This allows the Multi-View construction to incorporate complementary socioeconomic information from the two poverty measures.

To construct the similarity graphs in Algorithms \ref{alg:SC} and \ref{alg:SC2}, we use a radial-kernel Gram matrix \citep{gonen2011multiple}, following the specifications in \citet{belkin2006manifold} and \citet{von2007tutorial}. For view $j$, the similarity between municipalities $i$ and $\ell$ is defined as
\[
s_{i,\ell}^{j}
=
\exp\left\{
-\frac{(x^{*}_{i,j}-x^{*}_{\ell,j})^{2}+(y_i-y_\ell)^{2}}
{2\sigma_s^2}
\right\},
\qquad \sigma_s^2=1.
\]

To select the neighborhood size \(C^*\) in Algorithm \ref{alg:SC}, we set the tolerance to \(\varepsilon=10^{-8}\). For the Multi-View graph in Algorithm \ref{alg:SC2}, the neighborhood size is selected based on the connectivity of the union graph induced by U-MPI and R-MPI. Figure \ref{fig:network} displays the similarity graphs obtained from Algorithms \ref{alg:SC} and \ref{alg:SC2}, with selected neighborhood sizes $C^{*}=3$ and $C^{*}=2$ for the Single and Multi-View constructions, respectively. Because the two views can provide complementary connections among municipalities, connectivity of the union graph is achieved at the smaller neighborhood size $C^{*}=2$.

\begin{figure}[h!]
\centering
 \hspace{-1cm}
\begin{minipage}[t]{0.55\textwidth}
  \hspace{1.5cm} Single, $C^{*}=3$ \vspace{-1.0cm}
\\
  \centering
  \includegraphics[width=\textwidth]{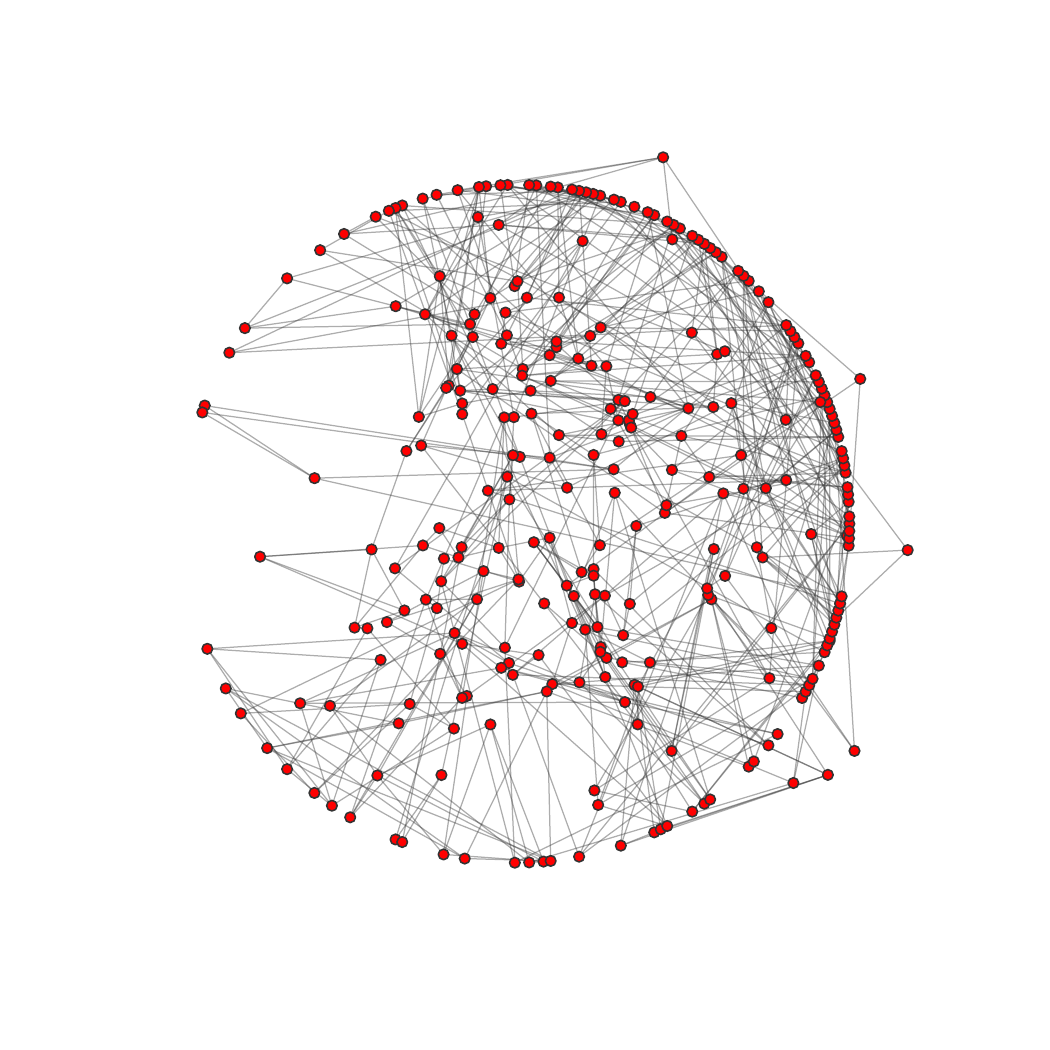}
\end{minipage}
  \hspace{-2.2cm}
\begin{minipage}[t]{0.55\textwidth}
  \hspace{2.0cm} Multi-View, $C^{*}=2$ \vspace{-1.0cm}
\\
  \centering
  \includegraphics[width=\textwidth]{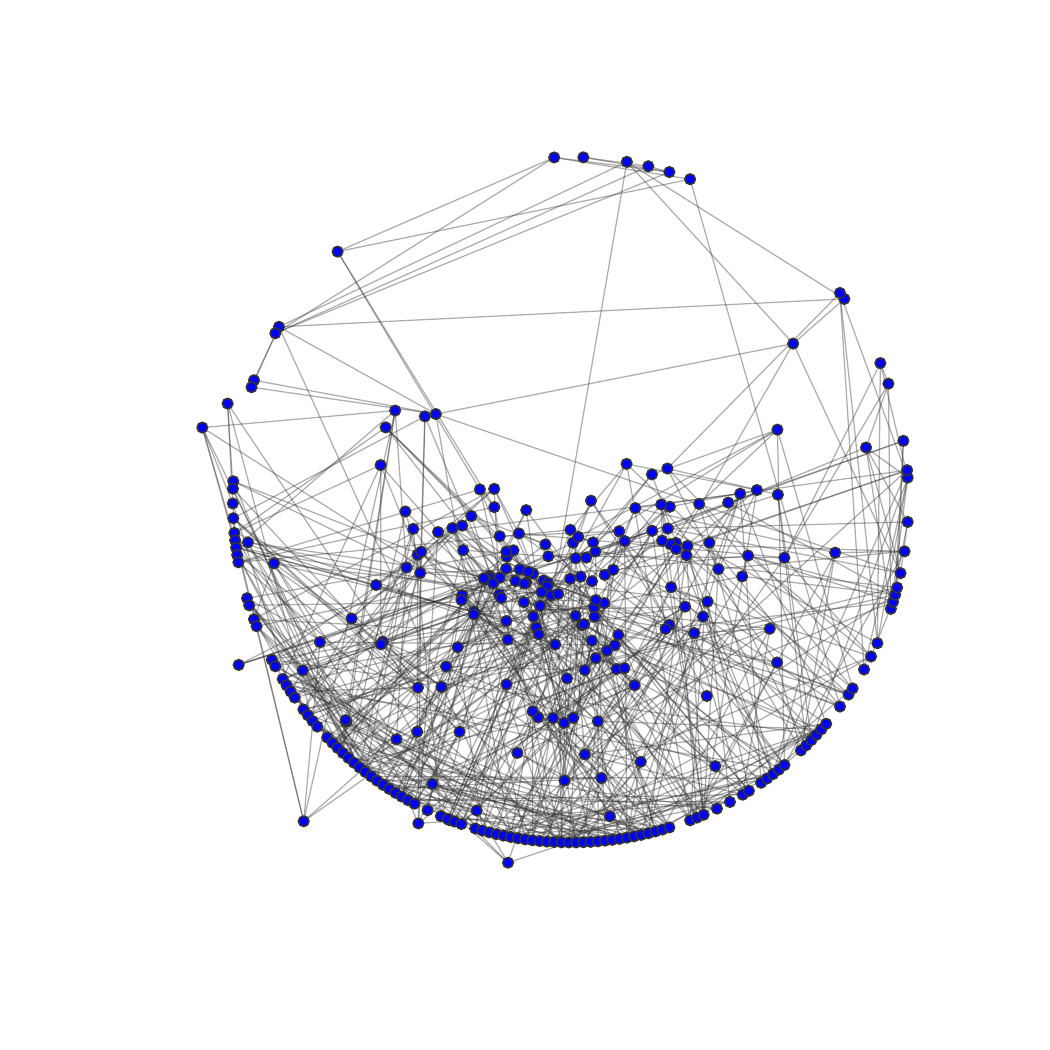}
\end{minipage}
\vspace{-1.3cm}
\caption{\small
Similarity graphs constructed from the direct AHS estimates and external auxiliary information. Single graph obtained from Algorithm \ref{alg:SC} using R-MPI (left) and Multi-View graph obtained from Algorithm \ref{alg:SC2} using R-MPI and U-MPI (right).}\label{fig:network}
\vspace{-0.5cm}
\end{figure}

\subsection{Simulation study and evaluation metrics}
\label{sec:evaluation}

To inform the model evaluation in Section \ref{sec:results}, we first
conduct a data-based simulation study. The simulation provides a supporting
assessment of the evaluation metrics used in the application and is not a
central component of the methodological development. We therefore provide
the full simulation design and results in Supplementary Section
\ref{sec_Simulations} and summarize the main findings here.

The simulation was designed to resemble the motivating application while
retaining known small area parameters. It uses covariates and other features of the Colombian
application under different data-generating models. This setting allows us to compare the performance of the models when the true small area parameters are known, examine whether conclusions based on the
Deviance Information Criterion (DIC) \citep{spiegelhalter2002bayesian, tang2018modeling, tang2026measurement} are consistent with these truth-based comparisons, and assess whether the estimated Conditional Posterior Mean Square Error
(CPMSE) provides a reasonable approximation to the empirical MSE. The
simulation results support the use of DIC for model assessment and CPMSE for
evaluating the precision of small area estimates in the application, where the
true municipality-level AHS parameters are unknown.

Specifically, to quantify uncertainty in the benchmarked small area estimators, we use
the Conditional Posterior Mean Square Error (CPMSE) proposed by
\citet{fuquenefully}. Let $\mathcal{M}$ denote the corresponding
unbenchmarked model, FH, FH-L, or FH-MVL, and let
$\hat{\theta}_i^{\mathcal{M}\textsf{-B-RB}}$ and
$\hat{\theta}_i^{\mathcal{M}\textsf{-RB}}$ denote the Rao-Blackwell
estimators under the benchmarked and unbenchmarked models, respectively.
The CPMSE of the benchmarked estimator is defined as 
\begin{align}
\operatorname{CPMSE}
\left(
\hat{\theta}_i^{\mathcal{M}\textsf{-B-RB}}
\mid \boldsymbol y,\boldsymbol X
\right)
&=
\operatorname{CPMSE}
\left(
\hat{\theta}_i^{\mathcal{M}\textsf{-RB}}
\mid \boldsymbol y,\boldsymbol X
\right)
+
\left(
\hat{\theta}_i^{\mathcal{M}\textsf{-B-RB}}
-
\hat{\theta}_i^{\mathcal{M}\textsf{-RB}}
\right)^2.
\label{eq:cpmse}
\end{align}
Thus, the CPMSE combines the uncertainty associated with the corresponding
unbenchmarked small area estimator with the squared discrepancy introduced
by enforcing the benchmarking constraint. We use the estimated CPMSE to
compute the coefficient of variation for each benchmarked small area
estimate as
\begin{equation}
\widehat{\operatorname{CV}}_i
=
\frac{
\sqrt{
\widehat{\operatorname{CPMSE}}
\left(
\hat{\theta}_i^{\mathcal{M}\textsf{-B-RB}}
\mid \boldsymbol y,\boldsymbol X
\right)
}
}{
\hat{\theta}_i^{\mathcal{M}\textsf{-B-RB}}
}.
\end{equation}
Similarly, the coefficient of variation can be obtained for $\hat{\theta}_i^{\mathcal{M}\textsf{-RB}}$.  The Rao-Blackwell estimators and Monte Carlo computation of the CPMSE are
provided in Supplementary Section \ref{sec:CPMSE}.

To compare the performance of the different proposed Bayesian models we consider the DIC  given by 
\begin{align}
\text{DIC} =  \dfrac{2}{S} \sum_{t=1}^{S}\sum_{i=1}^{m} \dfrac{(y_{i}-\theta_i^{\textsf{FH-B},(s)})^2}{D_{i}} -\sum_{i=1}^{m} \dfrac{(y_{i}-\hat{\theta}_i^{\textsf{FH-B},(s)})^2}{D_{i}},
\end{align}
where $\theta_i^{\textsf{FH-B},(s)}$ is a posterior sample obtained using Algorithm \ref{mcmc:alg1} under  
FH-BP, FH-L-BP or FH-MVL-BP and  $\hat{\theta}_i^{\textsf{FH-B},(s)} = \sum_{s=1}^{S}\theta_i^{\textsf{FH-B},(s)}/S$ is the posterior sample  mean of $\theta_{i}$. 
Similarly, DIC 
can be computed under FH, FH-L or FH-MVL. 
Among all candidate models, the model with the smallest DIC is selected as the best-fitting model for the final analysis. 
\subsection{Results} \label{sec:results}

In this section, we evaluate the proposed models for the AHS application. To obtain posterior samples, we implement Algorithm \ref{mcmc:alg1}. For each model in Table \ref{tab:models}, we run two MCMC chains for 40,000 iterations, discard the first 40,000 iterations as burn-in, and retain every fourth draw thereafter. We use weakly informative priors for $\lambda_j$, with $a_{\lambda_j}=b_{\lambda_j}=1$, a non-informative prior for $\phi$ with $a_\phi=b_\phi=1.01$, and small hyperparameter values for the prior of $\sigma_u^2$, $a_{\sigma_u^2}=b_{\sigma_u^2}=0.0001$. Convergence is satisfactory across all model parameters. See Supplementary Section \ref{MCMC_diagnostic} for convergence diagnostics under the  FH-MVL-BP model.

Table \ref{tab:EPD} displays the DIC values for the six models considered in the AHS application. These results favor the models  that incorporate graph-based regularization. Among the
models without benchmarking, both FH-L and
FH-MVL have smaller DIC values than FH, with FH-MVL attaining the smallest
value. The same pattern is observed among the benchmarked models. FH-L-BP and
FH-MVL-BP have smaller DIC values than FH-BP, with FH-MVL-BP displaying the
smallest value. These results indicate that incorporating the graph-based 
similarity structure improves model performance in this application, 
with further improvement when both R-MPI and U-MPI are incorporated through the Multi-View specification. 

\begin{table}[htpb]
\renewcommand{\arraystretch}{1.3}
\centering
\begin{tabular}{clcclccc} 
  \hline
 & & DIC &  $\#(\widehat{\operatorname{CV}}\%)<7$ &  &  & \\
  \hline \hline  
 & Direct & --------- & 191 & & \\
&  FH &   516.9  &      222   && \\
&  FH-BP &    579.9 &           206   && \\ \hline\hline 
&  FH-L &      446.1&        254     && \\
& FH-L-BP &   512.3 &      235   &&\\ \hline\hline 
& FH-MVL &   434.7  &      262     &&\\
&  FH-MVL-BP &  500.0 &     242  && \\ \hline
\end{tabular}
\caption{\small Model assessment and precision of the AHS estimates under the
FH, FH-L, FH-MVL, FH-BP, FH-L-BP, and FH-MVL-BP models. Reported are the DIC
and the number of municipalities (out of 291) with estimated coefficients of variation below
7\%. The corresponding number for the direct estimates is included for
comparison.}
\label{tab:EPD}
\end{table}

Figure \ref{fig:phi} displays the marginal posterior distributions of the
regularization parameter $\phi$ for the models incorporating graph
Laplacians. The posterior distributions under FH-MVL and FH-MVL-BP are
concentrated at smaller values of $\phi$ than those under FH-L and
FH-L-BP, respectively. Since smaller values of $\phi$ correspond to
greater regularization, these results indicate a stronger influence of
the regularization structure under the Multi-View specifications.

\begin{figure}[htpb]
\begin{center}
\begin{tabular}{ccc}
 \hspace{-0.2cm}  \includegraphics[width=0.47\textwidth]{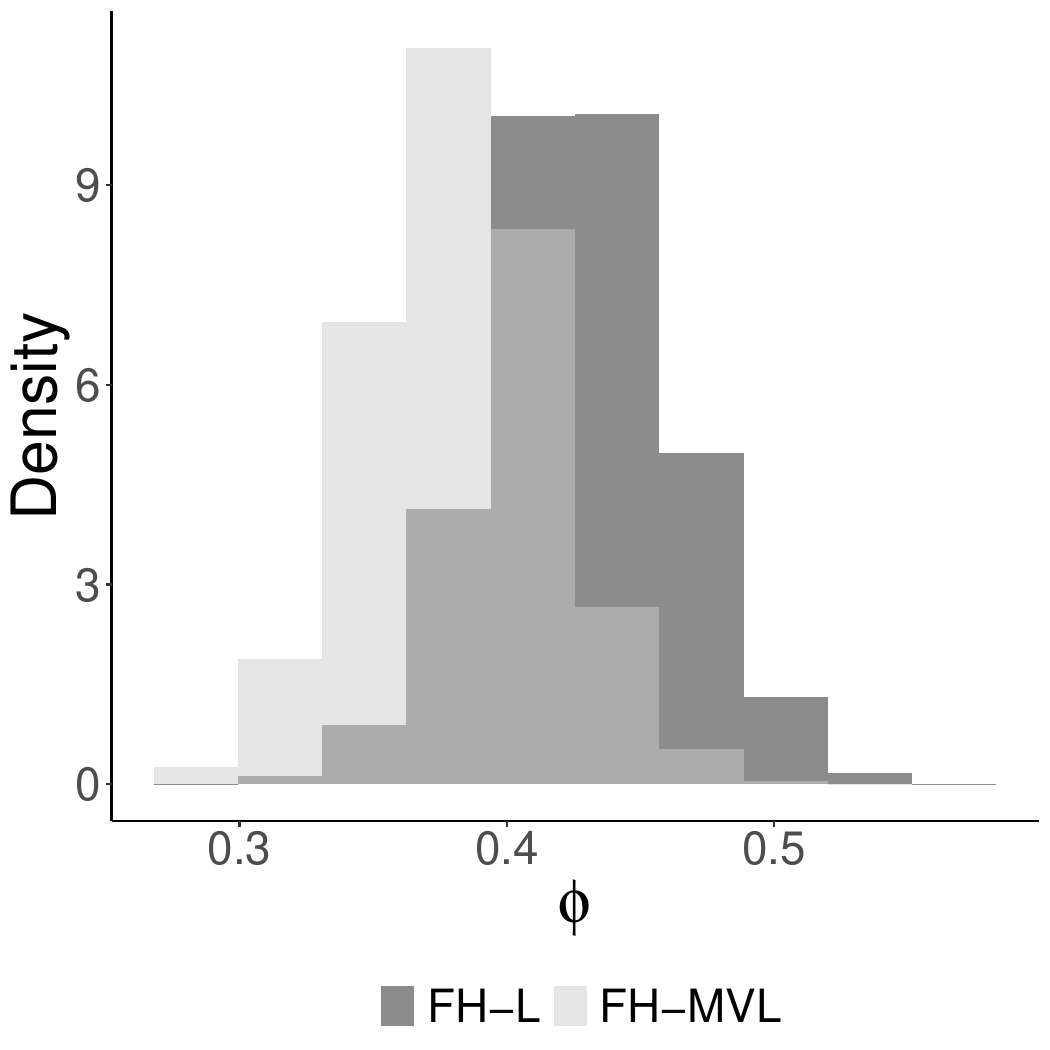} & \includegraphics[width=0.47\textwidth]{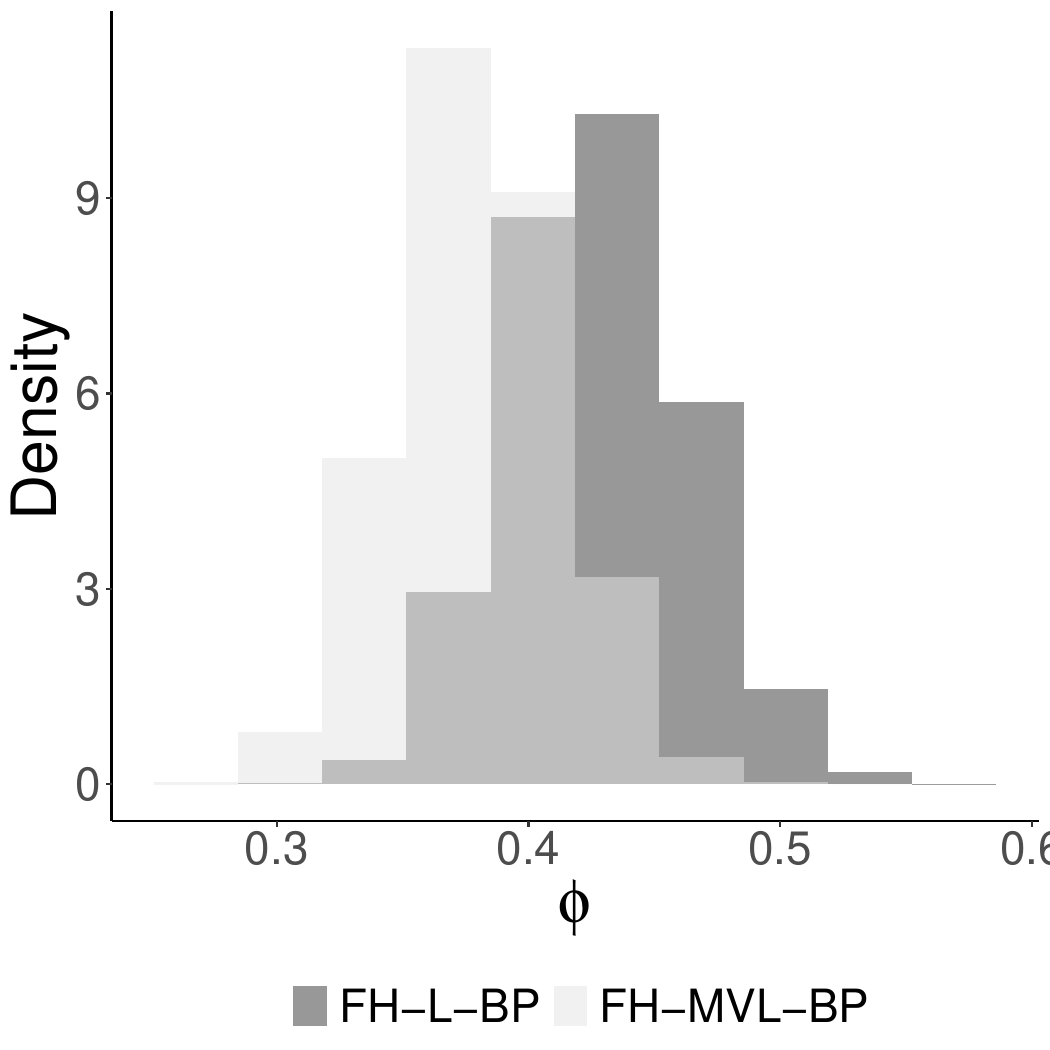}  \\
\end{tabular}
\end{center}
\vspace{-0.3cm}
\caption{\small
Posterior estimates of $\phi$ under FH-L and FH-MVL (left) and under   FH-L-BP and  FH-MVL-BP (right).} \label{fig:phi}
\end{figure}
  
We next evaluate the uncertainty and precision of the small area
estimators under the models in Table \ref{tab:models}. Figure \ref{fig:result1} displays the Conditional Posterior Mean Square Error (CPMSE) and the respective 
estimated Coefficients of Variation (CVs) for the benchmarked municipality-level AHS estimates. 
According to Figure \ref{fig:result1} (left), the unbenchmarked models
FH, FH-L, and FH-MVL generally produce smaller posterior uncertainty
measures than their benchmarked counterparts, FH-BP, FH-L-BP, and
FH-MVL-BP. This increase is expected because the CPMSE for the
benchmarked estimators accounts for the additional discrepancy induced
by enforcing the benchmarking constraint. Despite this increase,
Figure \ref{fig:result1} (right) shows that incorporating graph-based
regularization improves the precision of the small area estimates.
In particular, FH-MVL and FH-MVL-BP generally produce smaller
coefficients of variation than the corresponding models without
Laplacian or Multi-View regularization. Overall, the results suggest that incorporating multiple sources of similarity
through the Multi-View Laplacian can improve the precision of the AHS
small area estimates.

\begin{figure}[htpb]
\begin{center}
\begin{tabular}{ccc}
\hspace{-0.5cm} \includegraphics[width=0.48\textwidth]{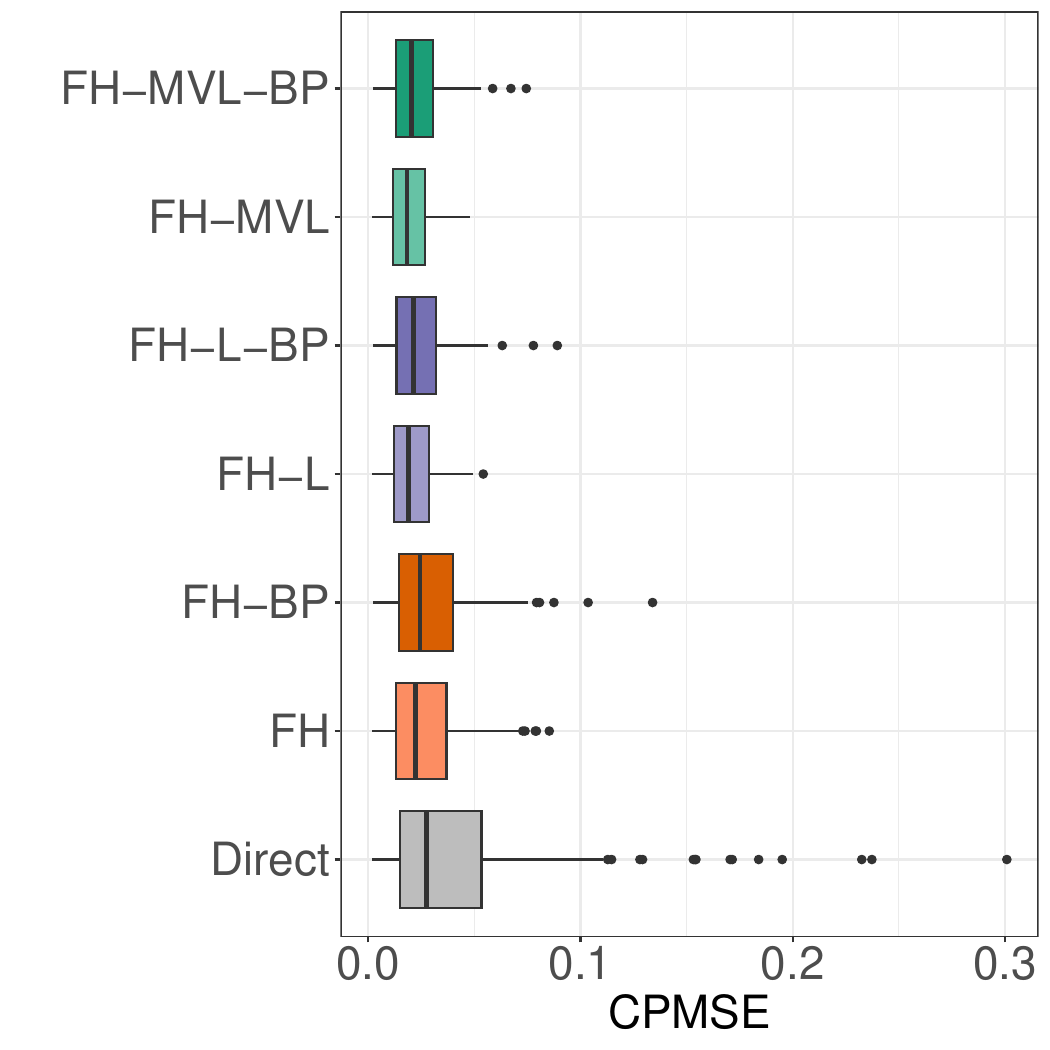} & \hspace{-0.5cm} \includegraphics[width=0.48\textwidth]{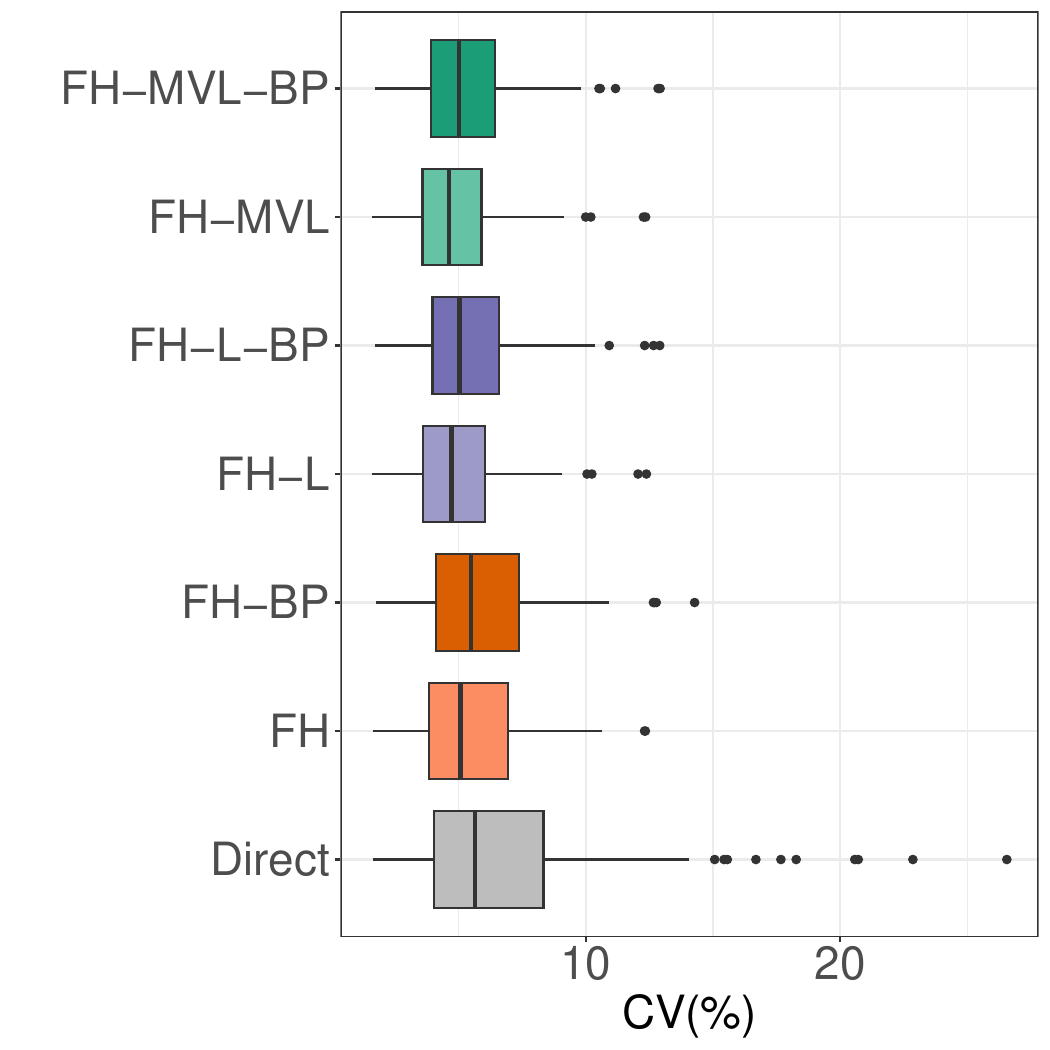} \\
\end{tabular}
\end{center}
\vspace{-0.3cm}
\caption{\small
Conditional Posterior Mean Square Error estimates (left) and
estimated Coefficients of Variation (CVs) (right) for models in Table \ref{tab:models}.} \label{fig:result1}
\end{figure}

\begin{figure}[htpb]
\centering
\begin{tabular}{cc}
AHS Estimates & \hspace{-0.5cm}  $\widehat{\text{CV}}$ (\%) \\[-0.8cm]
\hspace{-1cm} \includegraphics[width=0.6\textwidth]{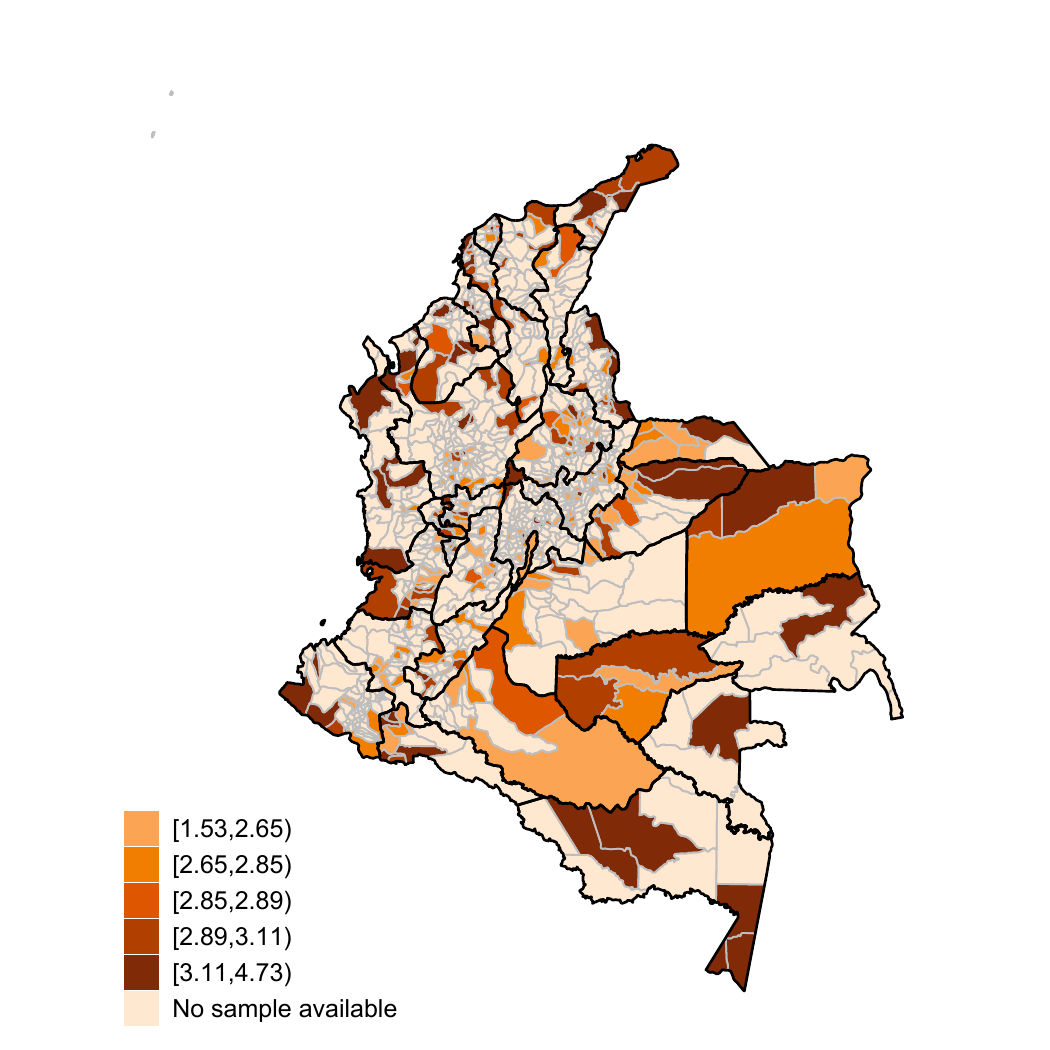} & \hspace{-1.5cm}
\includegraphics[width=0.6\textwidth]{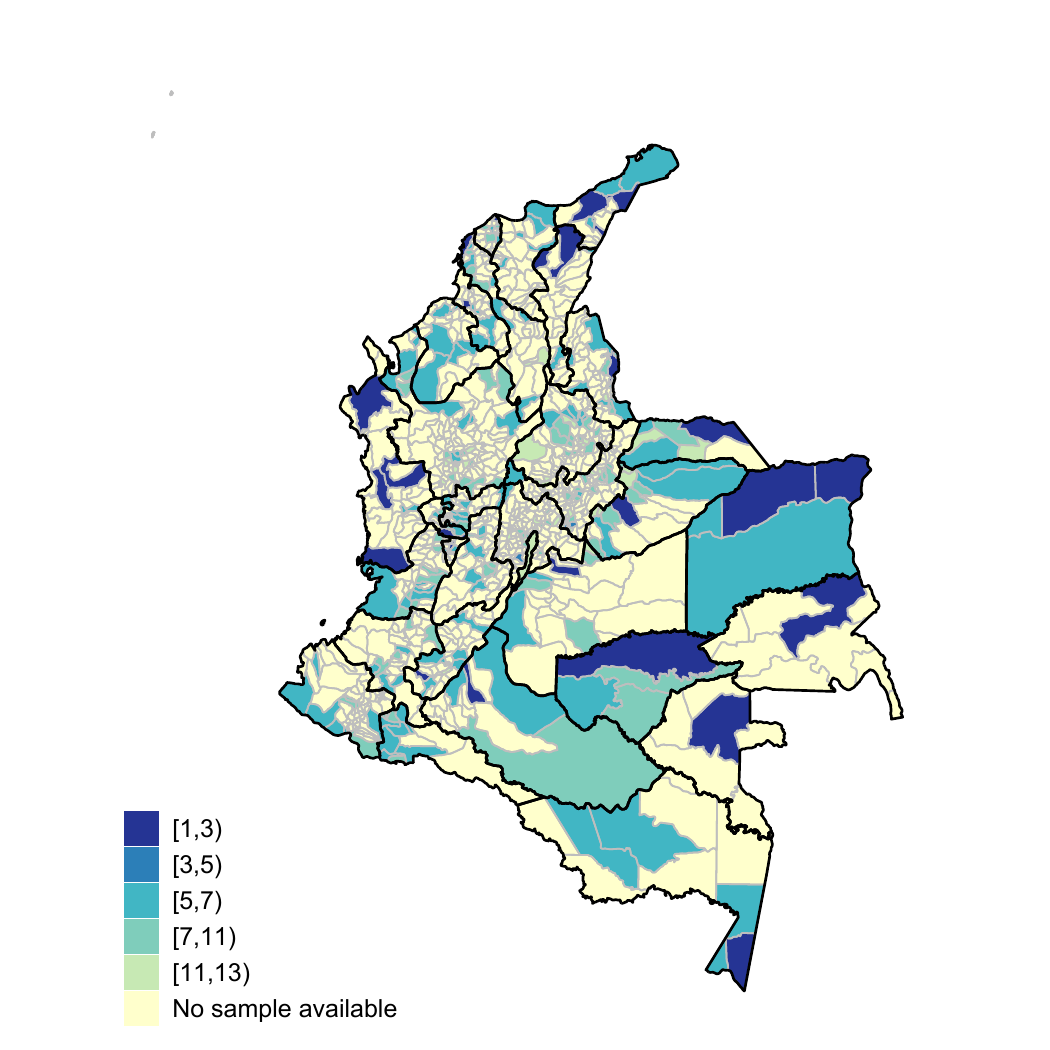}
\end{tabular}
\vspace{-0.3cm}
\caption{\small Municipality-level AHS estimates and corresponding coefficients of variation under the FH-MVL-BP model in Colombia.}
\label{fig:post2}
\end{figure}

Figure \ref{fig:post2} displays the posterior mean
AHS estimates under FH-MVL-BP and their corresponding coefficients of
variation. The results show substantial variation in household size across
municipalities, with posterior mean estimates ranging from approximately
1.5 to 4.7 persons per household. Table \ref{tab:EPD} shows that the graph regularized models consistently increase the number of municipality-level estimates with coefficients of variation below 7\%. Specifically, under FH-MVL-BP, approximately 83\% of municipality-level estimates have estimated
coefficients of variation below 7\%. The 7\% threshold is particularly relevant in this application
because estimates of key indicators with coefficients of variation below
7\% are considered ideal for publication in Colombian official statistics
\citep{DANE2008CV}. In addition to their municipality-level precision, the
FH-MVL-BP estimates satisfy the benchmarking constraint and are therefore
consistent with the national AHS estimate.

\section{Concluding remarks} \label{sec_Conclusions}

This work shows, for the first time, how graph-based regularization and
benchmarking restrictions can be jointly incorporated into the prior
distribution of a Bayesian SAE model. To enable posterior computation
under the resulting degenerate benchmarking priors, we developed a reduced
parameterization of the constraint space that avoids direct sampling from
a singular distribution and guarantees that the benchmarking restrictions
are satisfied at every posterior draw. We also introduced a Multi-View
Laplacian construction that allows multiple sources of external information
to inform the similarity structure among small areas. In the Colombian
application, the Multi-View models improved posterior predictive performance
and generally produced more precise municipality-level AHS estimates.

More broadly, the proposed construction provides a general way to
incorporate benchmarking into regularized Bayesian SAE beyond the particular
graph structures considered here. This opens several directions for future
work, including alternative similarity structures and Benchmarking priors,
more complex benchmarking restrictions, and extensions to other small area
models and applications.

\clearpage

\renewcommand{\thesection}{\Alph{section}}
\setcounter{section}{0}
\counterwithin{figure}{section}
\counterwithin{table}{section}
\setcounter{algorithm}{0}
\renewcommand{\thealgorithm}{A\arabic{algorithm}}


\appendix

\section*{Supplementary Material}

The following sections provide supplementary material for the paper. Section \ref{theory}  contains the proof of Lemma \ref{lemma1}. Section \ref{sec:CPMSE} provides additional details on the computation of the Conditional Posterior Mean Square Error (CPMSE) for the small area estimators considered in the paper. Section \ref{sec_Simulations} presents the data-based simulation study used to evaluate model performance and assess the CPMSE as a measure of uncertainty for the benchmarked small area estimators. Finally, Section \ref{MCMC_diagnostic} presents MCMC diagnostics for the models used in the application.

\section{Proof of Lemma \ref{lemma1}} \label{theory}

Under the FH model, the random variable  $\boldsymbol{\theta}$  follows a Normal distribution:
\begin{align}
\boldsymbol{\theta} \sim N_m(\mathbf{X}\boldsymbol{\beta},\sigma_u^2\mathbf{I}_m). 
\end{align}
Taking the expectation and variance of the random variable $\bdt^{\textsf{FH-B}}$ in Definition  \ref{def:FH_B} we obtain
\begin{align}\label{FH_B_mean}
\mathbb{E}(\bdt^{\textsf{FH-B}} \mid \mathbf{X}, \boldsymbol{\beta}, \sigma_u^2, \phi) &= 
\boldsymbol{A}_{\phi} \mathbb{E}(\bdt \mid \mathbf{X}, \boldsymbol{\beta}, \sigma_u^2, \phi) + 
\mathbb{E}(\boldsymbol{b}_{\phi} \mid \mathbf{X}, \boldsymbol{\beta}, \sigma_u^2, \phi) \\
&= \boldsymbol{A}_{\phi} \mathbf{X}\boldsymbol{\beta} + \boldsymbol{b}_{\phi}, \notag
\end{align} 
and
\begin{align}\label{FH_B_variance}
\operatorname{Var}(\bdt^{\textsf{FH-B}} \mid \mathbf{X}, \boldsymbol{\beta}, \sigma_u^2, \phi)&=
\operatorname{Var}(\boldsymbol{A}_{\phi} \bdt+ \boldsymbol{b}_{\phi} \mid  \mathbf{X}, \boldsymbol{\beta}, \sigma_u^2,  \phi)\\ &=
\sigma_u^2
\boldsymbol{A}_{\phi}\boldsymbol{A}_{\phi}^{\mathsf{T}}. \notag
\end{align} 

Note that  $ \boldsymbol{W}\boldsymbol{A}_{\phi} =\boldsymbol{W} \boldsymbol{Q}_{\phi}^{-1}+ \boldsymbol{W} \boldsymbol{Q}_{\phi}^{-1}
\boldsymbol{W}^T\left(
\boldsymbol{W}
\boldsymbol{Q}_{\phi}^{-1}
\boldsymbol{W}^\top
\right)^{-1}\boldsymbol{W} \boldsymbol{Q}_{\phi}^{-1} =\bm 0$. Therefore, the first property follows by noting that
\begin{align*}
\boldsymbol{W}(\boldsymbol{A}_{\phi}\mathbf{X}\boldsymbol{\beta}
+\boldsymbol{b}_{\phi})
&=    \boldsymbol{W} \boldsymbol{Q}_{\phi}^{-1}\boldsymbol{W}^T\left(
\boldsymbol{W}
\boldsymbol{Q}_{\phi}^{-1}
\boldsymbol{W}^\top
\right)^{-1}\boldsymbol{p}\\
&=\boldsymbol{p}.
\end{align*} 
For the second property, consider  $\boldsymbol{A}_{\phi,\sigma_u^2 } =\sigma_u^2 \boldsymbol{A}_{\phi}\boldsymbol{A}_{\phi}^{\mathsf{T}}$
and
\[
\bm W\in\mathbb{R}^{k\times m},\qquad \operatorname{rank}(\bm W)=k,
\qquad \boldsymbol{A}_{\phi,\sigma_u^2 }  \in\mathbb{R}^{m\times m},
\qquad 
\boldsymbol{W} \boldsymbol{A}_{\phi,\sigma_u^2 } =0.
\]
Since
\[
\operatorname{Col}(\bm W^{\mathsf{T}})\subseteq\ker(\boldsymbol{A}_{\phi,\sigma_u^2 }^{\mathsf{T}} ),
\]
we have
\[
\dim\ker(\boldsymbol{A}_{\phi,\sigma_u^2 }^{\mathsf{T}} )\geq \operatorname{rank}( \boldsymbol{W})=k.
\]
By the rank--nullity theorem,
\[
\operatorname{rank}(\boldsymbol{A}_{\phi,\sigma_u^2 }^{\mathsf{T}} )
+\dim\ker(\boldsymbol{A}_{\phi,\sigma_u^2 }^{\mathsf{T}} )=m,
\]
and therefore
\[
\operatorname{rank}(\boldsymbol{A}_{\phi,\sigma_u^2 }^{\mathsf{T}}) \leq m-k<m.
\]
Hence, for \(k\geq 1\), $\boldsymbol{A}_{\phi,\sigma_u^2 }$  is singular
and $\bdt^{\textsf{FH-B}}$ follows a degenerate Normal distribution with mean and variance given by
(\ref{FH_B_mean}) and (\ref{FH_B_variance}), respectively.

\section{Conditional Posterior Mean Square Error for Benchmarked Estimators}
\label{sec:CPMSE}

The CPMSE proposed by \citet{fuquenefully} applies to the FH, FH-L, and FH-MVL models and their benchmarked counterparts. Let $\M$ denote the corresponding unbenchmarked model, that is, FH, FH-L or FH-MVL. The CPMSE for the benchmarked Rao-Blackwell (RB) estimator is given in Definition \ref{def3}. 
 
 \begin{definition}[\cite{fuquenefully}]
\label{def3}
The Conditional Posterior Mean Square Error (CPMSE) for the benchmarked Rao-Blackwell  estimator $\hat{\theta}_i^{\M\textsf{-B-RB}}$  of the  posterior mean of $\theta_{i}^{\M}$ 
under model $\M$ is given by
\begin{align}
   \label{CPMSE_projection_estimator}
   \emph{CPMSE}(\hat{\theta}_i^{\M\textsf{-B-RB}} \mid \boldsymbol{y}, \boldsymbol{X}
      ) &=  E_{(\phi, \bm\beta, \sigma_u^2)}(E_{\bdt^{\M}}((\hat{\theta}_i^{\M\textsf{-B-RB}} - \theta_i^{\M})^{2}\mid \boldsymbol{y}, \boldsymbol{X}, \phi, \bm\beta, \sigma_u^2))       \\ \notag
        &   =   \emph{CPMSE}(\hat{\theta}_i^{\M\textsf{-RB}} \mid \boldsymbol{y}, \boldsymbol{X}) + (\hat{\theta}_i^{\M\textsf{-B-RB}} - \hat{\theta}_i^{\M\textsf{-RB}})^{2}, 
        \end{align} 
where  $\hat{\theta}_i^{\M\textsf{-RB}}$ is the unbenchmarked Rao-Blackwell estimator of the posterior mean of $\theta_{i}^{\M}$.  
 \end{definition}
To implement the CPMSE, we consider RB estimators, $\hat{\bdt}^{\M\textsf{-B-RB}}$,  of the posterior mean of $\bdt^{\M\textsf{-B}}=(\theta_{i}^{\M\textsf{-B}},...,\theta_{m}^{\M\textsf{-B}})^{\top}$  obtained with the output of Algorithm \ref{mcmc:alg1}. Specifically, consider a Rao-Blackwellization of $E_{\bdt^{\M\textsf{-B}}}(\bdt^{\M\textsf{-B}} \mid \bm y, \bm X)$,
 \begin{align}  \label{RB}
\bdt^{\M\textsf{-B-RB}}&= E_{(\phi, \bm\beta, \sigma_u^2)}(E_{\bdt^{\M\textsf{-B}}}(\bdt^{\M\textsf{-B}} \mid \bm y, \bm X, \phi, \bm\beta, \sigma_u^2))  \\ \notag
&= \bm A_{\phi}\bm X\bm\beta
+
\bm b_{\phi}
+
\bm B \bm\mu_{\boldsymbol{\omega}},
 \end{align} 
 where $\bm \mu_{\boldsymbol{\omega}}$ is the conditional posterior mean of $ \bm \omega$ provided in  Algorithm \ref{mcmc:alg1}. 
A Monte Carlo estimator of $\bdt^{\M\textsf{-B-RB}}$ using  the output
of Algorithm \ref{mcmc:alg1} can be computed as follows:
 \begin{align}  \label{RB_est}
\hat{\bdt}^{\M\textsf{-B-RB}} = \dfrac{1}{S}\sum_{s=1}^{S}  (\bm A_{\phi^{(s)}}\bm X\bm\beta^{(s)}
+
\bm b_{\phi^{(s)}}
+
\bm B \bm \mu_{\boldsymbol{\omega}^{(s)}}),
\end{align} 
where $\hat{\bdt}^{\M\textsf{-B-RB}}=(\hat{\theta}_i^{\M\textsf{-B-RB}},...,\hat{\theta}_m^{\M\textsf{-B-RB}})^{\top}$. 

Note that in Definition \ref{def3},  $\hat{\theta}_i^{\M\textsf{-RB}} $ denotes the unbenchmarked RB estimator under the corresponding FH, FH-L and FH-MVL model. More specifically, 
the first part of the CPMSE in   (\ref{CPMSE_projection_estimator}) is the uncertainty associated with the unbenchmarked  small area parameter 
measured with the CPMSE of the unbenchmarked  RB estimator, $\hat{\theta}_i^{\M\textsf{-RB}}$, of the posterior mean of $\theta_i^{\M}$ and computed by
\begin{align}
   \label{CPMSE_projection_estimator2}
   \text{CPMSE}(\hat{\theta}_i^{\M\textsf{-RB}} \mid \boldsymbol{y}, \boldsymbol{X}
      ) &=   
            E_{(\phi, \bm\beta, \sigma_u^2)}(V_{\bdt^{\M}}( \theta_i^{\M} \mid \boldsymbol{y}, \boldsymbol{X}, \phi, \bm\beta, \sigma_u^2))    \\ \notag
        &  \hspace{-0.5cm} + E_{(\phi, \bm\beta, \sigma_u^2)}((E_{\bdt^{\M}}((\hat{\theta}_i^{\M\textsf{-RB}}- \theta_i^{\M})\mid \boldsymbol{y}, \boldsymbol{X}, \phi, \bm\beta, \sigma_u^2))^{2}).
        \end{align}

The second term in the CPMSE in Definition \ref{def3}, $(\hat{\theta}_i^{\M\textsf{-B-RB}} - \hat{\theta}_i^{\M\textsf{-RB}} )^{2}$,
measures the additional discrepancy induced by enforcing the benchmarking
constraint. Using the posterior samples obtained from Algorithm
\ref{mcmc:alg1}, the CPMSE used in the simulation and application studies can
be estimated by Monte Carlo approximation as
\begin{align}
   \label{CPMSE_projection_estimator3}
   \widehat{\text{CPMSE}}(\hat{\theta}_i^{\M\textsf{-B-RB}} \mid \boldsymbol{y}, \boldsymbol{X}
      ) &=  (\hat{\theta}_i^{\M\textsf{-B-RB}} - \hat{\theta}_i^{\M\textsf{-RB}} )^{2} + \dfrac{1}{S}\sum_{s=1}^{S} V_{\bdt^{\M}}( (\theta_i^{\M})^{(s)} \mid \boldsymbol{y}, \boldsymbol{X}, \phi, \bm\beta, \sigma_u^2))   \notag  \\  
        &  \hspace{-0.5cm} + \dfrac{1}{S}\sum_{s=1}^{S}  (\hat{\theta}_i^{\M\textsf{-RB}}- E_{\bdt^{\M}}((\theta_i^{\M})^{(s)}\mid \boldsymbol{y}, \boldsymbol{X}, \phi, \bm\beta, \sigma_u^2))^{2}.
        \end{align} 
The estimated coefficient of variation for the benchmarked RB estimator
under model $\M$ is then computed using
(\ref{RB_est}) and (\ref{CPMSE_projection_estimator3}) as
\begin{align}
   \widehat{\text{CV}}_i=\dfrac{\sqrt{\widehat{\text{CPMSE}}(\hat{\theta}_i^{\M\textsf{-B-RB}} \mid \boldsymbol{y}, \boldsymbol{X}
      )}}{\hat{\theta}_i^{\M\textsf{-B-RB}}}.  
        \end{align}       
Similarly, the coefficient of variation can be obtained for the unbenchmarked RB estimator.

\section{Data-based simulation study} \label{sec_Simulations}

To illustrate the performance of the proposed models in Table \ref{tab:models} and the performance of the CPMSE, we implement a data-based simulation study. To this end, we select $m=50$ small areas from our motivating case study in Section \ref{sec_Application}. We use the FH-MVL  model as the data-generating model with the prior settings described in Section \ref{MCMC_algorithms}. The generating model under the FH-MVL model is given by 
\begin{align}
\label{model_FHL_m}
    \by &\sim \text{Normal}((\boldsymbol{Q}_{\phi}^{\textsf{L}_{C^{*}}^{\textsf{MV}}})^{-1}\bdt, \boldsymbol{D}),\\
    \bdt &\sim \text{Normal}(\mathbf{X}\boldsymbol{\beta},  \sigma_u^2\mathbf{I}_m).
\end{align} 

We use the same covariates as in the application, an intercept and the HPP  \citep{CNA2014}. We consider the matrix  $\boldsymbol{Q}_{\phi}^{\textsf{L}_{C^{*}}^{\textsf{MV}}}$ as in (\ref{LapRLS_2}).
To set the regression parameters we consider similar values
to those obtained under the FH model in the real application. Specifically, we set $\beta_{0}=4$ , $\beta_{1}=-4$ and  $\sigma^2_{u}=0.25$. As in the application, we  also assume the benchmarking constraint as $\sum_{i=1}^{m} w_{i} \theta_{i} = 2.82$.  We use $\phi=0.3$ similar to that obtained in the application. Since the FH-MVL model  is the underlying model we calculate a similarity graph using Algorithm \ref{alg:SC2}  (Figure \ref{fig:CPMSE1} (left)) with the 50 observations of AHS, R-MPI and U-MPI  selected randomly. To obtain posterior samples under the various models in Table \ref{tab:models}, we use two MCMC chains for 20,000 iterations, discard the first 20,000 iterations as burn-in, and retain every fourth draw thereafter. We also use the hyperparameters for the priors of $\lambda_j$, $\phi$ and  $\sigma_u^2$ utilized in Section \ref{sec_Application}.

We simulate $T=100$ data sets and evaluate the various models using the difference of the posterior mean and the simulated $ \theta_{i}$. Specifically, we compute the average absolute deviation (AAD) and average squared deviation (ASD)  of  $\theta_{i}$ and $\theta_{i}^{(t)}$ across data sets as follows,
\begin{align}\label{deviance}
\text{AAD}_{\bm \theta}^{(t)} &= \dfrac{1}{m} \sum_{i=1}^{m} \mid (\hat{\theta}_{i}^{\M})^{(t)}
 - \theta_{i}^{(t)} \mid, 
& \text{ASD}_{\bm \theta}^{(t)} &=\dfrac{1}{m} \sum_{i=1}^{m}  ( (\hat{\theta}_{i}^{\M})^{(t)}
 - \theta_{i}^{(t)} )^{2},
\end{align}

To evaluate the performance of the CPMSE for the benchmarked RB estimators   
we compute the average of the CPMSE  of $\hat{\theta}_i^{\mathcal{M}\textsf{-B-RB}}$ 
 across the simulated data sets for each small area, as follows
 \begin{align}
   \label{eq:CPMSE}
 \widehat{\text{CPMSE}}(\hat{\theta}_i^{\mathcal{M}\textsf{-B-RB}}) &= \dfrac{1}{T}\sum_{t=1}^{T}(\text{CPMSE}( \hat{\theta}_i^{\mathcal{M}\textsf{-B-RB}}))^{(t)}, 
\end{align}
and the MSE of $\hat{\theta}_i^{\mathcal{M}\textsf{-B-RB}}$ given by 
 \begin{align}
   \label{eq:MSE}
 \widehat{\text{MSE}}(\hat{\theta}_i^{\mathcal{M}\textsf{-B-RB}}) &= \dfrac{1}{T}\sum_{t=1}^{T}((\hat{\theta}_i^{\mathcal{M}\textsf{-B-RB}})^{(t)}-  \theta_{i}^{(t)})^{2},
\end{align}
In addition, to measure the quality of the approximation of the MSE provided by CPMSE, we consider the average of  (\ref{eq:CPMSE}) and (\ref{eq:MSE}), given by  
 \begin{align}
   \label{eq:CPMSE_MSE}
\widehat{\text{CPMSE}}^{\textsf{Avg}-\mathcal{M}} &= \dfrac{1}{m}\sum_{i=1}^{m}  \widehat{\text{CPMSE}}(\hat{\theta}_i^{\mathcal{M}\textsf{-B-RB}}), \\
\widehat{\text{MSE}}^{\textsf{Avg}-\mathcal{M}} &= \dfrac{1}{m}\sum_{i=1}^{m} \widehat{\text{MSE}}(\hat{\theta}_i^{\mathcal{M}\textsf{-B-RB}}).
\end{align} 
Similarly, we compute (\ref{eq:CPMSE})-(\ref{eq:CPMSE_MSE}) under the unbenchmarked models.

Multi-View models  produce more precise small area estimates. Specifically, in Figure \ref{fig:CPMSE1} (right) we illustrate that the RB estimators under FH-MVL and
FH-MVL-BP, respectively, produce the smallest values of CPMSE.
 Figure \ref{fig:CPMSE2} illustrates the difference between $\widehat{ \text{CPMSE}}$ and $ \widehat{\text{MSE}}$. According to the plots,  $\widehat{ \text{CPMSE}}$ provides a good approximation of the MSE under the various models.  Table \ref{tab:CPMSEs} 
shows the absolute discrepancies between $\widehat{\text{CPMSE}}^{\textsf{Avg}-\mathcal{M}}$ and $\widehat{\text{MSE}}^{\textsf{Avg}-\mathcal{M}}$. The results in Table \ref{tab:CPMSEs} support the findings in Figure \ref{fig:CPMSE2}, showing that CPMSE produces suitable estimates of the MSE.

\begin{table}[htbp]
    \centering
    \caption{Absolute difference between $\widehat{\text{CPMSE}}^{\textsf{Avg}-\mathcal{M}}$ and $\widehat{\text{MSE}}^{\textsf{Avg}-\mathcal{M}}$ with the various models.}
    \renewcommand{\arraystretch}{1.15}
    \setlength{\tabcolsep}{17pt}
    \begin{tabular}{lc}
        \toprule
  SAE Model & $ \mid \widehat{\text{CPMSE}}^{\textsf{Avg}-\mathcal{M}} -\widehat{\text{MSE}}^{\textsf{Avg}-\mathcal{M}} \mid$ \\
        \midrule
FH  & 0.00290 \\ 
         FH-L & 0.00229 \\ 
         FH-MVL  &  0.00009 \\ 
      FH-BP & 0.00195 \\ 
      FH-L-BP   & 0.00229 \\ 
      FH-MVL-BP  & 0.00008 \\ 
        \bottomrule
    \end{tabular}
\label{tab:CPMSEs}
\end{table}

Figure \ref{fig:fig1_Dev} shows that, among the unbenchmarked and benchmarked
models, the proposed FH-MVL and FH-MVL-BP models attain the
smallest values of $\text{AAD}_{\bm \theta}^{(t)}$ and $\text{ASD}_{\bm \theta}^{(t)}$, respectively. The figure also identifies the model selected by DIC within each group. Importantly, DIC selects FH-MVL among the unbenchmarked
models and FH-MVL-BP among the benchmarked models, which are also the models
with the smallest deviance values.

\begin{figure}[h!]
\centering
 \hspace{-1.2cm}
\begin{minipage}[t]{0.55\textwidth}
  \hspace{2.0cm} Multi-View, $C^{*}=2$ \vspace{-1.0cm}
\\
  \centering
  \includegraphics[width=\textwidth]{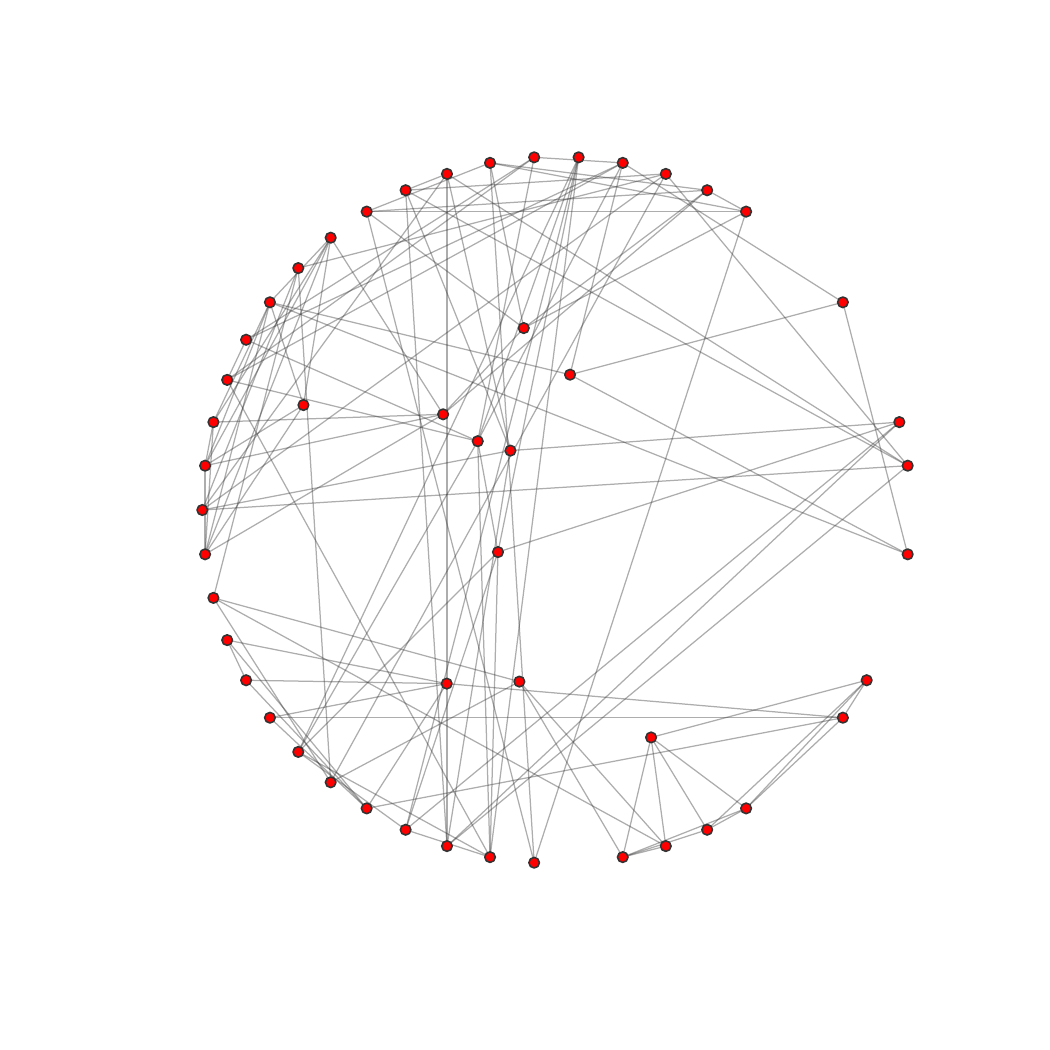}
\end{minipage}
  \hspace{-2cm}
\begin{minipage}[t]{0.60\textwidth}
  \hspace{2.0cm} Multi-View, $C^{*}=2$ \vspace{-0.4cm}
\\
  \centering
 \includegraphics[width=0.8\textwidth]{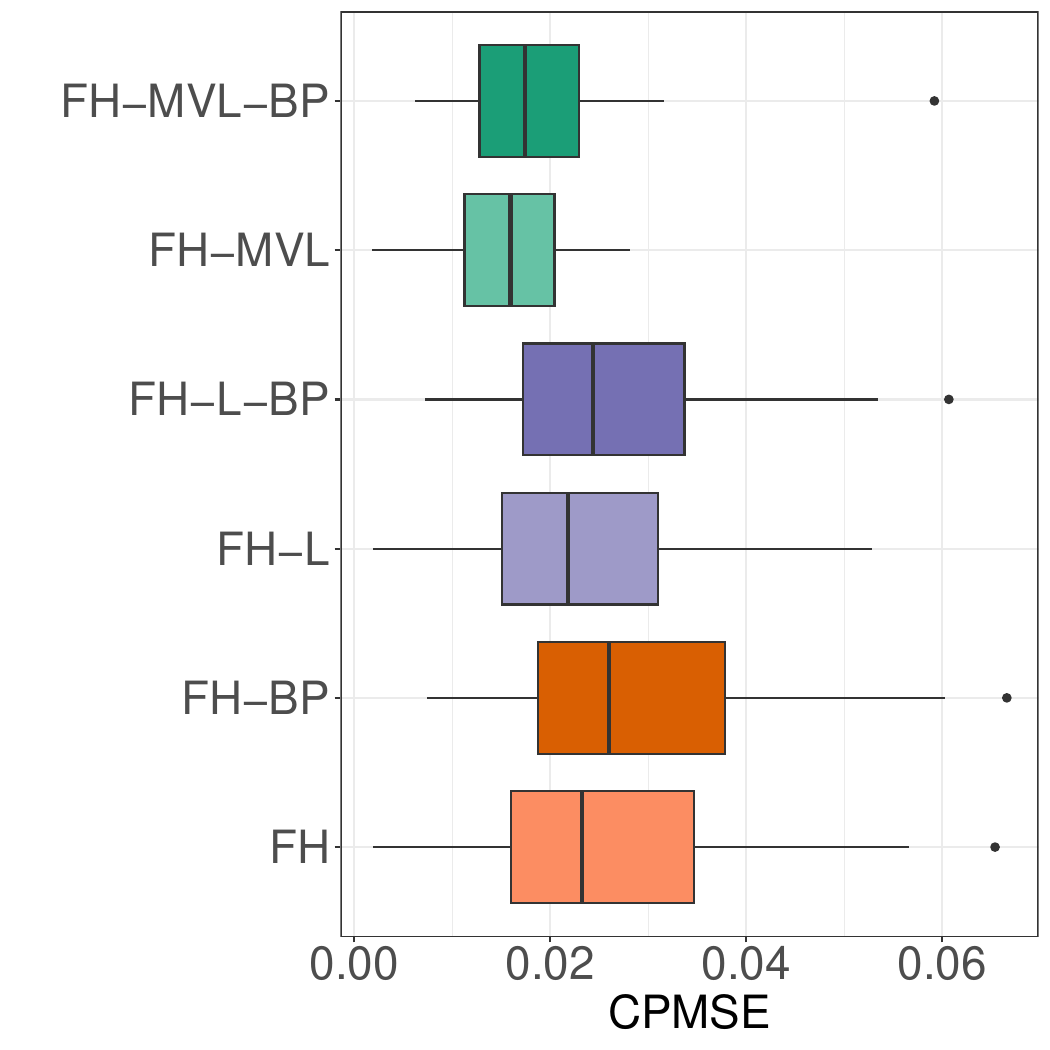}
\end{minipage}
\vspace{-0.7cm}
\caption{\small
Multi-View graph for the simulation study obtained from Algorithm \ref{alg:SC2} with 50 observations selected randomly  from AHS, R-MPI and U-MPI (left) 
and   small area estimates of CPMSE obtained 
with FH, FH-L, FH-MVL, FH-BP, FH-L-BP and FH-MVL-BP (right)\label{fig:CPMSE1}.}
\vspace{0.5cm}
\begin{center}
\begin{tabular}{ccc}
\hspace{-1cm}
\includegraphics[width=0.48\textwidth]{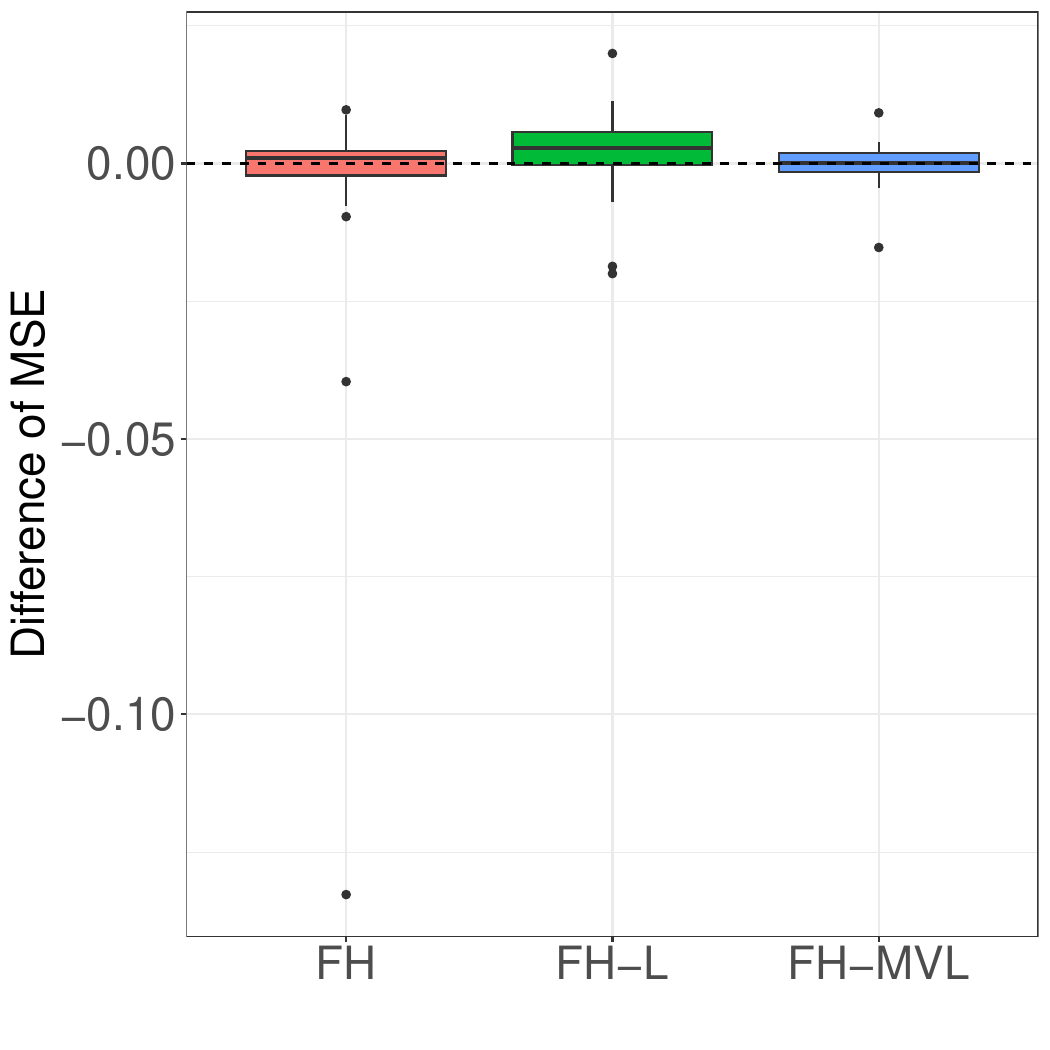}  & \includegraphics[width=0.48\textwidth]{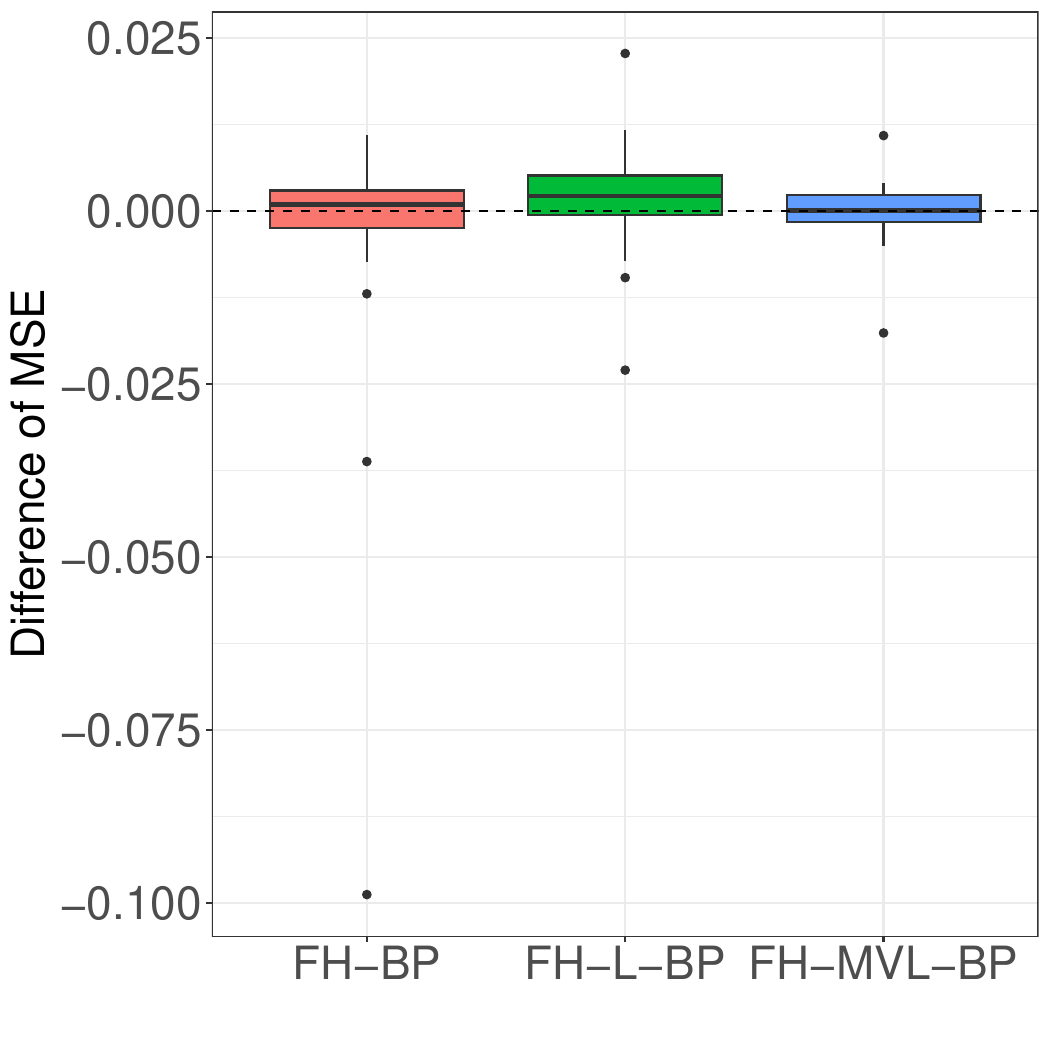}
\end{tabular}
\end{center}
\caption{Difference between $\widehat{\text{CPMSE}}(\hat{\theta}_{i}^{\M\textsf{-RB}})$ and $\widehat{\text{MSE}}(\hat{\theta}_{i}^{\M\textsf{-RB}})$ under
FH, FH-L and FH-MVL (left)  and difference between  $\widehat{\text{CPMSE}}(\hat{\theta}_{i}^{\M\textsf{-B-RB}})$ and $\widehat{\text{MSE}}(\hat{\theta}_{i}^{\M\textsf{-B-RB}})$ under
FH-BP, FH-L-BP and FH-MVL-BP (right). \label{fig:CPMSE2}}
\end{figure}

\clearpage

\begin{figure}[ht]
\begin{center}
\begin{tabular}{ccc}
\hspace{-0.5cm}\includegraphics[width=0.5\textwidth]{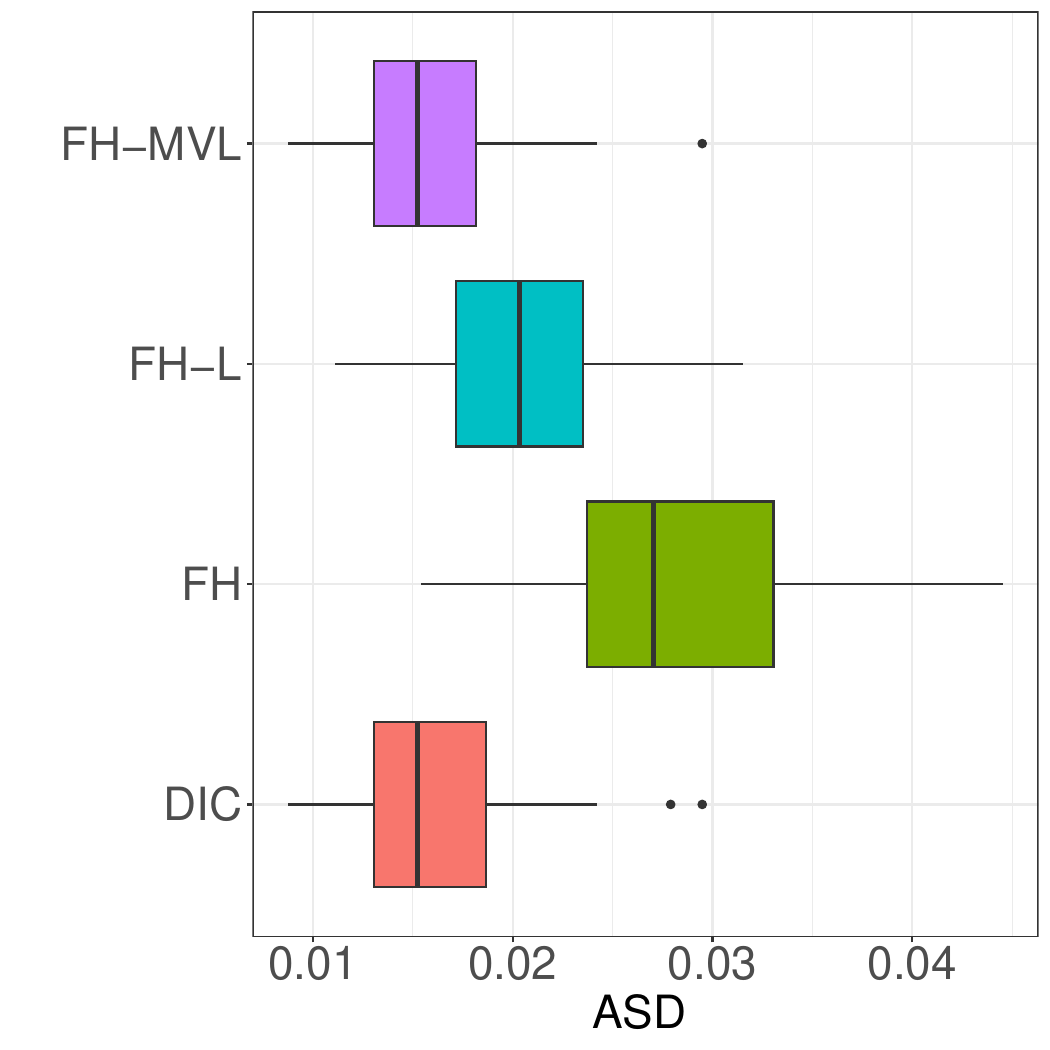}  & \hspace{-0.5cm} \includegraphics[width=0.5\textwidth]{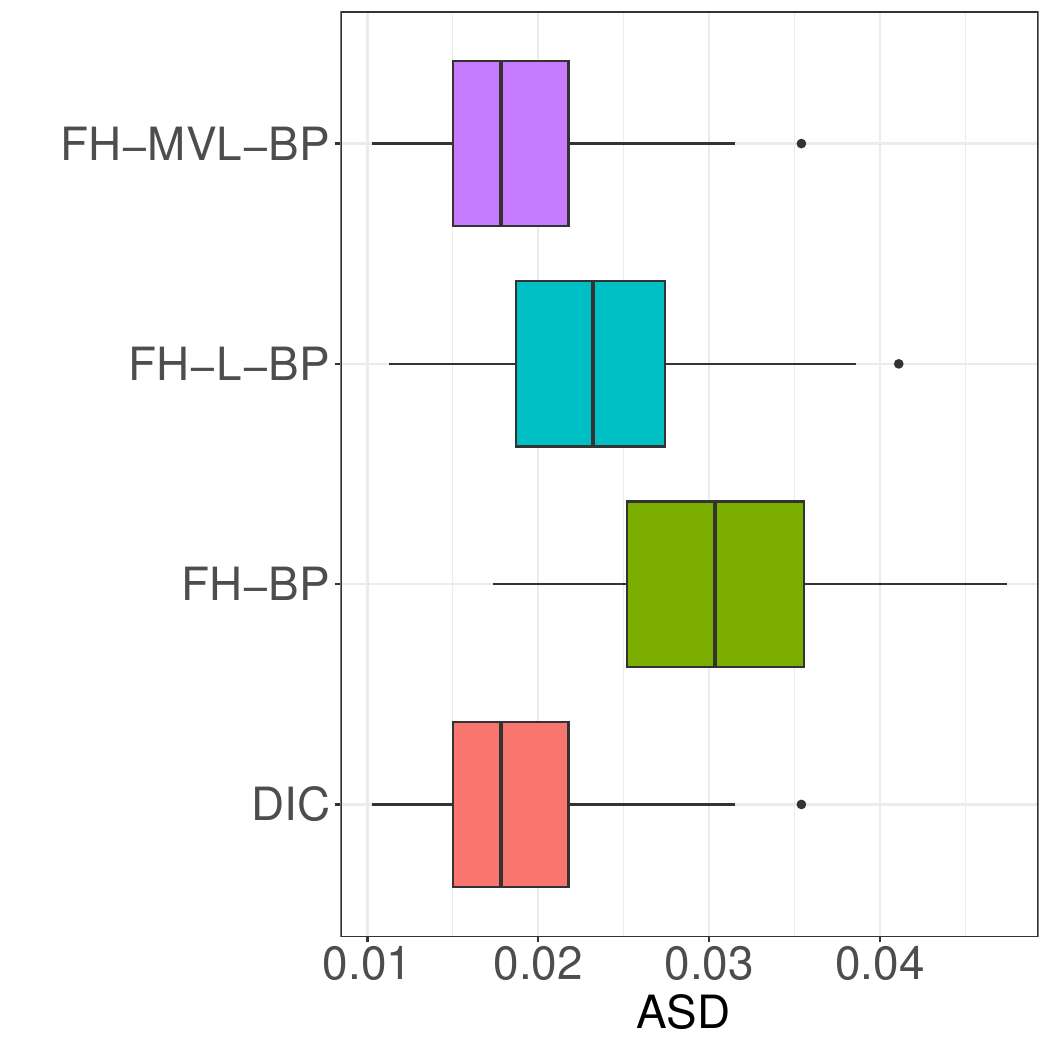} \\
\hspace{-0.5cm}\includegraphics[width=0.5\textwidth]{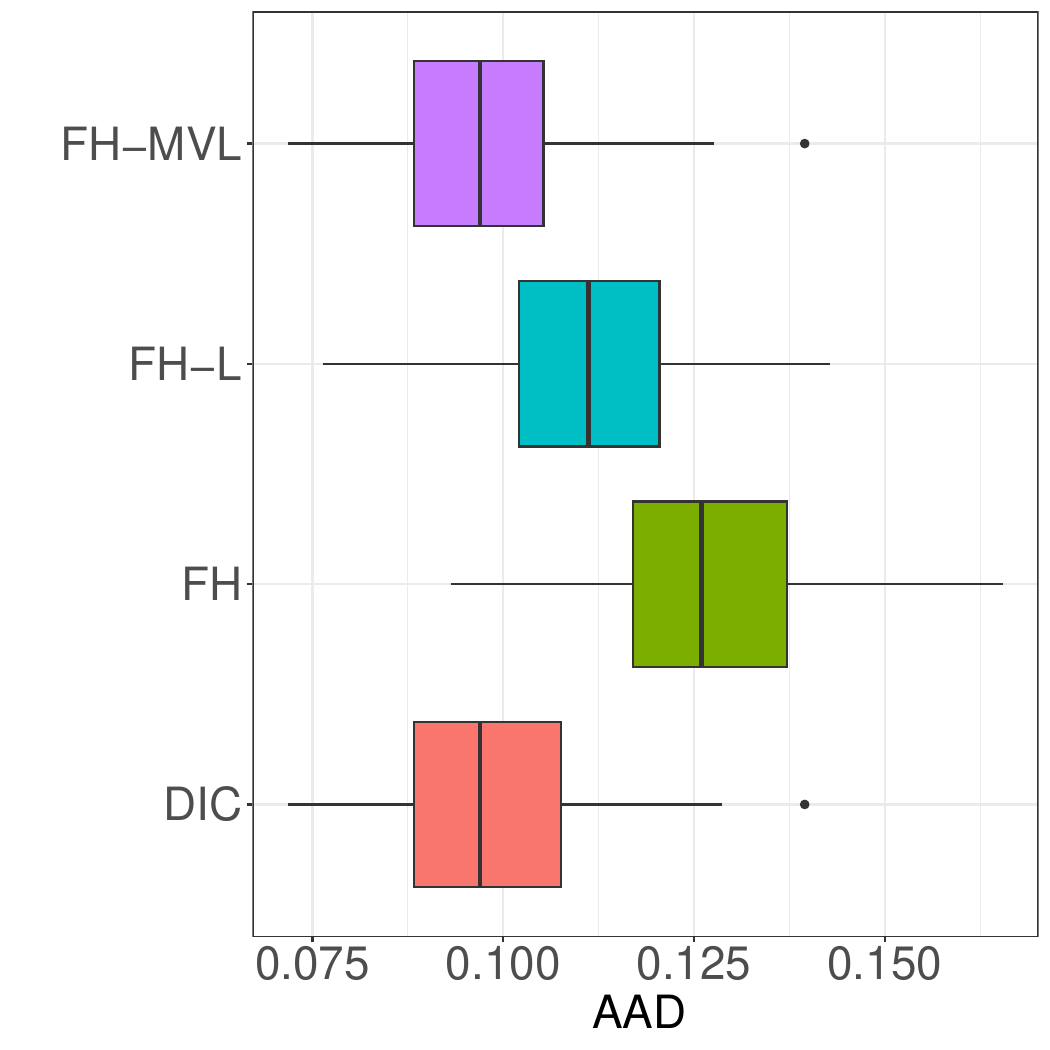}  & \hspace{-0.5cm} \includegraphics[width=0.5\textwidth]{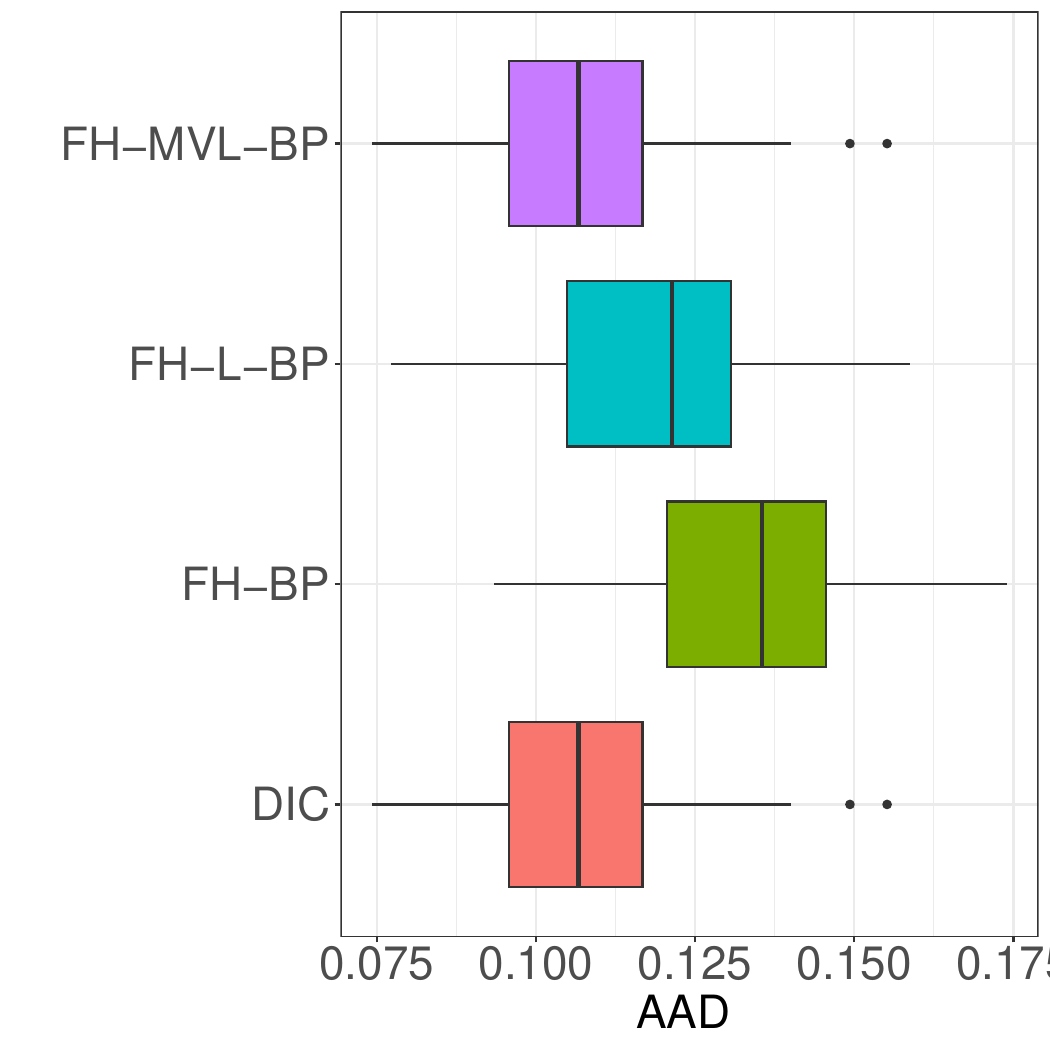} \\
\end{tabular}
\end{center}
\caption{Plots of $\text{AAD}_{\bm \theta}^{(t)} $ and $\text{ASD}_{\bm \theta}^{(t)} $ under
FH, FH-L and FH-MVL (left)  and under FH-BP, FH-L-BP and FH-MVL-BP (right). The plots also present the deviation measures for the model selected by DIC.   \label{fig:fig1_Dev}}
\end{figure}

\clearpage

\section{MCMC diagnostics for the model in the Application}
\label{MCMC_diagnostic}

In this section, we illustrate the MCMC diagnostics for the FH-MVL-BP model implemented in Section \ref{sec_Application}. We examine trace plots, density plots, and ergodic mean plots based on 40,000 MCMC iterations obtained after a burn-in period of 40,000 iterations. Every fourth iteration is retained, resulting in 10,000 posterior draws for inference. The plots in Figure \ref{fig:diagnostic_FH2} display the trace plots, density plots, and ergodic mean diagnostics. Overall, the results indicate satisfactory mixing and convergence, with no apparent issues in the MCMC sampling for  the Multi-View model with benchmarking, FH-MVL-BP. Similar results were observed for the other models.

\begin{figure}[htpb]
\small
\begin{center}
\begin{tabular}{cccc}
\hspace{-2.0cm}  $\beta_0$  & $\beta_1$  & $\sigma_u$  & $\phi$ \\
\hspace{-2.0cm} \includegraphics[width=0.3\textwidth]{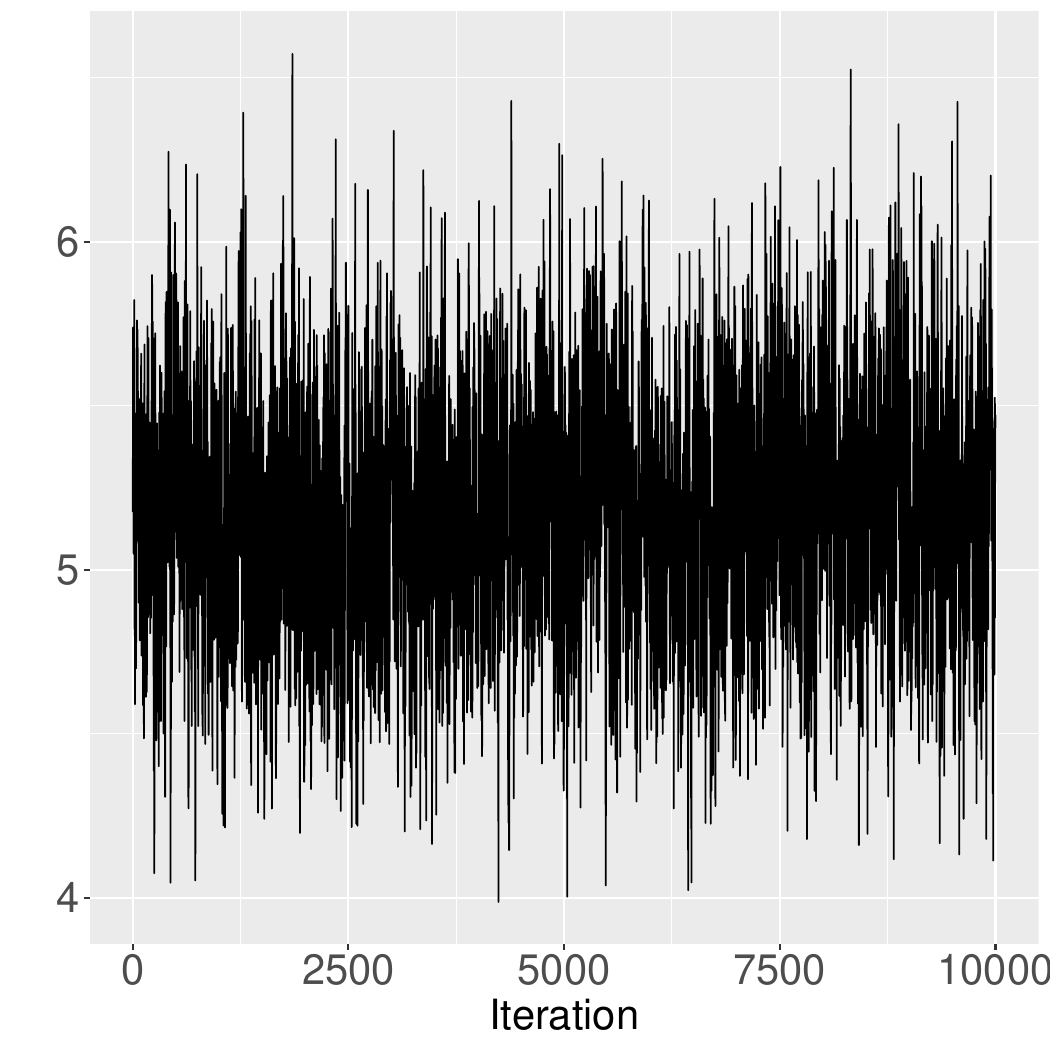} 
& \includegraphics[width=0.3\textwidth]{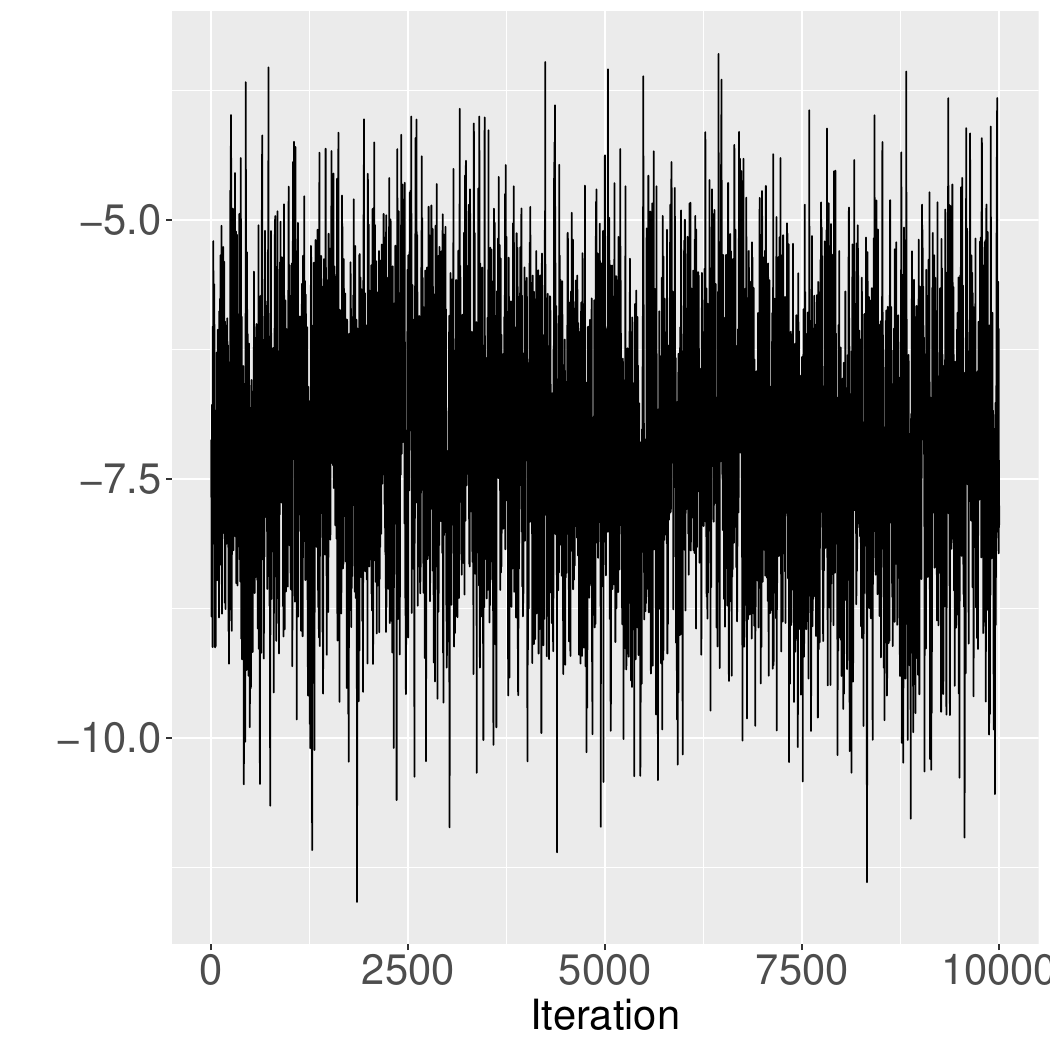} & \includegraphics[width=0.3\textwidth]{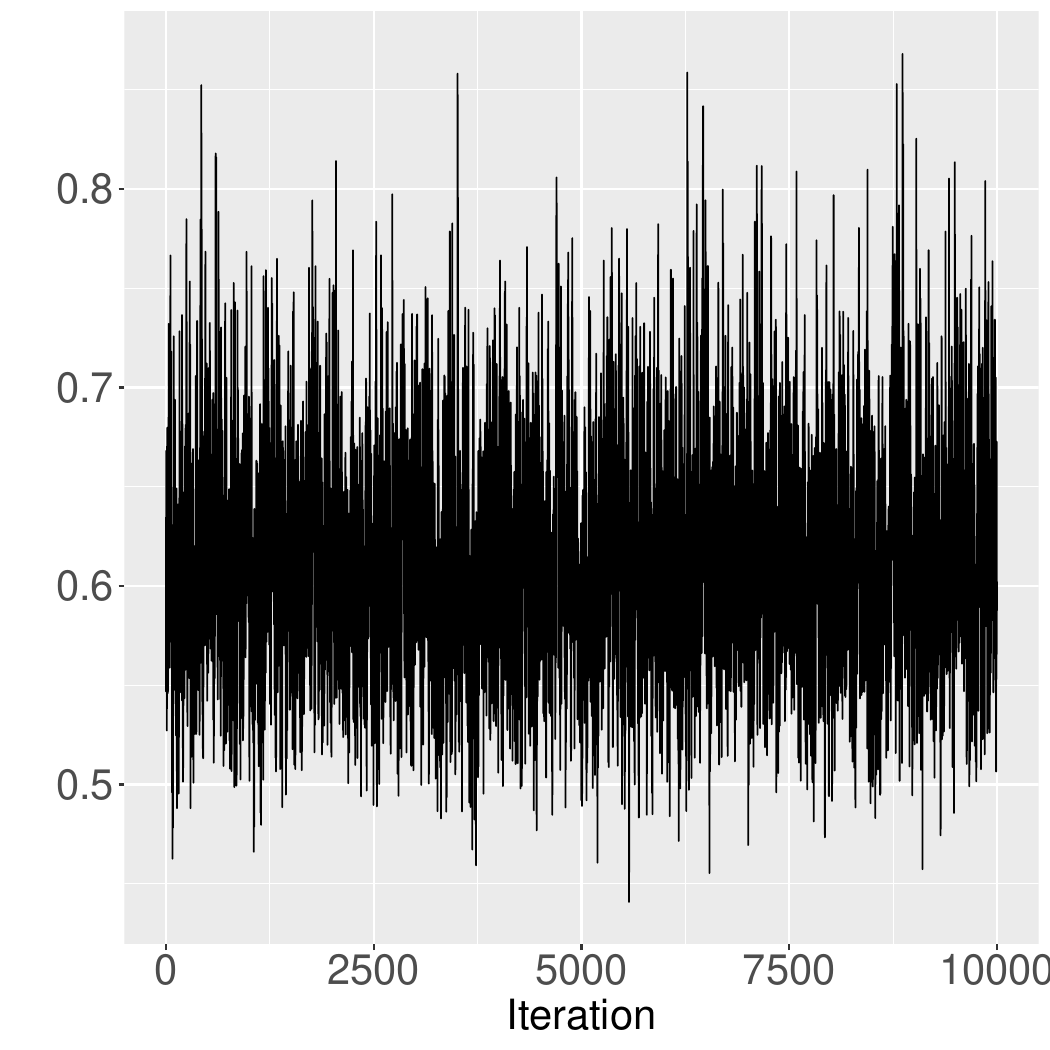} &  \includegraphics[width=0.3\textwidth]{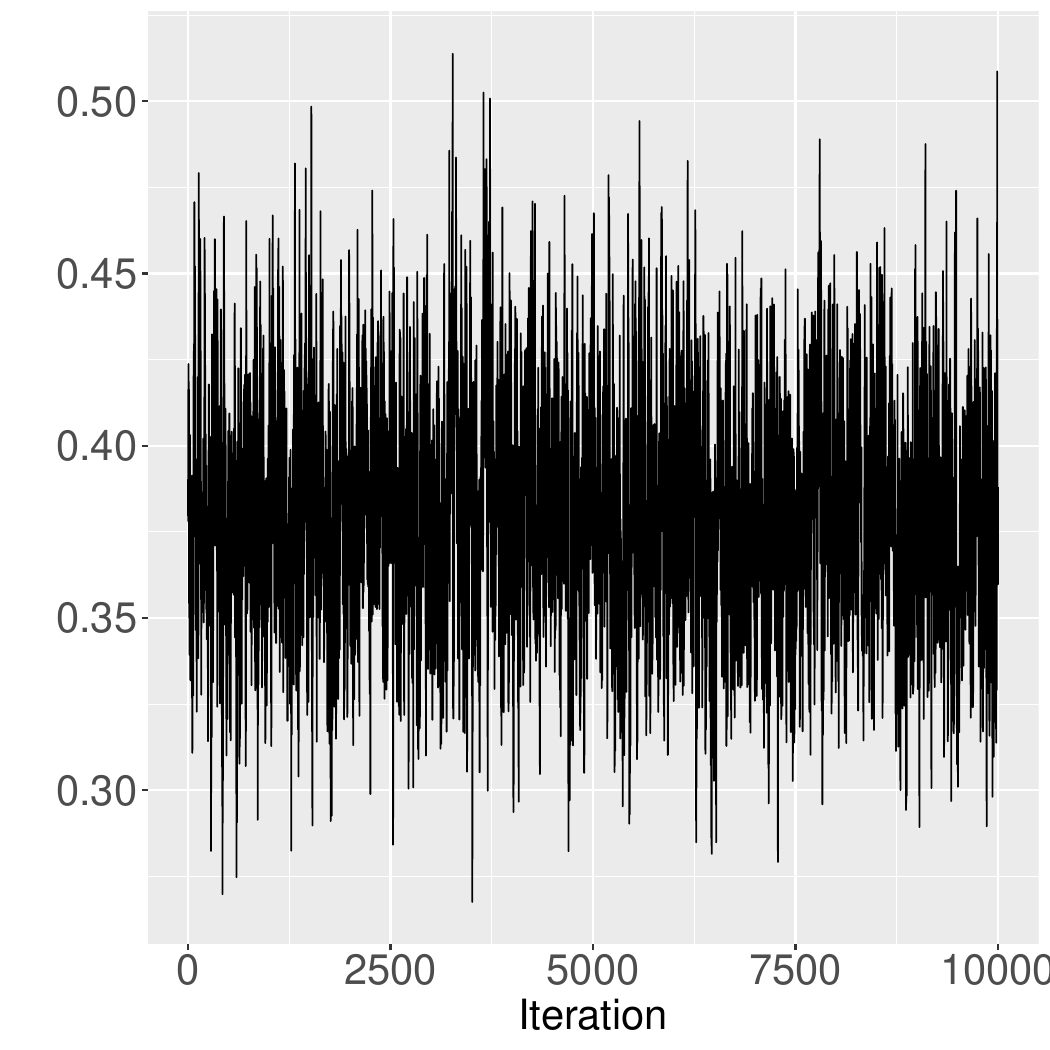}  \\
\hspace{-2.0cm} \includegraphics[width=0.3\textwidth]{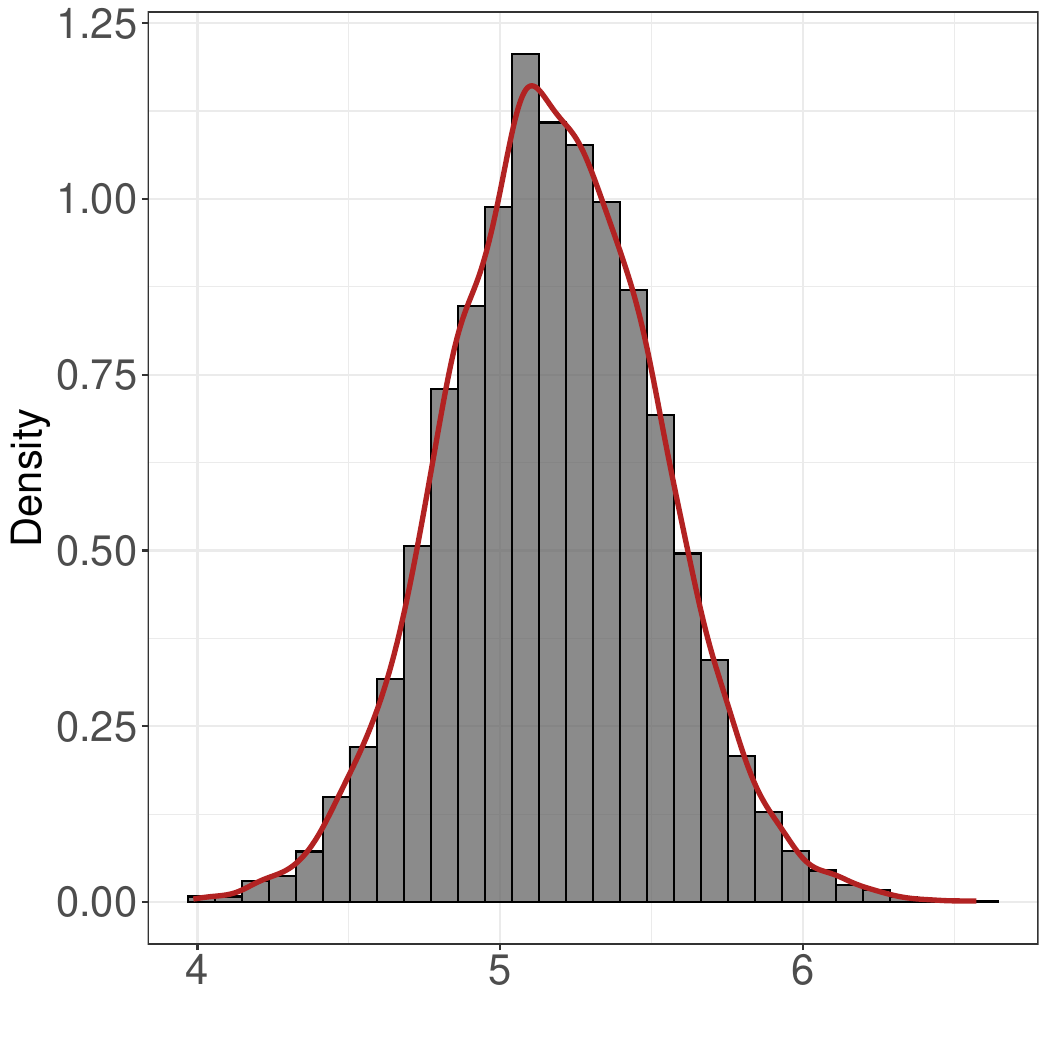} 
& \includegraphics[width=0.3\textwidth]{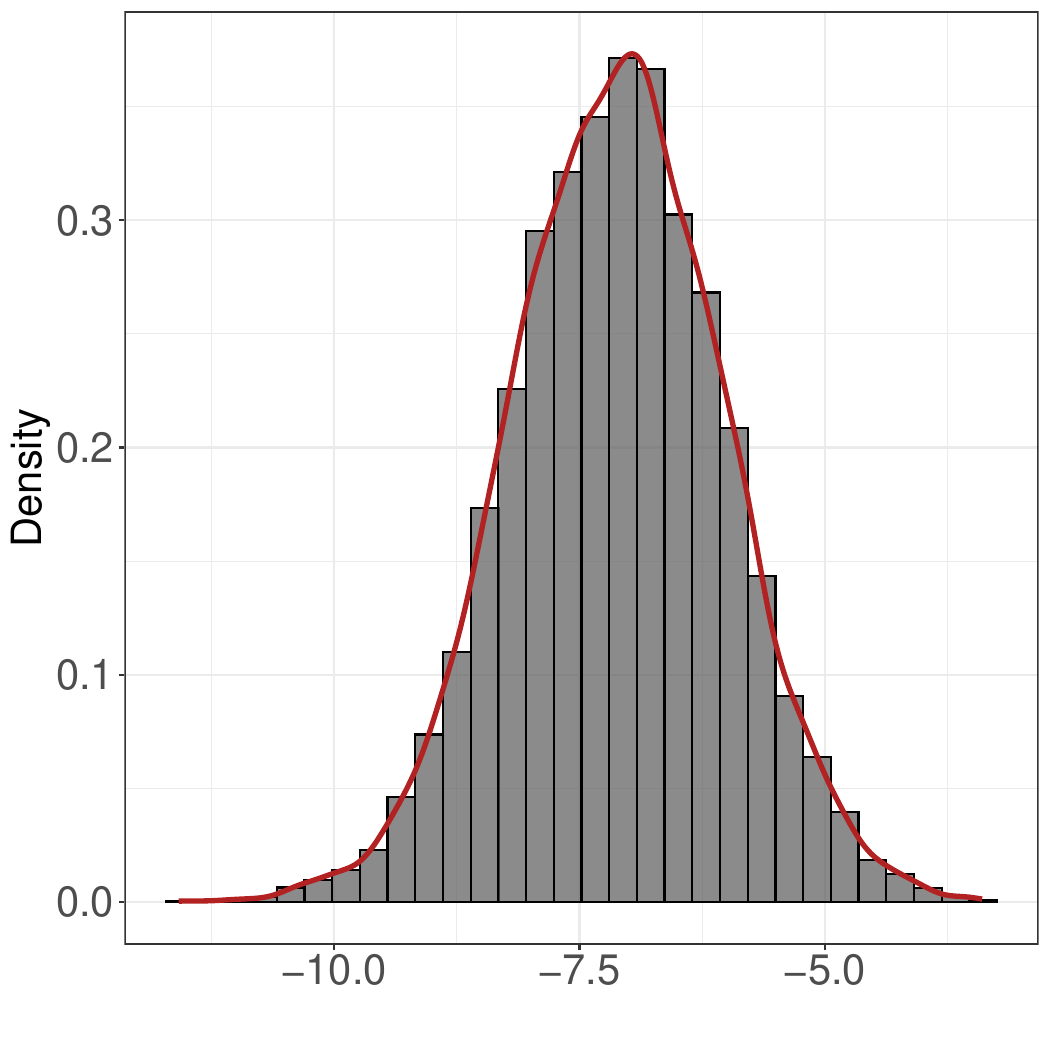} & \includegraphics[width=0.3\textwidth]{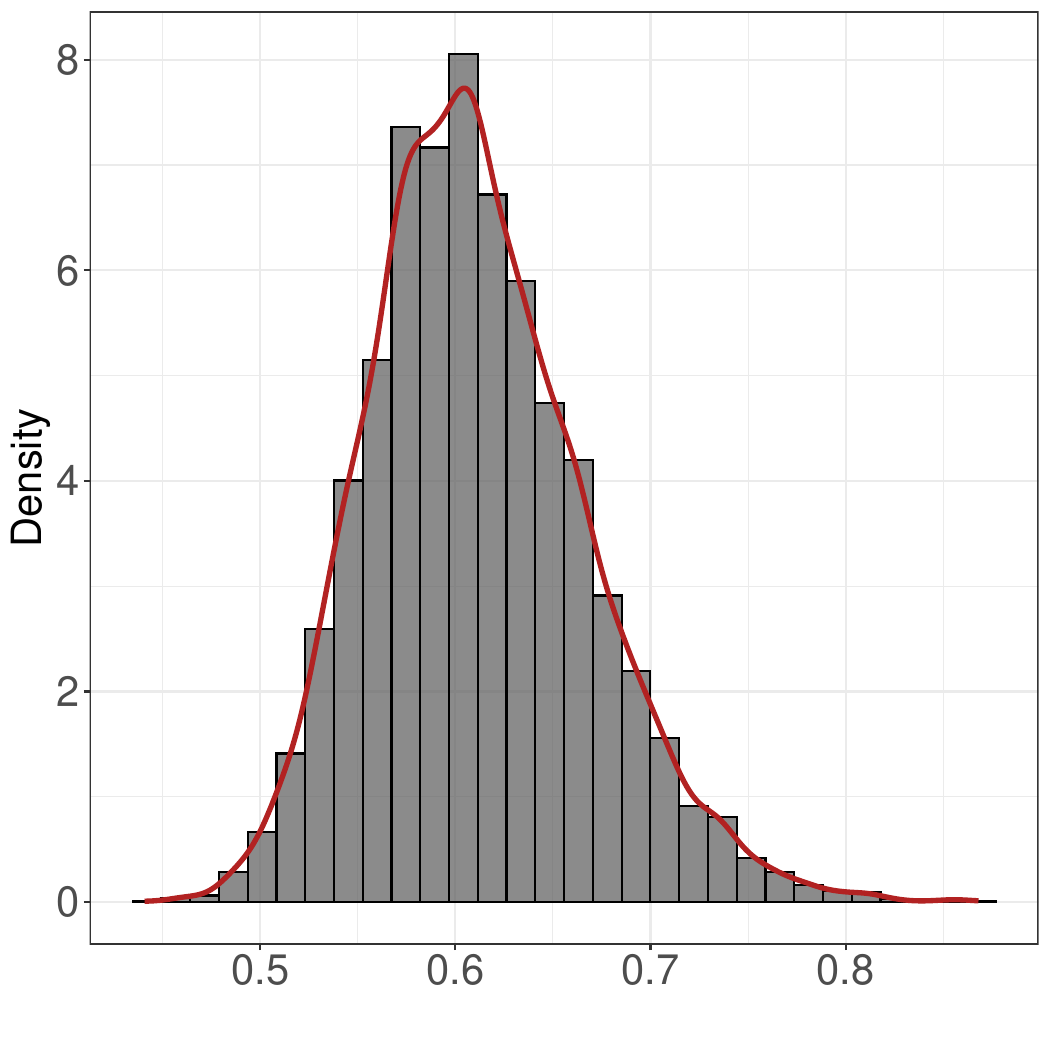} & \includegraphics[width=0.3\textwidth]{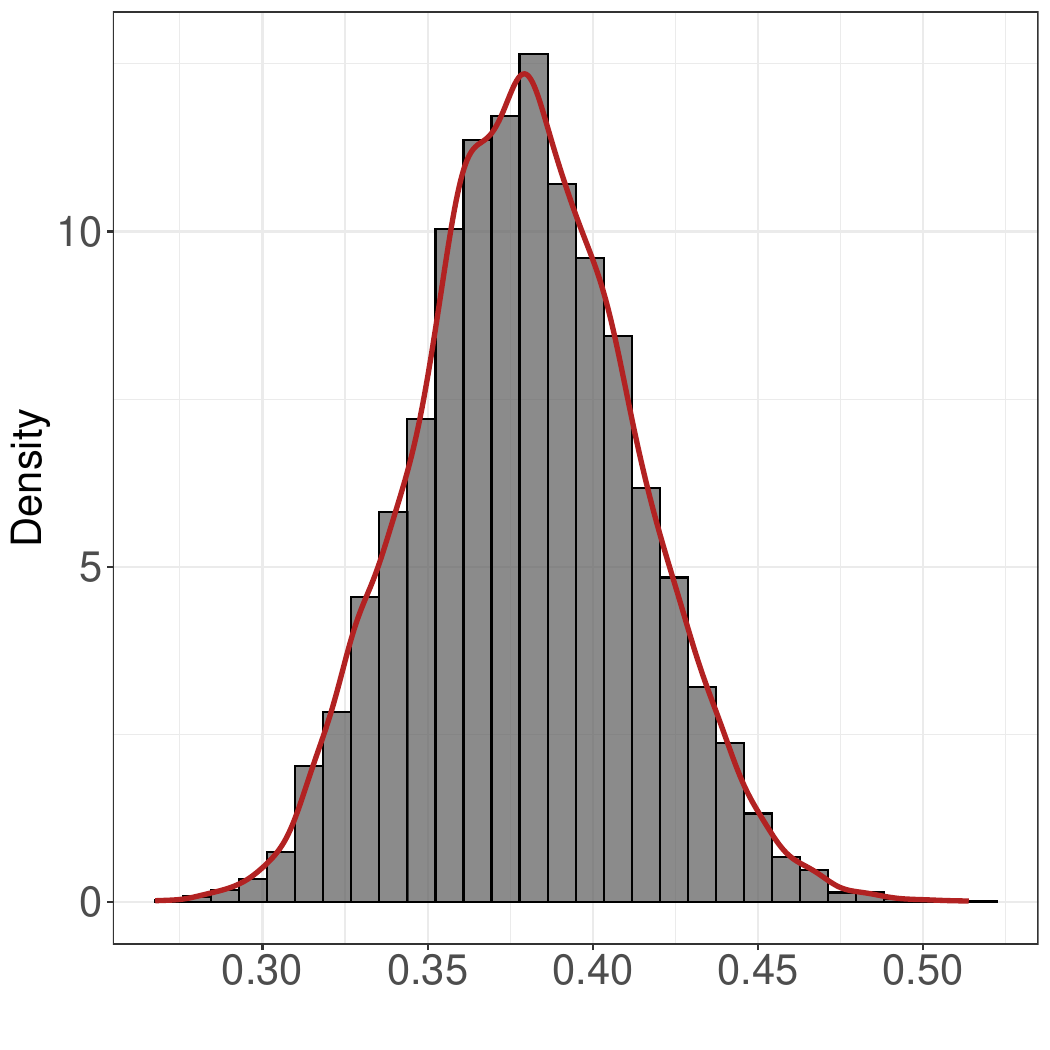}  \\
\hspace{-2.0cm} \includegraphics[width=0.3\textwidth]{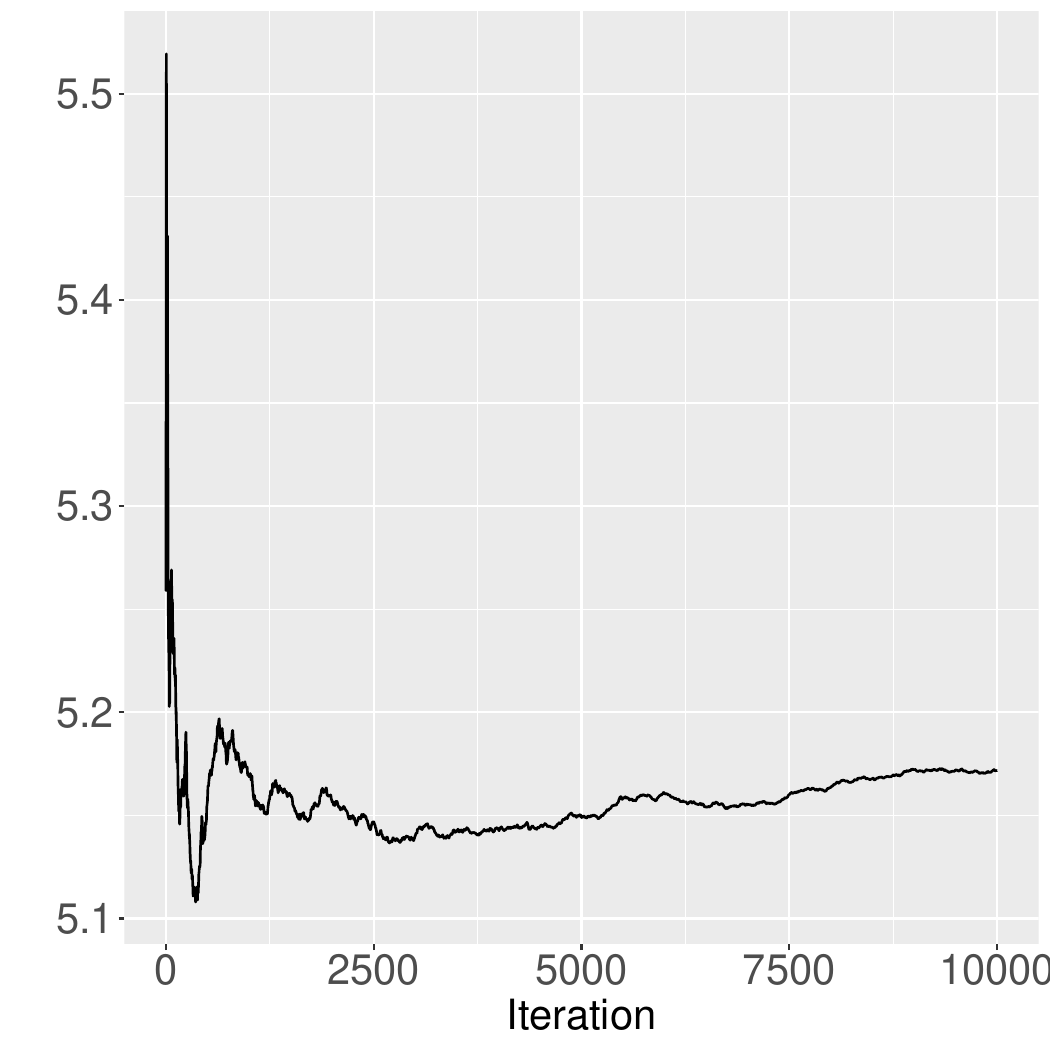} 
& \includegraphics[width=0.3\textwidth]{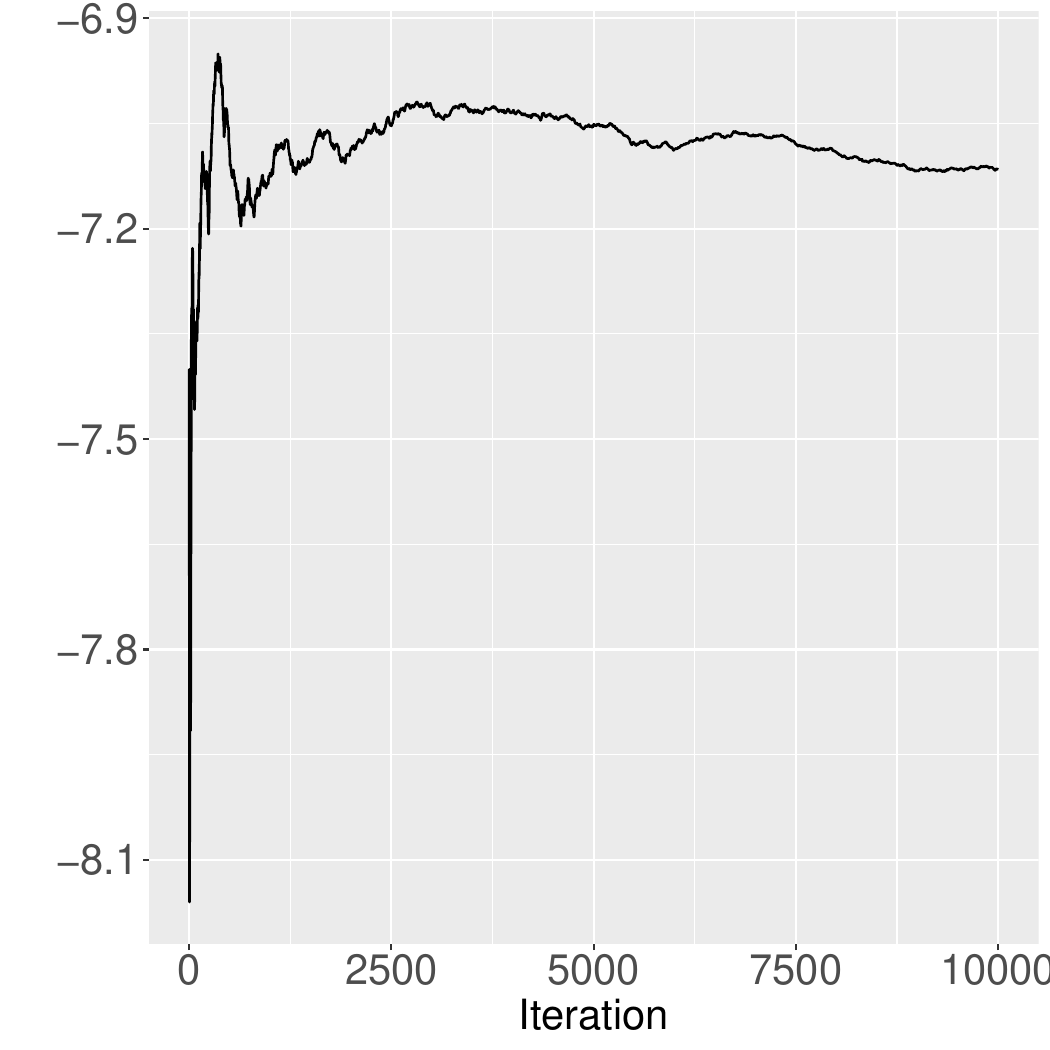} & \includegraphics[width=0.3\textwidth]{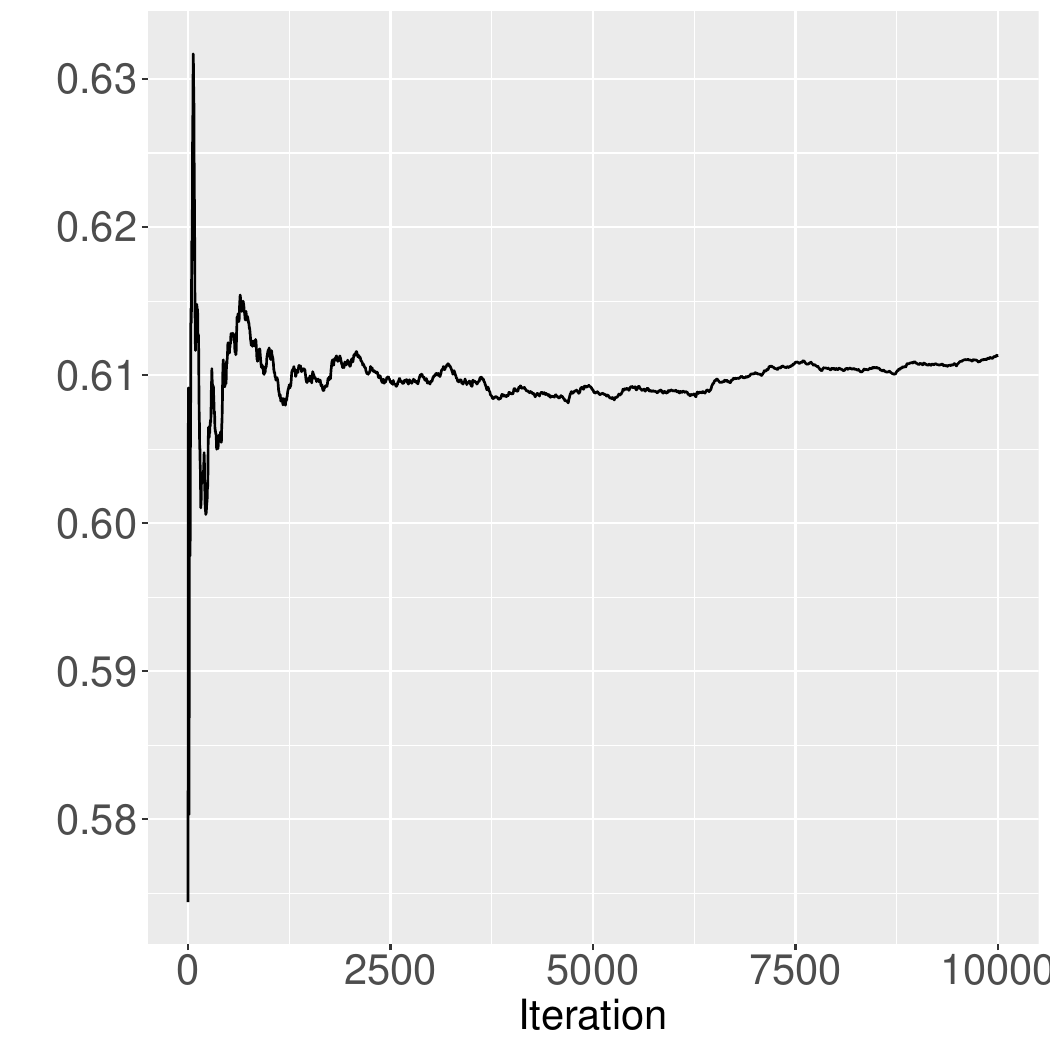} & \includegraphics[width=0.3\textwidth]{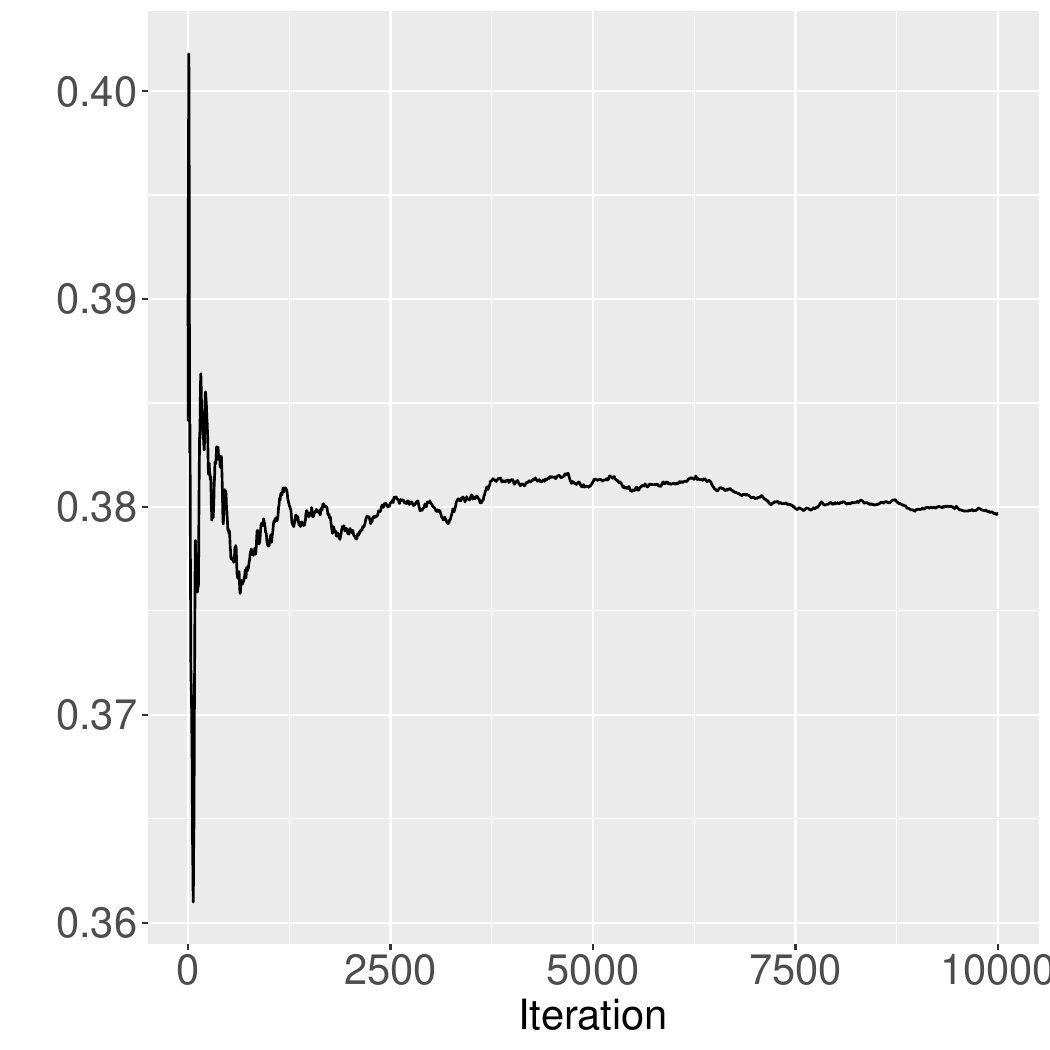}  \\
\end{tabular}
\end{center}
\caption{MCMC diagnostic plots: trace plots (top), density plots (middle), and ergodic mean plots (bottom)
for  $\beta_0$, $\beta_1$,  $\sigma_u$  and $\phi$  under FH-MVL-BP model.} 
\label{fig:diagnostic_FH2}
\end{figure}


\clearpage

\bibliographystyle{ba}
\bibliography{reference,new_bibliof2,refs}

\begin{thebibliography}{39}
\newcommand{\enquote}[1]{``#1''}
\expandafter\ifx\csname natexlab\endcsname\relax\def\natexlab#1{#1}\fi
\expandafter\ifx\csname url\endcsname\relax
  \def\url#1{{\tt #1}}\fi
\expandafter\ifx\csname urlprefix\endcsname\relax\def\urlprefix{URL }\fi
\ifx\endbibitem\undefined \let\endbibitem\relax\fi

\bibitem[{Alkire et~al.(2015)Alkire, Roche, Ballon, Foster, Santos, and
  Seth}]{alkire2015multidimensional}
Alkire, S., Roche, J.~M., Ballon, P., Foster, J., Santos, M.~E., and Seth, S.
  (2015).
\newblock {\em Multidimensional poverty measurement and analysis\/}.
\newblock Oxford University Press, USA.
\endbibitem

\bibitem[{Belkin et~al.(2006)Belkin, Niyogi, and
  Sindhwani}]{belkin2006manifold}
Belkin, M., Niyogi, P., and Sindhwani, V. (2006).
\newblock \enquote{Manifold regularization: A geometric framework for learning
  from labeled and unlabeled examples.}
\newblock {\em Journal of Machine Learning Research\/}, 7(11): 2399--2434.
\endbibitem

\bibitem[{Bell et~al.(2013)Bell, Datta, and Ghosh}]{bell_2013}
Bell, W.~R., Datta, G.~S., and Ghosh, M. (2013).
\newblock \enquote{Benchmarked Small Area Estimators.}
\newblock {\em Biometrika\/}, 100(1): 189--202.
\endbibitem

\bibitem[{Bolin and Wallin(2021)}]{bolin2021efficient}
Bolin, D. and Wallin, J. (2021).
\newblock \enquote{Efficient methods for Gaussian Markov random fields under
  sparse linear constraints.}
\newblock {\em Advances in Neural Information Processing Systems\/}, 34:
  9882--9894.
\endbibitem

\bibitem[{CNA(2014)}]{CNA2014}
CNA (2014).
\newblock \enquote{Censo Nacional Agropecuario.}
\newblock Technical report, Departamento Administrativo Nacional de
  Estad{\'{i}}stica.
\endbibitem

\bibitem[{DANE(2005)}]{DANE2005}
DANE (2005).
\newblock \enquote{Censo Nacional de Poblaci{\'{o}}n y Vivienda.}
\newblock Technical report, Departamento Administrativo Nacional de
  Estad{\'{i}}stica.
\endbibitem

\bibitem[{DANE(2008)}]{DANE2008CV}
--- (2008).
\newblock \enquote{Estimación e interpretación del coeficiente de variación
  de la encuesta cocensal.}
\newblock Technical report, Departamento Administrativo Nacional de
  Estadística (DANE).
\endbibitem

\bibitem[{DANE(2018)}]{DANE2018}
--- (2018).
\newblock \enquote{Censo Nacional de Poblaci{\'o}n y Vivienda - CNPV - 2018.}
\newblock Technical report, Departamento Administrativo Nacional de
  Estad{\'{i}}stica.
\newline\urlprefix\url{https://microdatos.dane.gov.co/index.php/catalog/643/get_microdata}
\endbibitem

\bibitem[{DANE(2025)}]{DANE_ECV_2025}
--- (2025).
\newblock \enquote{Encuesta Nacional de Calidad de Vida 2025.}
\newblock Technical report, Departamento Administrativo Nacional de
  Estad{\'{i}}stica.
\endbibitem

\bibitem[{Esteve et~al.(2024)Esteve, Pohl, Becca, Fang, Galeano,
  Garc{\'\i}a-Rom{\'a}n, Reher, Trias-Prats, and Turu}]{esteve2024global}
Esteve, A., Pohl, M., Becca, F., Fang, H., Galeano, J., Garc{\'\i}a-Rom{\'a}n,
  J., Reher, D., Trias-Prats, R., and Turu, A. (2024).
\newblock \enquote{A global perspective on household size and composition,
  1970--2020.}
\newblock {\em Genus\/}, 80(1): 2.
\endbibitem

\bibitem[{Evgeniou et~al.(2005)Evgeniou, Micchelli, Pontil, and
  Shawe-Taylor}]{evgeniou2005learning}
Evgeniou, T., Micchelli, C.~A., Pontil, M., and Shawe-Taylor, J. (2005).
\newblock \enquote{Learning multiple tasks with kernel methods.}
\newblock {\em Journal of Machine Learning Research\/}, 6(4): 615--637.
\endbibitem

\bibitem[{FAO-OPHI(2022)}]{FAO_OPHI_2022_RMPI}
FAO-OPHI (2022).
\newblock \enquote{Measuring rural poverty with a multidimensional approach:
  {T}he {R}ural {M}ultidimensional {P}overty {I}ndex.}
\newblock Technical Report Statistical Development Series No. 19, Food and
  Agriculture Organization of the United Nations (FAO), Rome.
\endbibitem

\bibitem[{Fay and Herriot(1979)}]{fay_1979}
Fay, R.~E. and Herriot, R.~A. (1979).
\newblock \enquote{Estimates of income from small places: An application of
  {James-Stein} procedures to census data.}
\newblock {\em Journal of the American Statistical Association\/}, 74:
  269--277.
\endbibitem

\bibitem[{F{\'u}quene-Pati{\~n}o(2025)}]{fuquenefully}
F{\'u}quene-Pati{\~n}o, J. (2025).
\newblock \enquote{Fully Bayesian Spectral Clustering and Benchmarking with
  Uncertainty Quantification for Small Area Estimation.}
\newline\urlprefix\url{https://arxiv.org/abs/2512.15643}
\endbibitem

\bibitem[{F{\'u}quene-Pati{\~n}o and Betancourt(2025)}]{fuquene2025ultra}
F{\'u}quene-Pati{\~n}o, J. and Betancourt, B. (2025).
\newblock \enquote{Ultra-sparse small area estimation with super heavy-tailed
  priors for internal migration flows.}
\newblock {\em The Annals of Applied Statistics\/}, 19(1): 121--146.
\endbibitem

\bibitem[{Garza-Rodr{\'\i}guez(2016)}]{garza2016determinants}
Garza-Rodr{\'\i}guez, J. (2016).
\newblock \enquote{The determinants of poverty in the Mexican states of the
  US-Mexico border.}
\newblock {\em Estudios fronterizos\/}, 17(33): 141--167.
\endbibitem

\bibitem[{Ghosh et~al.(2015)Ghosh, Kubokawa, and
  Kawakubo}]{ghosh2015benchmarked}
Ghosh, M., Kubokawa, T., and Kawakubo, Y. (2015).
\newblock \enquote{Benchmarked empirical Bayes methods in multiplicative
  area-level models with risk evaluation.}
\newblock {\em Biometrika\/}, 102(3): 647--659.
\endbibitem

\bibitem[{G{\"o}nen and Alpaydın(2011)}]{gonen2011multiple}
G{\"o}nen, M. and Alpaydın, E. (2011).
\newblock \enquote{Multiple kernel learning algorithms.}
\newblock {\em The Journal of Machine Learning Research\/}, 12: 2211--2268.
\endbibitem

\bibitem[{H{\'a}jek(1971)}]{Hajek1971}
H{\'a}jek, J. (1971).
\newblock \enquote{Comment on ``{A}n {E}ssay on the {L}ogical {F}oundations of
  {S}urvey {S}ampling, {P}art {O}ne''.}
\newblock In Godambe, V.~P. and Sprott, D.~A. (eds.), {\em The Foundations of
  Survey Sampling\/}, 236. New York: Holt, Rinehart, and Winston.
\endbibitem

\bibitem[{Karush(1939)}]{karush1939minima}
Karush, W. (1939).
\newblock \enquote{Minima of functions of several variables with inequalities
  as side constraints.}
\newblock M.sc. dissertation, University of Chicago.
\endbibitem

\bibitem[{Kuhn and Tucker(1951)}]{kuhn1951nonlinear}
Kuhn, H.~W. and Tucker, A.~W. (1951).
\newblock \enquote{Nonlinear programming.}
\newblock In {\em Proceedings of the Second Berkeley Symposium on Mathematical
  Statistics and Probability\/}. University of California Press.
\endbibitem

\bibitem[{Kumar et~al.(2011)Kumar, Rai, and Daume}]{kumar2011co}
Kumar, A., Rai, P., and Daume, H. (2011).
\newblock \enquote{Co-regularized multi-view spectral clustering.}
\newblock {\em Advances in neural information processing systems\/}, 24.
\endbibitem

\bibitem[{Mahmood et~al.(2026)Mahmood, Khan, Akram, Ul~Hassan, and
  Khan}]{mahmood2026multidimensional}
Mahmood, T., Khan, A., Akram, M., Ul~Hassan, M., and Khan, U. (2026).
\newblock \enquote{Multidimensional energy poverty in Pakistan: Determinants
  and spatial patterns at the provincial and district levels.}
\newblock {\em Energy Strategy Reviews\/}, 64: 102127.
\endbibitem

\bibitem[{Mercer et~al.(2015)Mercer, Wakefield, Pantazis, Lutambi, Masanja, and
  Clark}]{mercer2015space}
Mercer, L.~D., Wakefield, J., Pantazis, A., Lutambi, A.~M., Masanja, H., and
  Clark, S. (2015).
\newblock \enquote{Space-time smoothing of complex survey data: Small area
  estimation for child mortality.}
\newblock {\em The Annals of Applied Statistics\/}, 9(4): 1889.
\endbibitem

\bibitem[{Miller and Zhang(2023)}]{miller2023additive}
Miller, N. and Zhang, G. (2023).
\newblock \enquote{Additive multi-task learning models and task diagnostics.}
\newblock {\em Communications in Statistics: Simulation and Computation\/},
  1--18.
\endbibitem

\bibitem[{{Ministerio de Salud y Protección Social}(2025)}]{DHS2025}
{Ministerio de Salud y Protección Social} (2025).
\newblock \enquote{Encuesta Nacional de Demografía y Salud 2025 (ENDS 2025).}
\newblock Technical report, Ministerio de Salud y Protección Social, Bogotá,
  Colombia.
\endbibitem

\bibitem[{N{\'u}{\~n}ez et~al.(2005)N{\'u}{\~n}ez, Ram{\'\i}rez, and
  Cuesta}]{nunez2005determinantes}
N{\'u}{\~n}ez, J., Ram{\'\i}rez, J.~C., and Cuesta, L. (2005).
\newblock \enquote{Determinantes de la pobreza en Colombia, 1996-2004.}
\newblock {\em documento CEDE\/}, 60.
\endbibitem

\bibitem[{Okonek and Wakefield(2024)}]{Okonek2024}
Okonek, T. and Wakefield, J. (2024).
\newblock \enquote{A Computationally Efficient Approach to Fully Bayesian
  Benchmarking.}
\newblock {\em Journal of Official Statistics\/}, 40(2): 283--316.
\newline\urlprefix\url{https://doi.org/10.1177/0282423X241244670}
\endbibitem

\bibitem[{Profamilia(2015)}]{Profamilia}
Profamilia (2015).
\newblock \enquote{Encuesta Nacional de Demograf{\'{i}}a y Salud - ENDS.}
\endbibitem

\bibitem[{Ross and Wakefield(2015)}]{ross2015bayesian}
Ross, M. and Wakefield, J. (2015).
\newblock \enquote{Bayesian hierarchical models for smoothing in two-phase
  studies, with application to small area estimation.}
\newblock {\em Journal of the Royal Statistical Society: Series A (Statistics
  in Society)\/}, 178(4): 1009.
\endbibitem

\bibitem[{Spiegelhalter et~al.(2002)Spiegelhalter, Best, Carlin, and Van
  Der~Linde}]{spiegelhalter2002bayesian}
Spiegelhalter, D.~J., Best, N.~G., Carlin, B.~P., and Van Der~Linde, A. (2002).
\newblock \enquote{Bayesian measures of model complexity and fit.}
\newblock {\em Journal of the Royal Statistical Society: Series B (Statistical
  Methodology)\/}, 64(4): 583--639.
\endbibitem

\bibitem[{Sugasawa et~al.(2025)Sugasawa, Kobayashi, and
  Kawakubo}]{sugasawa2024bayesian}
Sugasawa, S., Kobayashi, G., and Kawakubo, Y. (2025).
\newblock \enquote{Bayesian Benchmarking Small Area Estimation via Entropic
  Tilting.}
\newblock {\em arXiv preprint arXiv:2407.17848\/}.
\endbibitem

\bibitem[{Tang and F{\'u}quene-Pati{\~n}o(2026)}]{tang2026measurement}
Tang, X. and F{\'u}quene-Pati{\~n}o, J. (2026).
\newblock \enquote{Measurement Error Models with Global-Local Priors for Small
  Area Estimation.}
\newblock {\em Bayesian Analysis\/}, 21(3): 1059–1084.
\endbibitem

\bibitem[{Tang et~al.(2018)Tang, Ghosh, Ha, and Sedransk}]{tang2018modeling}
Tang, X., Ghosh, M., Ha, N.~S., and Sedransk, J. (2018).
\newblock \enquote{Modeling random effects using global--local shrinkage priors
  in small area estimation.}
\newblock {\em Journal of the American Statistical Association\/}, 113(524):
  1476--1489.
\endbibitem

\bibitem[{UN(2026)}]{UN2026}
UN (2026).
\newblock \enquote{United Nations, Department of Economic and Social Affairs
  Population Division, 2026.}
\newline\urlprefix\url{https://population.un.org/household/}
\endbibitem

\bibitem[{Von~Luxburg(2007)}]{von2007tutorial}
Von~Luxburg, U. (2007).
\newblock \enquote{A tutorial on spectral clustering.}
\newblock {\em Statistics and Computing\/}, 17: 395--416.
\endbibitem

\bibitem[{Wolter(1985)}]{wolter1985introduction}
Wolter, K.~M. (1985).
\newblock {\em Introduction to Variance Estimation\/}.
\newblock Springer-Verlag, New York.
\endbibitem

\bibitem[{You and Hidiroglou(2023)}]{you2023application}
You, Y. and Hidiroglou, M. (2023).
\newblock \enquote{Application of sampling variance smoothing methods for small
  area proportion estimation.}
\newblock {\em Journal of Official Statistics\/}, 39(4): 571--590.
\endbibitem

\bibitem[{Zhang and Bryant(2020)}]{zhang2020fully}
Zhang, J.~L. and Bryant, J. (2020).
\newblock \enquote{Fully {B}ayesian benchmarking of small area estimation
  models.}
\newblock {\em Journal of Official Statistics\/}, 36(1): 197--223.
\endbibitem

\end{thebibliography}


\end{document}